\newcommand*\circled[1]{\tikz[baseline=(char.base)]{
    \node[shape=circle,draw,inner sep=2pt] (char) {#1};}}
\newcommand*\dotcircled[1]{\tikz[baseline=(char.base)]{
    \node[shape=circle,draw,dotted,inner sep=2pt] (char) {#1};}}
\newcommand*\dashcircled[1]{\tikz[baseline=(char.base)]{
    \node[shape=circle,draw,dashed,inner sep=2pt] (char) {#1};}}
\renewcommand{\algorithmiccomment}[1]{\bgroup\hfill\small\textcolor{gray}{//~#1}\egroup}
\theoremstyle{plain}
\newtheorem{theorem}{Theorem}[section]
\newtheorem{proposition}[theorem]{Proposition}
\newtheorem{lemma}[theorem]{Lemma}
\newtheorem{claim}[theorem]{Claim}
\newtheorem{corollary}[theorem]{Corollary}
\theoremstyle{definition}
\newtheorem{definition}[theorem]{Definition}
\newtheorem{remark}[theorem]{Remark}
\newcommand*{\whiten}[1]{\llap{\textcolor{white}{{\the\SOUL@token}}\hspace{#1pt}}}
\DeclareRobustCommand*\myul{%
    \def\SOUL@everyspace{\underline{\space}\kern\z@}%
    \def\SOUL@everytoken{%
     \setbox0=\hbox{\the\SOUL@token}%
     \ifdim\dp0>\z@
        \raisebox{\dp0}{\underline{\phantom{\the\SOUL@token}}}%
        \whiten{1}\whiten{0}%
        \whiten{-1}\whiten{-2}%
        \llap{\the\SOUL@token}%
     \else
        \underline{\the\SOUL@token}%
     \fi}%
\SOUL@}
\newcommand{\bmu}{\ensuremath{\boldsymbol{\mu}}\xspace}
\newcommand{\mms}{\ensuremath{\textrm{\MakeUppercase{mms}}}\xspace}
\newcommand{\prop}{\ensuremath{\textrm{\MakeUppercase{prop}}}\xspace}
\newcommand{\efo}{\ensuremath{\textrm{\MakeUppercase{ef}1}}\xspace}
\newcommand{\eft}{\ensuremath{\textrm{\MakeUppercase{ef}2}}\xspace}
\newcommand{\efx}{\ensuremath{\textrm{\MakeUppercase{efx}}}\xspace}
\newcommand{\ef}{\ensuremath{\textrm{\MakeUppercase{ef}}}\xspace}
\DeclareMathOperator*{\argmin}{arg\,min}
\begin{document}

\author[1,2]{Georgios Amanatidis}
\affil[1]{Department of Informatics, Athens University of Economics and Business, Athens, Greece.}
\affil[2]{Archimedes, Athena Research Center, Athens, Greece.}
\author[1,2]{Alexandros Lolos}
\author[1,2,3]{Evangelos Markakis}
\affil[3]{Input Output Global (IOG), Athens, Greece. }
\author[4]{Victor Turmel}
\affil[4]{Institut de Mathématique, Université Paris-Saclay, Orsay, France. }

\title{Online Fair Division for Personalized $2$-Value Instances}

\predate{}
\postdate{}
\date{}

\maketitle

\begin{abstract}
\noindent We study an online fair division setting, where goods arrive one at a time and there is a fixed set of $n$ agents, each of whom has an additive valuation function over the goods. Once a good appears, the value each agent has for it is revealed and it must be allocated immediately and irrevocably to one of the agents. It is known that without any assumptions about the values being severely restricted or coming from a distribution, very strong impossibility results hold in this setting \citep{HePPZ19,ZhouBW23}. To bypass the latter, we turn our attention to instances where the valuation functions are restricted. In particular, we study \emph{personalized $2$-value instances}, where there are only two possible values each agent may have for each good, possibly different across agents, and we show how to obtain worst case guarantees with respect to well-known fairness notions, such as \emph{maximin share fairness} and \emph{envy-freeness up to one (or two) good(s)}. We suggest a deterministic algorithm that maintains a $1/(2n-1)$-\mms allocation at every time step and show that this is the best possible any deterministic algorithm can achieve if one cares about \emph{every single} time step; nevertheless, eventually the allocation constructed by our algorithm becomes a $1/4$-\mms allocation. To achieve this, the algorithm implicitly maintains a fragile system of priority levels for all agents. Further, we show that, by allowing some limited access to future information, it is possible to have stronger results with less involved approaches. In particular, by knowing the values of goods for $n-1$ time steps into the future, we design a matching-based algorithm that achieves an \efo allocation every $n$ time steps, while always maintaining an \eft allocation. Finally, we show that our results allow us to get the first nontrivial guarantees for additive instances in which the ratio of the maximum over the minimum value an agent has for a good is bounded.   
\end{abstract}
\thispagestyle{empty}

\newpage

\setcounter{page}{1}

%%%%%%%%%%%%%%%%%%%%%%%%%%%%%%%%%%%%%%%%%
%%%%%%%%%%%%%%%%%%%%%%%%%%%%%%%%%%%%%%%%%
\section{Introduction}
\label{sec:intro}
%%%%%%%%%%%%%%%%%%%%%%%%%%%%%%%%%%%%%%%%%
%%%%%%%%%%%%%%%%%%%%%%%%%%%%%%%%%%%%%%%%%
The problem of fairly allocating a set of resources to a set of agents without monetary exchanges was mathematically formalized only in the 1940's by \cite{Steinhaus49}---along with his students Banach and Knaster. Nevertheless, fair division is such a fundamental concept  that the celebrated \emph{Cut \& Choose} protocol already appears in ancient literature. 
For this reason, since the formal introduction of the problem,  numerous variants, each under different assumptions and constraints, have been studied in mathematics, economics, political science and, more recently, in computer science. 
Specifically, as far as the latter is concerned, the algorithmic nature of most questions one may ask about fair division problems has lead to a flourishing literature on the topic, often on variants that deal with discrete, indivisible resources. 

In the most standard variants of the problem a set of resources and a set of agents, each equipped with a valuation function, are given as an input and one would like to produce a, complete or partial, partition of the resources to the agents, so that some predetermined \textit{fairness criterion} is satisfied. Here we study a version of the problem where the set of $n$ agents is indeed given, but the indivisible goods arrive in an \textit{online} fashion. That is, at each time step a new good appears, its value for the agents is revealed, and the good must be irrevocably assigned to a single agent before the next good arrives. Online fair division problems of similar nature appear in many real-world scenarios, like, e.g., the operation of food banks, donation distribution in disaster relief, limited resource allocation within an organization like a hospital or a university, or memory and CPU management in cloud computing. 

Despite their wide applicability, however, online fair division settings have not been studied nearly as much as their offline counterparts. And admittedly, 
%Besides the fact that moving to more complex settings usually comes after a topic has matured sufficiently, 
there is a good reason for the relative scarcity of such works. Namely, it is relatively easy to show strong impossibility worst-case results, even if one aims for rather modest fairness guarantees. For example, when the agents have additive valuation functions and the items are goods (i.e., they do not have a negative value for any agent) that arrive adversarially---as is the case here---there is no deterministic algorithm that achieves \textit{any} positive approximation factor with respect to \textit{maximin share fairness} (\mms) \citep{ZhouBW23} or to \textit{envy-freeness up to one good} (\efo) \citep{HePPZ19}. Even if one allows some item reallocations in order to fix that, the number of reallocations cannot be bounded by any function of $n$ \citep{HePPZ19}. 

Of course, there are ways to bypass these results, like assuming that the item values are drawn from distributions \citep{BenadeKPP18,zeng2020dynamic}, restricting the valuation functions \citep{AleksandrovAGW15} or the number of distinct agent types \citep{KulkarniMS25}, allowing items to be reallocated \citep{HePPZ19}, relaxing the requirement for fairness guarantees at the intermediate time steps \citep{CooksonES25}, augmenting the input with predictions \citep{BanerjeeGGJ22} or even with full knowledge of the whole instance \citep{ElkindLLNT24,CooksonES25}. 
Here the main approach we take is to focus on designing \textit{deterministic} algorithms for instances with \textit{restricted} valuation functions, although we also explore how information about the future allows for stronger results with simpler algorithms. 
In particular, we mostly consider instances where each agent $i$ has two values for the goods: a low value $\beta_i$ and a high value $\alpha_i$. 
These instances, which we call \textit{personalized $2$-value instances}, generalize the binary case and the setting where there are only two types of goods; until now, these were the only cases where positive worst-case results were known (by \cite{AleksandrovAGW15} and \cite{ElkindLLNT24}, respectively). Consequently, personalized $2$-value instances capture the natural dichotomy between highly desirable and less desirable goods in a more nuanced way which also varies across agents. 
Moreover, there is a natural way of approximating any additive instance via a $2$-value instance by just setting a threshold for each agent $i$ and rounding everything up (to the ``high value'' $\alpha_i$) or down (to the ``low value'' $\beta_i$) accordingly. Although this approximation is not always meaningful, it does give nontrivial guarantees for agents whose values are bounded by lower and upper bounds that do not differ by too much, as we argue in Section \ref{sec:beyond_2value}. 

It should be noted that this type of restriction has drawn significant attention in the  fair division literature recently. 
The study of \textit{$k$-value instances} (often called \textit{$k$-valued} or, in the case where $k=2$, \textit{bi-valued}), where there are at most $k$ possible common values that can represent the value an agent has for a good, was introduced by \citet{ABFHV21} as a way to study the existence of \efx allocations under a restriction that made the question easier, yet not straightforward.
Indeed, it turns out that, in many cases, $k$-value instances seem to strike a good balance between maintaining the flavor and many of the challenges of the general additive version of the problem and allowing nontrivial positive results, even for $k = 2$ or $3$.
Consequently, there is a recent line of work studying fair division questions under such restrictions (e.g., \citep{AkramiC0MSSVVW22,GargMQ22,AzizLRS23,AFS24,FitzsimmonsVZ24}), often allowing for the $k$ values to be different per agent (hence, \emph{personalized}), first studied by \citet{MurhekarG21} for general $k$, under the name \textit{$k$-ary instances}. 

Finally, as we mentioned above, we also explore another relaxation of the problem which is somewhat orthogonal to that of restricting the valuation functions, and aims to alleviate the lack of information due to the online arrival of the goods. Several recent works augment their online algorithms  with additional information about the future in the form of (possibly erroneous) predictions \citep{lykouris2021competitive,BanerjeeGGJ22,BanerjeeGHJM023,BalkanskiGTZ24,BenomarP24}. Here we follow an approach closer to \citet{HePPZ19}, \citet{CooksonES25}, and \citet{ElkindLLNT24}, who assume that the whole instance can be known in advance and what truly remains online is the allocation process itself. As the number of goods could be much larger than the number of agents in many scenarios, we feel that knowing the whole instance is a rather strong assumption; instead, we allow some of our algorithms to see into the future only for a number of time steps that is comparable to the number of agents. In our setting this turns out to be enough for getting simpler algorithms with solid guarantees.  
\medskip

\noindent{\textbf{Contribution and Technical Considerations.}}
Following the aforementioned line of work on restricting the valuation functions, we explore what is possible for (personalized) $2$-value instances in deterministic online fair division. We obtain positive and negative results, which we then extend beyond our main setting. Specifically:
\begin{itemize}[leftmargin=20pt,topsep=3pt,itemsep=3pt]
    \item The impossibility results one has for general additive instances persist here as well, albeit milder. We show that, for any $\varepsilon > 0$, no algorithm can guarantee $(1/2 + \varepsilon)$-\efo at every time step, even for two agents (Theorem \ref{thm:imposibility_0.5EF1}), or $(1/(2n - 1) + \varepsilon)$-\mms at every time step for general $n$ (Theorem \ref{thm:imposibility_MMS}). 

    \item We present an algorithm with tight approximation guarantee with respect to \mms. In particular, our Algorithm \ref{alg:main_alg} guarantees $1/(2n - 1)$-\mms at every time step, which improves to $\Omega(1)$ at every time step, assuming that $m$ is large enough (Theorem \ref{thm:main_algorithm} and Corollary \ref{cor:type1}). 

    \item We demonstrate that even very limited knowledge of the future may help significantly, as our Algorithm \ref{alg:foresight_2} guarantees \efo at every \textit{other} time step for two agents just by looking one step ahead into the future (Theorem \ref{thm:foresight_2}). 

    \item More generally, we show that having a foresight of $n-1$ goods into the future suffices in order to guarantee \eft at every  time step and \efo at every $n$ time steps for $n$ agents (Algorithm \ref{alg:foresight_n} and Theorem \ref{thm:foresight_n-b}). The latter also implies a $1/n$-\mms approximation at every $n$ time steps, whereas achieving $(1/n + \varepsilon)$-\mms at every time step is impossible, even if  the whole instance is fully known in advance.

    \item We provide a simple reduction that allows us to translate our results to general additive instances at the expense of a multiplicative factor that depends on the largest ratio between the values of any two goods. To the best of our knowledge, these are the first positive results in this setting (Theorem \ref{thm:reduction} and Corollaries \ref{cor:interval-main} and \ref{cor:interval-foresight}).
\end{itemize}
While the main ideas in all of our results are very intuitive at a high level, turning them into tight or best-possible statements is far from trivial. 
Our impossibility results (especially Theorem \ref{thm:imposibility_MMS}) cannot rely on boosting values as needed (as it is commonly done in the literature), given the nature of the valuation functions we study. Instead, we need a family of instances fully adjusted to what \textit{any} algorithm could do in order to determine the allocation with enough precision to get what turns out to be a tight bound. 
Algorithm \ref{alg:main_alg}, despite its several cases and careful book-keeping, is based on the simple idea that throughout the allocation process an agent should gain higher priority the more value they lose to others. Although our parametrization of the indices that imply this priority (one out of every $2n-1$ goods in general; one out of $3n-2$ high-valued goods) may seem rather loose, not only is it tight but it requires an involved analysis that includes distinct inductive proofs (that differ drastically) for the initial and later phases of the algorithm, respectively. 
Algorithms \ref{alg:foresight_2} and \ref{alg:foresight_n} are simpler, especially the former, but there are subtleties here as well. Ideally, what we would like to achieve between time steps $kn$ and $(k+1)n$ would be to get an \efo allocation by taking the union of two \efo allocations. However, in general, this fails trivially, even for agents with binary valuation functions. Instead, we take advantage of the structure of our instances and built the second \efo allocation (the one with the current and predicted goods) so that its envy graph ``cancels out'' the problematic edges of the envy graph of the current allocation, via carefully defined auxiliary valuation functions.
\medskip

\noindent{\textbf{Further Related Work.}}
There is a vast literature on fair division, both with divisible and with indivisible resources. For a recent survey on the latter, see \citet{AmanatidisABFLMVW23}. Here we focus on online fair division settings, mostly with indivisible items. 

\citet{AleksandrovAGW15} introduced the setting we study here. Their work focused more on binary instances and on welfare guarantees. Subsequent works studied the mechanism design version of the problem mostly for binary instances \citep{AleksandrovW17} and explored the limitations of achieving fairness notions like \efo with general additive or even with non-additive binary valuations functions \citep{AleksandrovW19}. The results of these early works are summarized in the survey of \citet{AleksandrovW20}. 
A viable direction in order to bypass the existing strong negative results is to assume that there are underlying distributions with a bounded support. In such a setting \citet{BenadeKPP18} show that it is possible to achieve envy that grows sublinearly with time with a very simple algorithm and that this is asymptotically best possible, whereas \citet{zeng2020dynamic} study the  compatibility of efficiency and envy in variants of the setting.
However, the works that are closer to ours, one way or another, are those of \citet{HePPZ19}, \citet{ZhouBW23}, \citet{CooksonES25}, and \citet{ElkindLLNT24}.

\citet{HePPZ19} show that it is impossible to achieve temporal-\efo (or even any nontrivial approximation of it) unless a linear fraction of the goods are reallocated. Then they design algorithms (occasionally augmented with full knowledge of the future) that achieve temporal-\efo with a bounded number of reallocations.
\citet{ZhouBW23} focus on temporal-\mms and show that no approximation is possible for goods beyond the case of two agents. Then they turn their attention to chores (i.e., items that no agent values positively) where they present an algorithm with constant approximation guarantee. It should be noted here that although their $0.5$-temporal-\mms algorithm for $n=2$ seems to contradict our Theorem \ref{thm:imposibility_MMS}, this is made possible by the additional assumption that $v_i(M)$ is known for all $i\in N$, which we do not make here.
The very recent works of \citet{ElkindLLNT24} and \citet{CooksonES25} formalize the notion of \emph{temporal fairness} (although in slightly different ways) and assume that the full information about an instance is known upfront. However, even under this strong assumption, there still are impossibility results and, hence, both papers mostly focus on further restrictions.  \citet{CooksonES25} obtain positive results at every step of the allocation for the case of two agents, and for the overall allocation for instances where the agents agree on the  ordering over the goods or for the variant of the problem where the same set of goods arrives at every time step.  \citet{ElkindLLNT24}, on the other hand, show temporal \efo guarantees for two agents, two types of items,  generalized binary
valuations, and for unimodal preferences.

There is also a line of work where the setting is very similar to ours but the items that arrive online are divisible \citep{GkatzelisPT21,Barman0M22,BanerjeeGGJ22,BanerjeeGHJM023}. The similarities, however, are only superficial, as the fractional assignment of even a few goods allows us to bypass most strong impossibility results.
Finally, online fair division with indivisible items has been very recently studied in the context of bandit learning \citep{YamadaKAI24,ProcacciaS024}. 

Of course, besides the online setting where the items arrive over time, a different scenario is to have a known set of resources and assume that agents arrive in an online fashion. Indeed, there is a significant amount of work in this direction, with an emphasis on indivisible resources \citep{KalinowskiNW13,KashPS14,SinclairBY21,VardiPF22,BanerjeeHS24}. 
For indivisible resources, the very recent work of \citet{KulkarniMS25} obtains the first solid maximin  share fairness guarantees for this setting assuming that the agents fit into a limited number of different types. 
Nevertheless, these settings have very different flavor than ours.

%%%%%%%%%%%%%%%%%%%%%%%%%%%%%%%%%%%%%%%%%
%%%%%%%%%%%%%%%%%%%%%%%%%%%%%%%%%%%%%%%%%
\section{Preliminaries}
\label{sec:prelims}
%%%%%%%%%%%%%%%%%%%%%%%%%%%%%%%%%%%%%%%%%
%%%%%%%%%%%%%%%%%%%%%%%%%%%%%%%%%%%%%%%%%

Let $N = [n] = \{1, 2, \ldots, n\}$ be  set of agents and  $M = \{g_1, g_2, \ldots, g_m\}$ be a set of indivisible goods, where $n, m \in \mathbb{N}$. The high-level goal is to assign the goods of $M$ to the agents of $N$ in a way that is considered fair according to a well-defined fairness criterion. We usually call this assignment of goods to agents an \emph{allocation}, which can be \emph{partial} (if not all goods of $M$ have been assigned) or complete (i.e., it defines a partition of $M$). Formally, an allocation $(A_1, \ldots, A_n)$ is an ordered tuple of disjoint subsets of $M$; we often call the set $A_i$ the \emph{bundle} of agent $i$.

Each agent $i$ is associated with an \emph{additive} set function $v_i: 2^M \to \mathbb{R}_{\ge 0}$, where $v_i(S) = \sum_{g \in S}v_i(\{g\})$ represents the value of agent $i$ for the subset $S$ of goods. When $S$ is a singleton, we usually write $v_i(g)$ rather than $v_i(\{g\})$.
Of course, valuation functions of the agents can be more general but in this work we only study special cases of additive instances of the problem, i.e., instances where all agents have additive valuation functions that are restricted in some way.
In particular, we focus on \emph{personalized $2$-value} and \emph{personalized interval-restricted} instances. 

\begin{definition}[Personalized $2$-Value Instances]\label{def:2-value}
We say that an instance of the problem is a \emph{personalized $2$-value instance}, if for any $i \in N$ the function $v_i$ is additive and there are $\alpha_i \ge \beta_i \ge 0$, such that for any $g \in M$, it holds that $v_i(g) \in \{\alpha_i, \beta_i\}$. When $\alpha_i = \alpha$, $\beta_i = \beta$, for all $i\in N$, we call this a \emph{$2$-value instance}.
\end{definition}

One could reasonably claim here that the interesting case is when $\alpha_i > \beta_i > 0$ in Definition \ref{def:2-value}; indeed we say that such agents are of \textit{type 1}. Similarly, if $\alpha_i = \beta_i > 0$, agent $i$ is of \textit{type 2} and if $\alpha_i > \beta_i = 0$, agent $i$ is of \textit{type 3}. The remaining agents (i.e., $\alpha_i = \beta_i = 0$) are of \textit{type 0} and are completely irrelevant, as they are trivially satisfied (namely, they see the allocation as being temporal-\ef) by an empty bundle. So, without loss of generality, we may assume that the instances we consider only contain agents of types 1, 2, and 3. Of course, agents of type 2 and 3 are easier to satisfy (see, e.g., Corollary \ref{cor:types2-3-4}). As a final observation about agent types, we note that for the fairness notions we introduce below and study in this work, scaling the valuation functions has no effect on the fairness guarantees of any given allocation. Thus, it is also without loss of generality, to assume that $\beta_i =1$ if agent $i$ is of type 1 or 2, or that $\alpha_i = 1$ if it is of type 3.

The next definition aims to capture the continuous analog of $2$-value instances, i.e., we would like all the values agent $i$ may have for a good to be in the interval $[\beta_i, \alpha_i]$. However, as we just mentioned above, scaling the valuation functions is irrelevant for the fairness notions we consider. That is, any general additive instance can be trivially transformed to an equivalent instance where, for any $i \in N$ and $g \in M$, it holds that $v_i(g) \in [0, 1]$. Thus, in order for the restriction to be meaningful in our context, it should hold that $\beta_i > 0$ for all $i\in N$. By scaling appropriately, this is equivalent to asking that  $\beta_i =1$ for all $i\in N$.

\begin{definition}[Personalized Interval-Restricted Instances]\label{def:interval}
We say that an instance of the problem is a \emph{personalized interval-restricted instance}, if for any $i \in N$ the function $v_i$ is additive and there is $\alpha_i > 1$, such that for any $g \in M$, it holds that $v_i(g) \in [1, \alpha_i]$. When $\alpha_i = \alpha$, for all $i\in N$, we call this an \emph{interval-restricted instance}.
\end{definition}

By maintaining a bound on $\max_{i\in[n]}\sqrt{\alpha_i\,}$, i.e., the ratio of the highest over the lowest value an agent has for a good in Definition \ref{def:interval}, we can reduce the problem of dealing with personalized interval-restricted instances to dealing with personalized $2$-value proxy instances instead (albeit at the expense of the approximation ratio guarantees; see Section \ref{sec:beyond_2value}). 

In standard---offline---fair division settings, all the goods of $M$ are known and available to be used as an input to an allocation algorithm. Here we consider an \textit{online} fair division setting, where the goods arrive one at a time; the set $M$ (or even its cardinality, $m$) is not known a priori. When a good $g$ arrives, its value for each agent is revealed and it needs to be added to the bundle of some agent immediately and irrevocably. In general, we associate a distinct time step with each good and it is often (although not always) convenient to implicitly rename the goods so that $g_k$ is the $k$-th good in order of arrival and arrived at (or rather \textit{triggered}) time step $t = k$. Also, we often use $n_h(i, t) = |\{ g\in M \,|\, v_i(g) = \alpha_i \text{ and } g \text{ is one of the first $t$ goods that arrived}\}|$ to denote the number of high-valued goods among the first $t$ goods from the perspective of agent $i$.

As we mentioned in our introduction, there are no distributional assumptions about the arrival of the goods and in our results we follow a worst-case analysis. In Section \ref{sec:foresight_whole}, however, we assume that our instances can be augmented with limited information about the future. We say that an online instance is augmented with \emph{foresight of length $\ell$} if every time a good $g$ arrives (and still needs to be allocated immediately and irrevocably), we also get a preview of the $\ell$ next goods.

We mentioned above that we want to produce allocations that are \emph{fair} in some manner. We formalize this by introducing our main fairness notions, \emph{approximate EF\,$k$} and \emph{approximate \mms}.
Envy-freeness up to $k$ goods (EF$k$) is a relaxation of envy-freeness, introduced by \citet{LMMS04} and formally defined by \citet{Budish11} for $k=1$. For instance, according to \efo some envy is acceptable, as long as it can be eliminated by the hypothetical removal of a single good.

\begin{definition}[$\rho$-Envy-Freeness, $\rho$-EF$k$]
Given a partial allocation $\mathcal{A} = (A_1, A_2, \ldots, A_n)$, constants $\rho \in (0, 1]$ and $k\in \mathbb{Z}_{>0}$, and two agents $i, j \in N$, we say that 
\begin{itemize}[leftmargin=20pt,itemsep=3pt]
    \item[-] agent $i$ \emph{is $\rho$-envy-free ($\rho$-EF) towards} agent $j$, if $v_i(A_i) \geq \rho \cdot v_i(A_j)$;
    \item [-] agent $i$ \emph{is $\rho$-EF$k$ towards} agent $j$, if there is a set $S\subseteq A_j$ with $|S|\le k$, such that $v_i(A_i) \geq \rho \cdot v_i(A_j \setminus S)$.
\end{itemize}
The allocation is called $\rho$-\ef (resp.~$\alpha$-EF$k$) if every agent $i \in N$ is $\rho$-envy-free (resp.~$\rho$-EF$k$) towards any other agent $j \in N$. When $\rho = 1$, we drop the prefix and write EF$k$ rather than $1$-EF$k$. 
\end{definition}

Maximin share fairness, introduced by \cite{Budish11}, is a share-based notion, like proportionality, and can be interpreted via a thought experiment inspired by the famous cut-and-choose protocol. The idea is to give to each agent at least as much value as it could get if it partitioned the goods into disjoint sets (as many as the agents) and kept the worst among them. 

\begin{definition}[$\rho$-\prop, $\rho$-\mms]
For  a partial allocation $\mathcal{A} = (A_1, A_2, \ldots, A_n)$, let $\Pi(n, \mathcal{A})$ be the set of possible partitions of the set $S = {\bigcup_{j=1}^n}A_j$ into $n$ subsets.
Given $\mathcal{A}$ above, a constant $\rho>0$, and an agent $i\in N$, we say that 
\begin{itemize}[leftmargin=20pt,itemsep=3pt]
    \item[-] $\mathcal{A}$ \emph{is $\rho$-proportional} for agent $i$, if $v_i(A_i) \geq \rho \cdot v_i(S) / n$;
    \item [-] $\mathcal{A}$ \emph{is $\rho$-\mms} for agent $i$, if $v_i(A_i) \geq \rho \cdot \bmu_i^n(S)$, where $\bmu_i^n(S) = \!\!\displaystyle\max_{\mathcal{B} \in \Pi(n, \mathcal{A})} \min_{B_j \in \mathcal{B}} v_i(S)$ is the \emph{maximin share} of agent $i$; when $n$ is clear and $S$ is a function of time $t$, we write $\bmu_i(t)$ instead $\bmu_i^n(S)$.
\end{itemize}
The allocation is called $\rho$-\prop (resp.~$\rho$-\mms) if it is $\rho$-proportional (resp.~$\rho$-\mms) for every agent $i \in N$. When $\rho = 1$, we write \mms rather than $1$-\mms. 
\end{definition}

It is known, and very easy to derive from the definitions, that $\rho$-envy-freeness implies not only $\rho$-envy-freeness up to $k$ goods, for any $k>0$, but also $\rho$-proportionality, which itself implies $\rho$-maximin share fairness. 

As our problem is online, we do not only care about the final (complete) allocation. As a result, we will make statements of the form ``the allocation at time step $t$ is $\rho_1$-EF$k$ (resp.~$\rho_2$-\mms)'' meaning that we consider and evaluate the allocation that has been constructed up to time step $t$ as if the complete set of goods was only what we have seen so far. If each one of the partial allocations produced by an algorithm satisfies the same fairness guarantee, then one talks about \textit{temporal fairness}, as this was formalized by \citet{ElkindLLNT24} and \citet{CooksonES25}.

\begin{definition}[Temporal Fairness]
	Consider a sequence of partial allocations $\mathcal{A}^t = (A_1^t, A_2^t, \ldots, A_n^t)$, for $t\in \mathbb{Z}_{\ge0}$, such that $A_i^t\subseteq A_i^{t+1}$ for any $i\in N$ and any $t\ge0$. If $\mathcal{A}^t$ is $\rho_1$-EF$k$ (resp.~$\rho_2$-\mms) for all $t\in \mathbb{Z}_{\ge0}$, then we say that the sequence of allocations $(\mathcal{A}^t)_{t\ge0}$ is $\rho_1$-\emph{temporal}-EF$k$ (resp.~$\rho_2$-\emph{temporal}-\mms).   
\end{definition}

When referring to the allocation iteratively built by an algorithm, we may abuse the terminology and say that the algorithm computes a $\rho_1$-temporal-EF$k$ (resp.~$\rho_2$-temporal-\mms) allocation, rather than talking about a sequence of allocations.

\begin{remark}
 Suppose that we have a personalized interval-restricted instance or a personalized $2$-value instance with type 1 agents. Let  $\alpha_* = \max_{i\in[n]}\sqrt{\alpha_i\,}$.
It should be noted that here one can get a trivial $1/\alpha_*$ approximation with respect to temporal \efo or \mms. Indeed, this is done by completely ignoring the values and allocating the goods in a round-robin fashion. Although, in general, $1/\alpha_*$ can be arbitrarily worse than the approximation factors we achieve throughout this work, in the special case where $\alpha_*$ is a small constant (e.g., between $1$ and $2$), it would be preferable to follow this trivial approach instead.
\end{remark}

%%%%%%%%%%%%%%%%%%%%%%%%%%%%%%%%%%%%%%%%%
%%%%%%%%%%%%%%%%%%%%%%%%%%%%%%%%%%%%%%%%%
\section{Impossibility Results Persist Even for $2$-Value Instances} 
\label{sec:imposibilities}
%%%%%%%%%%%%%%%%%%%%%%%%%%%%%%%%%%%%%%%%%
%%%%%%%%%%%%%%%%%%%%%%%%%%%%%%%%%%%%%%%%%
The strong impossibility results in the literature \citep{HePPZ19,ZhouBW23} typically exploit the following pattern: a bad decision is made about the very first good, due to lack of information, and this propagates throughout a linear number of goods; then, right about when the value of the allocated goods starts to balance out, this ``bad'' sequence is replicated but with all values scaled up by a large factor, and this cycle is repeated as necessary. One could hope that for $2$-value instances there is not enough flexibility for creating instances that force any algorithm to perform poorly. While there is some truth in this, in the sense that things cannot go arbitrarily bad, we still get nontrivial impossibility results.

\begin{theorem}\label{thm:imposibility_0.5EF1}
Let $\varepsilon > 0$ be a constant. There is no deterministic algorithm that always builds an allocation which is $(1/2 + \varepsilon)$-temporal-\efo, even for $2$-value instances with only two agents.  
\end{theorem}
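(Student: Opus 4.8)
The plan is to construct, for any candidate deterministic algorithm, an adversarial sequence of goods in a $2$-value instance with two agents that forces the algorithm into a position where it cannot be $(1/2+\varepsilon)$-temporal-\efo at every time step. Since the algorithm is deterministic, the adversary can simulate it and react to each allocation decision. I would work with the simplest possible $2$-value setting: both agents of type 1 with common values $\alpha > \beta > 0$, and I would choose the ratio $\alpha/\beta$ to be large (growing with $1/\varepsilon$), so that a single high-valued good dominates many low-valued goods in each agent's eyes.

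The core idea is to present goods in a way that creates an asymmetry the algorithm cannot correct quickly. First I would feed goods whose values are identical for both agents (e.g., both agents view them as high), so that whichever agent receives such a good, the other immediately envies. With \efo, a single such good on one side is tolerable, but the adversary's job is to engineer a moment where one agent, say agent $1$, is forced to hold a bundle that agent $2$ values so much more than its own bundle that removing one good cannot bring the ratio below $(1/2+\varepsilon)$. Concretely, I would exploit that after the algorithm assigns a good, the adversary chooses the next good's value profile to either reinforce the imbalance or to deny the disadvantaged agent any compensating value. Because only two values are available per agent, I would use the high value $\alpha$ to create large envy with a single good and the low value $\beta$ to pad bundles without meaningfully reducing the envy ratio; the key inequality to target is that at some time step, for some ordered pair $(i,j)$, $v_i(A_i) < (1/2+\varepsilon)\, v_i(A_j \setminus \{g\})$ for \emph{every} $g \in A_j$.

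The concrete construction I envision is short. Start by giving a good that both agents value at $\alpha$; the algorithm must give it to one agent, say agent $2$, so agent $1$ now envies by one good (still \efo, trivially, since $|A_2|=1$). Then give a \emph{second} good both value at $\alpha$. If the algorithm gives it to agent $1$, symmetry is restored, so the adversary instead forces a branch: I would present goods so that, no matter the algorithm's choice, one agent ends up with two high-valued goods while the other has at most low-valued goods totaling less than $(1/2+\varepsilon)$ of the value of \emph{one} of those two high goods. With $\alpha/\beta$ large, even a constant number of $\beta$-goods on the poor side cannot reach the threshold $(1/2+\varepsilon)\alpha$ after the single-good removal, which leaves $(1/2+\varepsilon)\alpha$ on the rich side. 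The bound $1/2$ arises naturally: two equal high goods, remove one, leaves exactly half; pushing past $1/2+\varepsilon$ is what the adversary makes impossible.

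The main obstacle I anticipate is handling the algorithm's freedom to balance bundles using low-valued goods: the adversary must ensure the disadvantaged agent cannot accumulate enough $\beta$-value to dilute the envy ratio below the threshold, while simultaneously ensuring the two-high-goods imbalance is actually forced rather than merely possible. This requires carefully interleaving $\alpha$-goods and $\beta$-goods and tracking a potential-like quantity (the gap between the two agents' perceived bundle values) across the adversary's case analysis on the algorithm's responses. I would likely set $\alpha$ large enough that $\beta$-goods are essentially negligible, reducing the analysis to the behavior on $\alpha$-goods alone, and then argue that forcing two consecutive $\alpha$-goods onto the same agent (which an \efo-respecting algorithm can delay but, with the right continuation, not avoid) produces the violating time step. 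The delicate part is making this forcing rigorous for \emph{every} deterministic strategy and pinning the constant exactly at $1/2$.
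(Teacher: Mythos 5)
There is a genuine gap at the heart of your construction: your concrete adversarial plan uses only goods on which the two agents \emph{agree} (both value a good at $\alpha$ or both at $\beta$), and against such sequences no violation can be forced. When valuations are identical, the greedy rule ``give each arriving good to the currently poorer agent'' maintains \emph{exact} \efo at every time step: the richer agent's surplus is always witnessed by the last good added to its bundle, so removing that single good eliminates all envy. In particular, your claimed forcing step---``no matter the algorithm's choice, one agent ends up with two high-valued goods while the other has at most low-valued goods''---is asserted but cannot be realized with common-value goods, since the algorithm simply alternates the universally high goods. Worse, your plan to take $\alpha/\beta$ large so that $\beta$-goods are ``essentially negligible'' and ``reduce the analysis to the behavior on $\alpha$-goods alone'' pushes you exactly into this identical-valuations regime where online \efo is achievable, so the reduction destroys the impossibility rather than enabling it.

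The missing ingredient, which is the engine of the paper's proof, is \emph{disagreement} goods: goods valued $\alpha$ by one agent and $\beta$ by the other. The paper (using the fixed values $\beta=1$, $\alpha=5$, independent of $\varepsilon$) first forces such a good onto the agent who values it low---the algorithm must do this to avoid an immediate violation---which inflates the other agent's \emph{perceived} envy while keeping the recipient's \emph{perceived} wealth at just $\beta$. After this asymmetry is installed, a universally high good becomes unallocatable: giving it to either agent breaks $(1/2+\varepsilon)$-\efo, because one agent already perceives a large deficit and the other is still ``poor'' in its own eyes. The paper carries this out as an explicit five-good decision tree (with a renaming argument at the initial tie), and the constant $1/2$ is pinned by comparisons like $v_2(A_2)=1$ versus $v_2(A_1\setminus\{g\})=2$, i.e., a one-low-good versus two-low-goods comparison after removing the high good---not by your ``two high goods, remove one, leaves half'' heuristic. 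To repair your proposal you would need to introduce disagreement goods, redo the case analysis over the algorithm's responses to them, and keep $\beta$ comparable to $\alpha$ (not negligible), at which point you would essentially be reconstructing the paper's argument.
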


\begin{proof}
Suppose we have a deterministic  algorithm $\mathcal{A}$ for the problem. We begin with the simple observation that when the very first good has the same value for both agents, it is without loss of generality to assume that $\mathcal{A}$ assigns it to agent 1; if not, we just rename the agents in the following argument. 

So, consider a stream of goods that begins with $g_1, g_2$, such that $v_1(g_1) = v_2(g_1) = 1$ whereas $v_1(g_2) = 5$ and $v_2(g_2) = 1$. Given that $g_1$ is added to $A_1$, either $g_2$ is also added to $A_1$ and the resulting allocation $(\{g_1, g_2\}, \emptyset)$ is not $(1/2 + \varepsilon)$-\efo from the point of view of agent 2, or $g_2$ is added to $A_2$; we assume the latter. Next, consider a good $g_3$, such that $v_1(g_3) = v_2(g_3) = 1$. There are two cases here, depending on what $\mathcal{A}$ does with $g_3$, both illustrated below:

% \begin{table}[h!]
\begin{center}  
\begin{tabular}{lcccccccccccc}
 & $g_1$ & $g_2$ & $g_3$ & $g_4$  &\ldots &  & $g_1$ & $g_2$ & $g_3$ & $g_4$ & $g_5$ & \ldots \\
agent 1: \hspace{1em} & \circled{1} & 5 & \circled{1} & 5 & \ldots & or & \circled{1} & 5 & 1 & 1 & 5 &\ldots \\
agent 2: \hspace{1em} & 1 & \circled{1} & 1 & 5 & \ldots & \hspace{4em} & 1 & \circled{1} & \circled{1} & 5 & 5 &\ldots 
\end{tabular}
\end{center}
% \end{table}

\noindent\myul{\textbf{Case 1:} $g_3$ is given to agent 1.} In this case, consider a next good $g_4$, such that $v_1(g_4) = v_2(g_4) = 5$. Whoever gets $g_4$, the resulting allocation, $(\{g_1, g_3, g_4\}, \{g_2\})$ or $(\{g_1, g_3\}, \{g_2, g_4\})$, is not $(1/2 + \varepsilon)$-\efo. Indeed, $(\{g_1, g_3, g_4\}, \{g_2\})$ is not $(1/2 + \varepsilon)$-\efo from the point of view of agent 2, since 
\[1 = v_2(A_2) < (1/2 + \varepsilon) \min_{g\in A_1} (v_2(A_1) - v_2(g)) = 1 +2 \varepsilon \,.\]
Similarly, $(\{g_1, g_3\}, \{g_2, g_4\})$  is not \efo from the point of view of agent 1, since now we have
\[2 = v_1(A_1) < (1/2 + \varepsilon) \min_{g\in A_2} (v_1(A_2) - v_1(g)) = 2.5 +5 \varepsilon \,.\]
% \smallskip

\noindent\myul{\textbf{Case 2:} $g_3$ is given to agent 2.} Here we first have an intermediate good $g_4$, such that $v_1(g_4) = 1$ and $v_2(g_4) = 5$. It is straightforward to see that if $\mathcal{A}$ assigns $g_4$ to agent 2, then the resulting allocation $(\{g_1\}, \{g_2, g_3, g_4\})$ is not $(1/2 + \varepsilon)$-\efo from the point of view of agent 1. Hence, we assume that $g_4$ is added to $A_1$ instead. The last good we need is $g_5$ with $v_1(g_5) = v_2(g_5) = 5$. Now, no matter who gets $g_5$, the resulting allocation, $(\{g_1, g_4, g_5\}, \{g_2, g_3\})$ or $(\{g_1, g_4\}, \{g_2, g_3, g_5\})$, is not $(1/2 + \varepsilon)$-\efo. To see this, first consider $(\{g_1, g_4, g_5\}, \{g_2, g_3\})$ and notice that from the point of view of agent 2, we have 
\[2 = v_2(A_2) < (1/2 + \varepsilon) \min_{g\in A_1} (v_2(A_1) - v_2(g)) = 3 +6 \varepsilon \,.\]
Finally, consider $(\{g_1, g_4\}, \{g_2, g_3, g_5\})$. From the point of view of agent 1, we have a similar situation:
\[2 = v_1(A_1) < (1/2 + \varepsilon) \min_{g\in A_2} (v_1(A_2) - v_1(g)) = 3 +6 \varepsilon \,.\]
In any case, algorithm $\mathcal{A}$ fails to maintain a $(1/2 + \varepsilon)$-\efo allocation within the first 5 time steps.
\end{proof}

By carefully inspecting the proof of Theorem \ref{thm:imposibility_0.5EF1}, one could notice that the same construction can be used to show that no algorithm can build an allocation that is $(1/3 + \varepsilon)$-\mms at every time step. In fact, the latter would imply Theorem \ref{thm:imposibility_0.5EF1} since it is known that any $(1/2 + \delta)$-\efo allocation for two agents is also a $(1/3 + \delta/9)$-\mms allocation (see, e.g., Proposition 3.6 of \citet{ABM18}). Nevertheless, for temporal maximin share fairness we can show a much stronger impossibility result that degrades with the number of agents. 

\begin{theorem}\label{thm:imposibility_MMS}
Let $\varepsilon > 0$ be a constant. There is no deterministic algorithm that, given a $2$-value instance with $n$ agents, always builds a $(1/(2n - 1) + \varepsilon)$-temporal-\mms allocation.  
\end{theorem}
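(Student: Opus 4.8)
The plan is to design an adaptive adversary that, knowing the deterministic algorithm $\mathcal{A}$ in full, reveals the value profile of each incoming good only after simulating $\mathcal{A}$ on the prefix already constructed. I would work with a $2$-value instance in which $\beta = 1$ and $\alpha$ is a large constant, fixed at the very end as a function of $n$ and $\varepsilon$; since scaling is irrelevant for \mms this loses nothing, and taking $\alpha$ large lets me treat the total contribution of the low goods as a negligible additive error, reducing the argument to bookkeeping on how the high goods get distributed. The goal is to exhibit, for \emph{every} such $\mathcal{A}$, a time step $t^*$ and an agent $i$ with $v_i(A_i^{t^*}) < \left(\tfrac{1}{2n-1} + \varepsilon\right)\bmu_i(t^*)$.

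The mechanism I would exploit is the interplay between online irrevocability and the fact that $\bmu_i(t)$ is computed on the \emph{growing} pool of all allocated goods. Concretely, I single out a victim (say agent $1$) and repeatedly present goods that agent $1$ values at $\alpha$; since agent $1$'s maximin share is essentially $\lfloor n_h(1,t)/n\rfloor\cdot\alpha$, every batch of $n$ such goods inflates $\bmu_1$ by an extra $\alpha$ regardless of who actually receives them, whereas a bundle allocated early cannot grow retroactively. The crux is to force these high-for-$1$ goods \emph{away} from agent $1$ without over-serving anyone else. This cannot be done by concentrating goods on a single rival, because the $n$-fold division in the \mms partition would then make that rival comfortably over-served; instead, each high-for-$1$ good must \emph{also} be made high for whichever of the $n-1$ non-victims is currently closest to its own fairness threshold, so that $\mathcal{A}$ is compelled to divert the good to rescue that endangered agent rather than hand it to the victim. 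The heart of the construction is therefore an invariant that keeps all $n-1$ non-victims simultaneously pinned just above their thresholds, so that every high-for-$1$ good is already ``spoken for'' by some rival on the edge.

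I would then track a potential comparing $\bmu_1(t)$ with $v_1(A_1^t)$ and argue inductively that the victim can be held at a bundle value of roughly $\tfrac{1}{2n-1}\bmu_1(t)$. Intuitively the factor $2n-1 = n + (n-1)$ arises from combining the $n$ bundles of the \mms partition with the $n-1$ rivals competing for each high good, and a careful accounting should show that merely keeping all rivals alive forces the algorithm to concede an essentially $(2n-1)$-to-$1$ deficit to the victim. Once the invariant drives this ratio down to $\tfrac{1}{2n-1}$, the additional $\varepsilon$ comes for free: sitting exactly at the boundary already falsifies $(1/(2n-1)+\varepsilon)$-\mms, and choosing $\alpha$ large absorbs all floors and low-value rounding into a term below $\varepsilon\,\bmu_1(t^*)$.

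The step I expect to be the main obstacle is precisely maintaining this multi-agent invariant and proving that it yields the constant $2n-1$ rather than a weaker bound. Keeping $n-1$ rivals all pinned at their thresholds while the pool (hence every \mms) keeps growing is fragile, since each rescue given to one rival perturbs the thresholds of the others; the argument therefore seems to require a phase analysis that separates an initial transient (while the thresholds are being established) from the steady state, together with exact control of $\bmu_i$ on mixed high/low pools and of the floor terms. This is the sense in which the family of instances must be ``fully adjusted to what any algorithm could do'', and it is why an adaptive construction, rather than the usual value-boosting, is unavoidable here.
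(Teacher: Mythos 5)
There is a genuine gap here, and it is structural rather than a matter of missing calculation. Your plan is asymptotic: you repeatedly feed the victim batches of high-valued goods so that $\bmu_1(t)$ grows like $\lfloor n_h(1,t)/n\rfloor\,\alpha$, and you hope a steady-state invariant pins the ratio $v_1(A_1^t)/\bmu_1(t)$ down at $1/(2n-1)$. But this framing is refuted by the paper's own matching upper bound: Algorithm \ref{alg:main_alg} guarantees every type-1 agent at least one high-valued good among the first $n$ it sees and one per $3n-2$ thereafter, which yields a constant fraction ($1/4$) of $\bmu_i(t)$ from time $t_{i0}=\min\{t \mid n_h(i,t)\ge n\}$ onward (Corollary \ref{cor:type1}). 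So against any algorithm achieving the tight bound, your invariant cannot be maintained once the victim has seen $n$ high-valued goods: the worst-case ratio $1/(2n-1)$ is attainable \emph{only} as a short transient, and the regime you rely on---$\bmu_1$ dominated by $\alpha$-terms while the victim is starved of high goods---is exactly the regime where a good algorithm is compelled to serve the victim. Correspondingly, your explanation of the constant ($2n-1 = n + (n-1)$ from bundles plus rivals, emerging from a growing-pool potential) misidentifies its source. In the paper's construction the decisive moment occurs at time $3n-2$, when the victim $k$ has seen exactly $n-1$ high-valued goods (so the high goods fill $n-1$ bundles of the maximin partition and never inflate the minimum) and $2n-1$ low-valued goods, giving $\bmu_k(3n-2)=2n-1$ from the all-low bundle alone, while the victim holds a single good of value $1$. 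The constant is a count of \emph{low}-valued goods at a time when $n_h(k,t)=n-1<n$, not a steady-state ratio on high goods.

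The second gap is the mechanism by which the high-for-victim goods are ``spoken for.'' You posit an invariant keeping all $n-1$ rivals ``pinned just above their thresholds,'' admit it is the main obstacle, and give no mechanism to establish or restore it after each rescue. The paper needs no such self-sustaining invariant: after a rigid opening (a triangular first block of $n$ goods that forces, up to renaming, one good per agent, then $n-1$ universally low goods that leave some agent $k$ with a single unit of value), the endangerment is created by a one-shot \emph{threat}, formalized in Claim \ref{claim:high-valued}: if any agent $j$ has seen $n$ high-valued goods while holding total value at most $n+1$, a chain of universally high-valued goods drives $\bmu_j \ge \alpha$ with $v_j(A_j) \le n+1$, far below $\bmu_j/2n$ for $\alpha = 2n^2+2n$. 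The adversary then presents goods high for exactly one designated rival at a time, and the contingency branch (switching to high-for-all goods if the algorithm ever deviates) is what compels compliance---a finite case analysis on where the deviation occurs, not a potential-function argument. Your instinct that the adversary must be adaptive and that sitting exactly at the boundary falsifies the $+\varepsilon$ guarantee is correct and matches the paper, but as written the proposal's core quantitative step both remains unproven and, in its steady-state form, cannot be proven; a correct completion would have to relocate the entire argument into the pre-$t_{i0}$ transient and rederive $2n-1$ from the low-good count, which is essentially the paper's construction.
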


\begin{proof}%[Proof of Theorem \ref{thm:imposibility_MMS}]
Suppose we have a deterministic algorithm $\mathcal{A}$ for the problem. We are going to consider a $2$-value instance with $n$ agents, such that, for all $i\in N$, $\beta_i = 1$ and $\alpha_i = \alpha = 2n^2 + 2n$.
Like in the proof of Theorem \ref{thm:imposibility_0.5EF1}, we begin with some straightforward, yet crucial observations. First, if at any point during the first $n$ time steps an agent receives a second good, by the time the first $n$ goods are fully allocated, there is at least one agent $j$ that has received value $0$ (because it got no goods) despite having a positive maximin share value $\bmu_j(n) \ge 1$. So, for what follows, we may assume that algorithm $\mathcal{A}$ assigns to each agent exactly one of the first $n$ goods. A second observation is that, given that $\ell < n$ goods have been already allocated, if the $(\ell+1)$-th good has the same value for all the agents who have not yet received a good, then it is without loss of generality to assume that $\mathcal{A}$ assigns it to the agent with the smallest index; this is just a matter of renaming the agents. 

Given these two observations above, suppose that goods $g_1, \ldots, g_n$ arrive, in this order, so that for any $i\in N$
\[v_i(g_r) = \begin{cases}
1\,, & \text{if } i\le r \\
\alpha\,, & \text{otherwise.}
\end{cases}\]
Then, algorithm $\mathcal{A}$ assigns them exactly as shown below, i.e., $g_i$ is given to agent $i$, for all $i\in N$. 
\begin{center}  
\begin{tabular}{lcccccc|cccc}
 & $g_1$ & $g_2$ & $g_3$ & \ldots & $g_{n-1}$ & $g_n$ & $g_{n+1}$ & $g_{n+2}$ & \ldots & $g_{2n-1}$  \\
ag.~1:  & \circled{1} & $\alpha$ & $\alpha$ &  \ldots & $\alpha$ & $\alpha$ & 1 & 1 & \ldots & 1  \\
ag.~2:  & 1 & \circled{1} & $\alpha$ &  \ldots & $\alpha$ & $\alpha$ & 1 & 1 & \ldots & 1  \\
ag.~3:  & 1 & 1 & \circled{1} &  \ldots & $\alpha$ & $\alpha$ & 1 & 1 & \ldots & 1  \\
$\vdots$  & $\vdots$ & $\vdots$ &  $\vdots$ & $\ddots$ & $\vdots$  & $\vdots$ & $\vdots$  & $\vdots$ & $\ddots$  & $\vdots$ \\
ag.~$n-1$:  & 1 & 1 & 1  & \ldots & \circled{1} & $\alpha$ & 1 & 1 & \ldots & 1  \\
ag.~$n$:  & 1 & 1 & 1 &  \ldots & 1 & \circled{1} & 1 & 1 & \ldots & 1  \\
\end{tabular}
\end{center}
At this point it is not hard to see that the next claim about goods that are low-valued for all holds. Notice that $1/2n < 1/(2n - 1) + \varepsilon$.

\begin{claim}\label{claim:high-valued}
Assume that the allocation of the first $n$ goods is as shown above. If at any point $t>n$ no agent $i\le \lambda$ has yet received total value more than $n+1$ and, furthermore, $\lambda$ goods, $g'_1, \dots, g'_{\lambda}$, which are high-valued for all agents arrive in that order, then either agent $j$ will get good $g'_j$, for all $j\in [\lambda]$, or algorithm $\mathcal{A}$ builds at some point an allocation that is not $1/2n$-\mms. 
\end{claim}

\begin{proof}[Proof of Claim \ref{claim:high-valued}]
\renewcommand\qedsymbol{{\scriptsize \textbf{Cl.~\ref{claim:high-valued}} $\boxdot$}}
This is a simple proof by induction on $\lambda$. For $\lambda = 1$, after $g'_1$ appears, say at time $t'$, we have $n_h(1, t') = n$, i.e., agent 1 has already seen $n$ high-valued goods (from its own perspective). So, we have $\bmu_1(t')\ge \alpha$. On the other hand, if $\mathcal{A}$ does not add $g'_1$ to $A_1$, we have $v_1(A_1)\le n+1 \le (n+1)\bmu_1(t')/\alpha = \bmu_1(t')/2n$.

Next, for $\lambda > 1$, assume that the claim is true for $\lambda - 1$ agents and high-valued goods. Now suppose that no agent $i\le \lambda$ has received value more than $n+1$ yet and that $\lambda$ goods, $g'_1, \dots, g'_{\lambda}$, which are high-valued for all agents arrive in that order. By this induction hypothesis, either algorithm $\mathcal{A}$ builds at some point an allocation that is not $1/2n$-\mms or, for all $j\in [\lambda-1]$, agent $j$ will get good $g'_j$. Given that, agent $\lambda$ has now seen $n$ high-valued goods in total and we can repeat the argument we made for agent 1 in the base case. Let $t''$ be the time step when $g'_{\lambda}$ arrives. We have $n_h(\lambda, t'') = n$ and, thus, $\bmu_{\lambda}(t'')\ge \alpha$, whereas $v_{\lambda}(A_{\lambda})\le n+1 \le \bmu_{\lambda}(t'')/2n$, unless $\mathcal{A}$ adds $g'_{\lambda}$ to $A_{\lambda}$. This completes the induction step.
\end{proof}

The next $n-1$ goods, $g_{n+1}, \ldots, g_{2n-1}$, are low-valued for all agents. Clearly, no matter how $\mathcal{A}$ assigns these goods, there is at least one agent, say $k\in N$, who does not get any of those by the end of time step $2n-1$. 
We will distinguish two cases, depending on whether agent $k$ is the last agent or not.
\medskip

\noindent\myul{\textbf{Case 1:} $k = n$.} In this case, the next $n-1$ goods, $g_{2n}, \ldots, g_{3n-2}$, are high-valued for all agents. Given that no agent has received total value more than $n$ by the end of time step $2n-1$, Claim \ref{claim:high-valued} applies, forcing the algorithm $\mathcal{A}$ to either fail or allocate $g_{2n}, g_{2n+1}, \ldots, g_{3n-2}$ to agents 1, 2, \ldots, $n-1$, in that order. We assume the latter. After this happens we notice that $\bmu_n(3n-2)= 2n-1$, since agent $n$ has already seen $n-1$ high-valued and $2n-1$ low-valued goods, and $\alpha> (2n-1)\cdot 1$. On the other hand, $A_n = \{g_n\}$, i.e., we have $v_n(A_n) = 1 = \bmu_n(3n-2)/(2n-1) <  (1/(2n - 1) + \varepsilon)\bmu_n(3n-2)$.
\medskip 

\noindent\myul{\textbf{Case 2:} $k < n$.} In this case, the next $n-k$ goods, $g_{2n}, \ldots, g_{3n-k-1}$, are such that, for $\ell \in [n-k]$ and for $i\in N$,
\[v_i(2n-1+\ell) = \begin{cases}
\alpha\,, & \text{if } i =  n-\ell+1\\
1\,, & \text{otherwise}
\end{cases}\]
i.e., $g_{2n}$ is high-valued only for agent $n$, $g_{2n+1}$ only for agent $n-1$, and so on, as is shown in the left part of the table below. Note, however, that we may not see the whole subsequence $g_{2n}, \ldots, g_{3n-k-1}$, depending on the behavior of algorithm $\mathcal{A}$.
\begin{center}  
\begin{tabular}{lcccc|cccc}
 &   $g_{2n}$ & $g_{2n+1}$  & \ldots & $g_{3n- k - 1}$ & $g_{3n- k}$ & $g_{3n- k+1}$  & \ldots & $g_{3n- 2}$\\
ag.~1:   & 1 & 1 & \ldots & 1 & \circled{$\alpha$} & $\alpha$& \ldots & $\alpha$ \\
ag.~2:   & 1 & 1 & \ldots & 1 & $\alpha$ & \circled{$\alpha$} & \ldots & $\alpha$ \\
$\vdots$   &$\vdots$  & $\vdots$ & $\ddots$  & $\vdots$ &$\vdots$  & $\vdots$ & $\ddots$  & $\vdots$ \\
ag.~$k-1$:    &1 & 1 & \ldots & 1 & $\alpha$ & $\alpha$ & \ldots & \circled{$\alpha$}  \\
ag.~$k$:    &1 & 1 & \ldots & 1 & $\alpha$ & $\alpha$ & \ldots & $\alpha$  \\
ag.~$k+1$:   & 1 & 1 & \ldots & \circled{$\alpha$} & $\alpha$ & $\alpha$ & \ldots & $\alpha$ \\
$\vdots$   &$\vdots$  & $\vdots$ & $\ddots$  & $\vdots$ &$\vdots$  & $\vdots$ & $\ddots$  & $\vdots$ \\
ag.~$n-1$:    & 1 & \circled{$\alpha$} & \ldots & 1 & $\alpha$ & $\alpha$ & \ldots & $\alpha$ \\
ag.~$n$:    &  \circled{$\alpha$} & 1 & \ldots & 1 & $\alpha$ & $\alpha$ & \ldots & $\alpha$ \\
\end{tabular}
\end{center}
We claim that either algorithm $\mathcal{A}$ fails to maintain a $(1/(2n - 1) + \varepsilon)$-\mms allocation or it allocates $g_{2n}, g_{2n+1}, \ldots, g_{3n-k-1}$ to agents $n$, $n-1$, \ldots, $k+1$, in that order. 
Towards a contradiction, suppose that this is not the case, i.e., $\mathcal{A}$ maintains a good approximation to temporal maximin share fairness, yet some of these goods are not allocated to the agent who values them highly. 
In particular, let $2n-1+j\in \{2n, 2n+1, \ldots, 3n-k-1\}$ be the lowest index of a good for which this happens. That is, $g_{2n}$ is given to agent $n$, $g_{2n+1}$ to agent $n-1$, and so on, $g_{2n-2+j}$ is given to agent $n-j+2$, \textit{but} $g_{2n-1+j}$ is not given to agent $n-j+1$. As a result, no agent $i\in[n-j+1]$ has received total value more than 
$n+1$ by the end of time step $2n-1+j$. So, at this point we may change the stream of goods, forget about $g_{2n+j},  \ldots, g_{3n-k-1}$ above, and replace them by $n-j$ goods, $\hat{g}_{2n+j}, \ldots, \hat{g}_{3n-1}$, which are high-valued for all agents. 
As we argued already, Claim \ref{claim:high-valued} applies here, forcing $\mathcal{A}$ to either fail or allocate $\hat{g}_{2n+j}, \hat{g}_{2n+j+1}, \ldots, \hat{g}_{3n-1}$ to agents 1, 2, \ldots, $n-j$, respectively, in that order. We assume the latter. After this happens we notice that $\bmu_{2n-1+j}(3n-1)\ge \alpha$, since agent $2n-1+j$ has already seen $n$ high-valued goods ($j-1$ among the first $n$ goods, the good $g_{2n-1+j}$ itself, and all of the last $n-j$ goods). On the other hand, $A_{2n-1+j}$ only contains at most $n+1$ low-valued goods. That is, we have $v_{2n-1+j}(A_{2n-1+j}) \le n+1 \le (n+1)\bmu_{2n-1+j}(3n-1)/\alpha <  \bmu_n(3n-1)/2n$.

At this point we may assume that algorithm $\mathcal{A}$ allocates $g_{2n}, g_{2n+1}, \ldots, g_{3n-k-1}$ to agents $n$, $n-1$, \ldots, $k+1$, in that order, as shown in the corresponding table. However, this means that none of the first $k$ agents has received more than $n$ low-valued goods so far. In particular, $A_k = \{g_k\}$. As a result we can consider $k-1$ additional goods, $g_{3n-k}, \ldots, g_{3n-2}$, which are high-valued for all agents and apply Claim \ref{claim:high-valued}. The claim guarantees that either algorithm $\mathcal{A}$ does not maintain a sufficiently high approximation guarantee throughout, or that the allocation will be completed as shown in the last part of the second table, i.e., $g_{3n-k}, g_{3n-k+1}, \ldots, g_{3n-2}$ are given to agents 1, 2, \ldots, $k-1$, respectively, in that order. This is a very poor allocation from the point of view of agent $k$. We notice that $\bmu_{k}(3n-2) = 2n - 1$, since agent $k$ has already seen $n-1$ high-valued goods ($n-k$ among the first $n$ goods and all of the last $k-1$ goods) and $2n-1$ low-valued goods ($k$ among the first $n$ goods and all the $(n-1)+(n-k)$ goods right after the first $n$ goods). On the other hand, recall that $A_k = \{g_k\}$. 
Thus, we have $v_{k}(A_{k}) = 1 = \bmu_n(3n-2)/(2n-1) <  (1/(2n - 1) + \varepsilon)\bmu_n(3n-2)$.
% \qed
\end{proof}

By inspecting the proof, it is not hard to see that Theorem \ref{thm:imposibility_MMS} could have been stated with respect to the largest ratio of the high over the low value of an agent. This is particularly relevant for Section \ref{sec:beyond_2value}, where this term will appear in the fairness guarantees we obtain for personalized interval-restricted instances.

\begin{corollary}\label{cor:imposibility_MMS}
Let $\varepsilon > 0$ be a constant. There is no algorithm that, given a $2$-value instance with $n$ agents and values $\alpha>1 =\beta$, always builds a $(1/\sqrt{2 \alpha} + \varepsilon)$-temporal-\mms allocation.  
\end{corollary}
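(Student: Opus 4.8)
The plan is to reuse the adversarial family of Theorem \ref{thm:imposibility_MMS} essentially verbatim, with the high value fixed at $\alpha = 2n^2 + 2n$, and then to re-express the guarantee it yields in terms of $\alpha$ instead of $n$. Since that theorem already rules out $(1/(2n-1) + \delta)$-temporal-\mms for every $\delta > 0$ on this family, the whole problem reduces to comparing the discrete threshold $1/(2n-1)$ with the continuous target $1/\sqrt{2\alpha}$; I would not touch the construction or Claim \ref{claim:high-valued} at all.

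The first thing I would record is the key arithmetic. With $\alpha = 2n^2 + 2n$ we have $2\alpha = 4n^2 + 4n$, so $\sqrt{2\alpha} = 2\sqrt{n^2 + n}$, and from $n < \sqrt{n^2+n} < n + \tfrac12$ I get $2n - 1 < \sqrt{2\alpha} < 2n + 1$; in particular $1/\sqrt{2\alpha} < 1/(2n-1)$. I would then estimate the gap $\delta_n := 1/(2n-1) - 1/\sqrt{2\alpha}$ and show it is $\Theta(1/n^2)$, hence $\delta_n \to 0$ as $\alpha \to \infty$.

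With this in hand, given any $\varepsilon > 0$ I would choose $n$ large enough that $\delta_n < \varepsilon$ (this is exactly where $\alpha$ must be taken large). Then $1/\sqrt{2\alpha} + \varepsilon > 1/(2n-1)$, so the target ratio sits strictly above the threshold excluded by Theorem \ref{thm:imposibility_MMS}; invoking that theorem with additive slack $\delta := (1/\sqrt{2\alpha} + \varepsilon) - 1/(2n-1) > 0$ shows that no algorithm maintains $(1/\sqrt{2\alpha} + \varepsilon)$-temporal-\mms on the corresponding instance, which is a $2$-value instance with $n$ agents and values $\alpha > 1 = \beta$. This is exactly the statement of the corollary.

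The main obstacle I anticipate is the discrete-to-continuous mismatch: the construction's natural threshold is the discrete $1/(2n-1)$, whereas the corollary advertises the clean continuous $1/\sqrt{2\alpha}$, which is strictly smaller for each fixed $n$. Reconciling them forces the $\Theta(1/n^2)$ slack $\delta_n$ to be absorbed into $\varepsilon$, so $\alpha$ must grow as $\varepsilon \to 0$; the statement is therefore really an asymptotic (large-ratio) impossibility, which is precisely the regime that feeds the reduction to personalized interval-restricted instances in Section \ref{sec:beyond_2value}. As a fully self-contained alternative I could instead re-run the construction for an arbitrary $\alpha$ with $n = \lfloor \sqrt{\alpha/2} \rfloor$, re-proving Claim \ref{claim:high-valued} with the ratio $(n+1)/\alpha$ in place of $1/(2n)$; but this reproduces the same threshold and the same vanishing slack, so I expect it to offer no real shortcut.
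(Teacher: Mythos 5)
Your proposal is correct and follows essentially the same route as the paper: both fix $\alpha = 2n^2+2n$ in the construction of Theorem~\ref{thm:imposibility_MMS}, observe that the gap $\frac{1}{2n-1} - \frac{1}{\sqrt{4n^2+4n}}$ vanishes as $n \to \infty$, and for large enough $n$ absorb it into $\varepsilon$ so that Theorem~\ref{thm:imposibility_MMS} applies with positive residual slack. Your explicit bounds $2n-1 < \sqrt{2\alpha} < 2n+1$ and the $\Theta(1/n^2)$ gap estimate merely make quantitative what the paper dispatches as ``standard calculus,'' and your reading of the statement as an asymptotic large-ratio impossibility matches the paper's intent.
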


\begin{proof}
Notice that in the proof of Theorem \ref{thm:imposibility_MMS} we have  $ \alpha = 2n^2 + 2n$. Also, by standard calculus we get $\lim_{n\to\infty} \Big(\frac{1}{2n-1} - \frac{1}{\sqrt{4n^2 + 4n}}\Big) = 0$. That is, for large enough $n$, it holds that $\frac{1}{\sqrt{4n^2 + 4n}} + \varepsilon > \frac{1}{{2n-1}} + \frac{\varepsilon}{2}$ and the impossibility follows directly by Theorem \ref{thm:imposibility_MMS}.
\end{proof}

%%%%%%%%%%%%%%%%%%%%%%%%%%%%%%%%%%%%%%%%%
%%%%%%%%%%%%%%%%%%%%%%%%%%%%%%%%%%%%%%%%%
\section{A Tight Algorithm}
\label{sec:main_algorithm}
%%%%%%%%%%%%%%%%%%%%%%%%%%%%%%%%%%%%%%%%%
%%%%%%%%%%%%%%%%%%%%%%%%%%%%%%%%%%%%%%%%%

In this section we present the main result of this work, an algorithm with tight temporal maximin share fairness guarantees. Given how nuanced the construction of the example in the proof of  Theorem \ref{thm:imposibility_MMS} was, it is not particularly surprising that  matching this $1/(2n-1)$ factor requires a fairly elaborate algorithm that performs careful book-keeping of who should get the next contested high-valued goods. 

Before discussing any details of our Deferred-Priority algorithm (Algorithm \ref{alg:main_alg}), we revisit the observation that we  made at the beginning of the proof of Theorem \ref{thm:imposibility_MMS}. 
When agents have positive values, if we aim for a nonzero temporal maximin share fairness guarantee, Algorithm \ref{alg:main_alg} should assign to each agent exactly one of the first $n$ goods. Indeed, if this is not the case,  there would be at least one agent $j$ who receives value $0$ (because $j$ got no goods) at the end of time step $n$, despite having a positive maximin share value $\bmu_j(n) \ge \beta_i = 1$. So, no matter how our algorithm works in general, the allocation of the first $n$ goods is ``special'' in the sense that it allows for extremely little flexibility. We call these first $n$ time steps Phase $0$. More generally, Algorithm \ref{alg:main_alg} will operate in phases; we want to ensure that during each phase every agent gets at least one good and the phases do not last for too long.

As a second general design goal, however, we want to allocate goods to agents who consider them high-valued as frequently and as uniformly as possible. Note that this is rather incompatible with the aforementioned goal of frequently giving goods to everyone. Our solution to that is to have two sets of counters (the entries of the vectors $H$ and $L$, introduced in line \ref{line:initialization}) that keep track of how many high- or low-valued goods, respectively, an agent affords to lose to others before we are in a situation where a ``bad'' sequence of goods inevitably destroys the temporal maximin share fairness guarantee. 

So, the general idea is that throughout the allocation the agents should gain higher priority the more value they lose to others; the corresponding  \textit{priority levels} are implicitly described by the entries of $H$ and $L$ (and later on explicitly defined for high-valued goods as $\mathcal{H}_{\ell}(t)$). Indeed, every time a good arrives and is allocated, the entries of $H$ and $L$ are updated accordingly.  The way these indices are updated enforces that one out of every $2n-1$ goods, in general, and one out of $n$---that later becomes $3n-2$---high-valued goods is allocated to each agent. These quantities may seem somewhat loose, but as it shown in our analysis, asking for more frequent allocations per agent, would not leave enough room for our competing goals to work simultaneously.

In our proofs and statements we often need to refer to the agents' bundles at different time steps. For clarity, we write $A_i^t$ (rather than just $A_i$) to denote the bundle of agent $i$ at the end of time step $t$. In fact, we use this notation to the statement of the main theorem of this section as well.  Recall also that $n_h(i, t) = |\{ g\in M \,|\, v_i(g) = \alpha_i \text{ and } g \text{ is one of the first $t$ goods that arrived}\}|$ is the number of high-valued goods agent $i$ has seen up to (and incuding) time $t$.

% \makeatletter
% \renewcommand{\ALG@name}{Protocol}
% \makeatother

\begin{algorithm}[h!t]
		\caption{Deferred-Priority$(v_1, \ldots, v_n; M)$ \\{\small {(The valuation functions, $v_i$, $i\in [n]$, are given via oracles; $M$ is given in an online fashion, one good at a time.)}}}
		\begin{algorithmic}[1]
            \State $\mathtt{phase} \leftarrow 0$\textbf{;} $\mathtt{low} \leftarrow 0$\textbf{;} $\mathtt{high} \leftarrow 0$  \textbf{;} $t\leftarrow 0$ \Comment{We initialize all of our counters.} %\textbf{;} $\mathtt{flag\_high} \leftarrow 0$
			\For{$i \in N$}
                \State $A_i \leftarrow \emptyset$\textbf{;} $H[i] \leftarrow n$\textbf{;} $L[i] \leftarrow 2n-1$\textbf{;} $\chi_i \leftarrow 0$ \Comment{Ag.~$i$ tolerates the loss of less than $n$ high-valued goods.} \label{line:initialization}
            \EndFor\vspace{1ex}
            \State \textbf{whenever a new good $g$ arrives:} 
            \State $t\leftarrow t + 1$
            \For{$i \in N$}
                \If{$v_i(g) = \alpha_i > 0$}
                    \State $H[i] \leftarrow H[i] - 1$ \label{line:reduce_H} \Comment{Potential loss of a high-valued good; $i$'s priority for high-valued goods is increased.}
                    % \State $\mathtt{flag\_high} \leftarrow 1$
                \Else
                    \State $L[i] \leftarrow L[i] - 1$ \Comment{Potential loss of a low-valued good; $i$'s priority for low-valued goods is increased.}
                    % \State $\mathtt{low} \leftarrow \mathtt{low} + 1$
                \EndIf
            \EndFor
            \State $N_h(g, t) \leftarrow \{i\in N \,|\, v_i(g) = \alpha_i  \text{ and } \chi_i = 0\}$ \label{line:N_h} \Comment{Potential recipients of $g$ as a high-valued good.} 
            \State $N_{\ell}(g, t) \leftarrow \{i\in N \,|\, v_i(g) = \beta_i \text{ and } \chi_i = 0\}$ \label{line:N_ell} \Comment{Potential recipients of  $g$ as a low-valued good.}
            \If{$N_h(g, t) \neq \emptyset$}\label{line:check_Nh} 
                \State $\mathtt{high} \leftarrow \mathtt{high} + 1$ \label{line:high_increased} \Comment{$g$ is allocated as a high-valued good.}
                \State $j = \argmin_{i \in N_h(g, t)} H[i]$ \label{line:choice_of_j} \Comment{Ag.~$j$ has the highest priority for $g$; ties are broken lexicographically.}
                \State $A_j \leftarrow A_j \cup \{g\}$ \Comment{The good is added to ag.~$j$'s bundle.}
                \State $H[j] \leftarrow H[j] + 3n-2$ %\ga{Actually, $3n-2$ suffices for the proofs and makes things a bit nicer for the corollaries.}  
                \label{line:increase_H} \Comment{Now ag.~$j$ tolerates the loss of at most $3n-2$ high-valued goods.}
                \State $\chi_j \leftarrow 1$ \label{line:chi_increased}\Comment{Ag.~$j$ will not get any more goods during the current phase.} 
            \Else
                \State $\mathtt{low} \leftarrow \mathtt{low} + 1$ \label{line:low_increased} \Comment{$g$ is allocated as a low-valued good.}
                \State $j = \argmin_{i \in N_{\ell}(g, t)} L[i]$ \Comment{Ag.~$j$ has the highest priority for $g$; ties are broken lexicographically.}
                \State $A_j \leftarrow A_j \cup \{g\}$  \Comment{The good is added to ag.~$j$'s bundle.}
                \State $L[j] \leftarrow  2n + t$ \label{line:last_in_line} \Comment{Ag.~$j$ now has the lowest priority among active agents for the rest of the phase.} 
                \If{$\mathtt{phase} = 0$} \label{line:chi_increased_0_condition}
                    \State $\chi_i \leftarrow 1$ \Comment{If this is phase $0$, ag.~$j$ will not get any more goods.} \label{line:chi_increased_0}
                \EndIf
            \EndIf
            \If{($\mathtt{phase} = 0$ \textbf{and} $\mathtt{low}+\mathtt{high} = n$) \textbf{or} ($\mathtt{phase} > 0$ \textbf{and} $\max\{\mathtt{low},\mathtt{high}\} = n$) \label{line:phase_condition}} %\\\hspace{3pt}\Comment{The conditions to conclude this phase are met.} 
                \State $\mathtt{phase} \leftarrow \mathtt{phase} + 1$ \Comment{The conditions to conclude this phase were met and we move to the next one.} 
                \State $\mathtt{low} \leftarrow 0$\textbf{;} $\mathtt{high} \leftarrow 0$ \Comment{We reset our counters.}
                % \textbf{;} $\mathtt{flag\_high} \leftarrow 0$  
                \For{$i \in N$}
                    \State $L[i] \leftarrow 2n-1$ \Comment{We reset the priority for low-valued goods.} 
                \EndFor
            \EndIf
		\end{algorithmic}
		\label{alg:main_alg}
\end{algorithm}

\pagebreak

\begin{theorem}\label{thm:main_algorithm}
Algorithm \ref{alg:main_alg} builds an allocation such that, for every $i\in N$: \begin{enumerate}[leftmargin=20pt,itemsep=3pt]
    \item $|A_i^n| = 1$, i.e., agent $i$ gets one of the first $n$ goods (assuming $m\ge n$; otherwise $|A_i^m| \le 1$). 
    \item At (the end of) any time step $t$, $A_i^t$ contains at least $\lfloor n_h(i, t) / (3n-2)  \rfloor$ high-valued goods. Moreover, if agent $i$ gets to see at least $n$ high-valued goods, at (the end of)  time step $t_{i0} = \min\{t \,|\, n_h(i, t)\ge n\}$, $A_i^{t_{i0}}$ contains at least $1$ high-valued good.
    \item At (the end of) any time step $t\ge n$, $|A_i^t| \ge\lfloor (t-n) / (2n - 1) \rfloor +1$, i.e., agent $i$ has  received at least one out of every $2n - 1$ goods they have seen after the first $n$ goods.
\end{enumerate}    
\end{theorem}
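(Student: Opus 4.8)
The plan is to read both counters off as explicit functions of the arrival history and then handle the three parts essentially separately, in the order 1, 3, 2, with Part 2 being the real work. The basic bookkeeping is this: writing $h_i(t)$ for the number of high-valued goods in $A_i^t$, lines \ref{line:reduce_H} and \ref{line:increase_H} give, at the end of every step $t$, the identity $H[i] = n - n_h(i,t) + (3n-2)\,h_i(t)$, since $H[i]$ is decremented exactly once per high-valued good $i$ sees and raised by $3n-2$ exactly once per high-valued good $i$ receives. Writing $n_h(i,t) = (3n-2)q + r$ with $0 \le r \le 3n-3$, this rearranges to $h_i(t) - q = (H[i] - n + r)/(3n-2)$, so Part 2(i), i.e.\ $h_i(t) \ge q = \lfloor n_h(i,t)/(3n-2)\rfloor$, is \emph{exactly equivalent} to the counter invariant $H[i] \ge n - (n_h(i,t) \bmod (3n-2))$. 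I would state this reduction first and use it throughout.

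Part 1 is the Phase-$0$ analysis. In Phase $0$ every allocated good sets $\chi_j \leftarrow 1$ for its recipient (line \ref{line:chi_increased} for a high-valued allocation, line \ref{line:chi_increased_0} for a low-valued one), so each of the first $n$ goods goes to a distinct agent; as long as some agent is still active there is always an eligible recipient (if no active agent values the good high, all active agents lie in $N_\ell$), so the phase runs for exactly $n$ steps and distributes the first $n$ goods one per agent, giving $|A_i^n| = 1$ (and the degenerate case $m < n$). This also fixes the phase structure used below: each phase $p \ge 1$ ends the first time $\max\{\mathtt{low},\mathtt{high}\} = n$, hence comprises at most $2n-1$ goods (before the triggering good both counters are $\le n-1$), and all $n$ agents are reactivated at each phase boundary.

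For Part 3 the key lemma is that every agent receives at least one good in each phase $p \ge 1$. If the phase ends with $\mathtt{high} = n$, the $n$ high-valued allocations go to $n$ distinct agents (each sets $\chi=1$), so all are served. If it ends with $\mathtt{low}=n$, I argue a low-valued good always goes to an agent not yet served this phase: an unserved active agent has $L[i] \le 2n-1$, while an agent already served by a low good at global time $t_1$ had its counter reset to $2n+t_1$ in line \ref{line:last_in_line}. Since the phase contains at most $2n-1$ goods and, being low-ending, at most $n-1$ high-steps, and since $t_1 \ge n+1$, tracking the difference of the two counters shows the served agent's $L$ stays strictly above any unserved agent's $L$ for the whole phase; hence low goods are funneled to unserved agents first, and a counting argument forbids an agent from staying unserved through $n$ low-allocations. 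Given the lemma, if $c$ phases past Phase $0$ have completed by time $t$ then $|A_i^t| \ge 1 + c$, while $t - n \le c(2n-1) + (2n-2) < (c+1)(2n-1)$ forces $\lfloor (t-n)/(2n-1)\rfloor \le c$, which is the claimed bound.

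Part 2 is the main obstacle. I would prove the counter invariant $H[i] \ge n - (n_h(i,t)\bmod(3n-2))$ together with statement 2(ii), that $i$'s first high-valued receipt occurs within its first $n$ sightings, by two genuinely different inductions. The difficulty is that $H$ is never reset, so the analysis is global across phases, and the required lower bound jumps up by $3n-2$ precisely at the sightings where $n_h(i,t)$ crosses a multiple of $3n-2$; at such a wrap the naive local accounting (losing a high good leaves $H[i]-(n-r)$ unchanged) breaks, and one must instead show the greedy rule $j=\argmin_{i\in N_h} H[i]$ has already served any agent in danger of violating the bound. For the later phases this amounts to bounding the gap, in sightings, between two consecutive receipts of $i$ by $3n-2$, which requires controlling how far the minimum active $H$-value can drift and how many other agents can ``cut in line'' between receipts; this is exactly where the constant $3n-2$ (as opposed to the $n$ used for the first receipt) is forced. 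The initial-phase induction is separate and leans on the Phase-$0$ structure (one good per agent, all $H$-values still near their initial value $n$) to seed the first receipt within $n$ sightings. I expect the most delicate point to be making the two inductions meet — handing off cleanly from the $n$-bound of the initial phase to the $3n-2$-bound of the later phases without leaving a gap.
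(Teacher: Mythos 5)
Your parts 1 and 3 are correct and essentially the paper's own arguments: Phase~0 deactivates each recipient, so the first $n$ goods go to $n$ distinct agents; and for later phases the reset $L[j] \leftarrow 2n+t$ in line~\ref{line:last_in_line} funnels low-valued goods to agents not yet served, so every phase (of length at most $2n-1$) serves everyone, giving $|A_i^t| \ge 1+\lfloor (t-n)/(2n-1)\rfloor$. (Your ``stays strictly above'' claim for the $L$-counters is stated more strongly than the paper's version and its one-line justification is hand-wavy, but the mechanism and conclusion are the same.) Your bookkeeping identity $H[i] = n - n_h(i,t) + (3n-2)\,h_i(t)$ is a nice explicit rendering of what the paper leaves implicit; note, though, that the single invariant the paper actually propagates is $H[i] \ge 1$ at the end of every step, which simultaneously yields the floor bound \emph{and} the first-receipt-within-$n$-sightings clause, whereas your invariant $H[i] \ge n - (n_h(i,t) \bmod (3n-2))$ is exactly equivalent to the floor bound alone and forces you to carry clause 2(ii) as a separate statement.

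The genuine gap is part~2 itself: your proposal correctly diagnoses the difficulty (the wrap points where the required bound jumps, the need to show the greedy $\argmin$ has already served endangered agents) but supplies no invariant or argument that closes it. The paper's proof rests on a \emph{joint}, Hall-type condition over all agents, condition~\eqref{eq:main_condition}: at the end of every step, at most $k$ agents have $H$-entry at most $k$, for every $0 \le k \le n$ (Lemma~\ref{lemma:H-condition}), combined with Claim~\ref{claim:sufficient_condition}, which uses the $3n-2$ slack to show that an agent with $H[j]=1$ cannot have $\chi_j = 1$ and hence receives the next good it values high. Your proposed per-agent invariant --- bounding the number of sightings between consecutive receipts of a fixed agent $i$ by $3n-2$ --- is not self-propagating: whether $i$ is served at a wrap point depends on how many other agents can cut in line, and that is controlled precisely by the joint level-set condition, not by any statement about $i$ alone; ``controlling how far the minimum active $H$-value can drift'' is a restatement of the problem, not a solution, and pursuing it would force you to reinvent \eqref{eq:main_condition}. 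You also only announce, but do not construct, the two inductions: in the paper the early one (Claim~\ref{claim:H-condition_<=n}) is an induction on the \emph{number of agents}, peeling off the recipient of $g_1$ and re-running the algorithm on a sub-instance with an adjusted initial $H$-vector, while the later one (Claim~\ref{claim:H-condition_>n}) is strong induction on $t$, whose crux is that any agent whose $H$-entry drops during a step in which the good is allocated low, or who received a high good earlier in the current phase, provably has $H \ge n$ and is irrelevant to \eqref{eq:main_condition}. None of this machinery appears in the proposal, so the substance of the theorem remains unproved.
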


\begin{proof}[Proof of parts 1.~and 3.]
We are going to show parts 1., 2., and 3.~separately. While parts 1.~and 3.~are relatively straightforward, part 2.~requires an elaborate analysis using delicate inductive arguments. During any phase, an agent $i$ is called \emph{active} if $\chi_i = 0$ and \emph{inactive} otherwise. As it is clear by the definition of the sets $N_h(g, t)$ and $N_{\ell}(g, t)$ (lines \ref{line:N_h} and \ref{line:N_ell}), no agent can receive any more goods during the current phase once it becomes inactive. 

We begin with part 1. of the theorem.
Lines \ref{line:chi_increased} and \ref{line:chi_increased_0_condition}-\ref{line:chi_increased_0} update $\chi_i$ to $1$ for any agent who receives \textit{any} good during Phase $0$, ensuring that no one gets more than $1$ good during this phase. Further, if $m\ge n$, lines \ref{line:high_increased} and \ref{line:low_increased}, combined with the \textit{first} part of the condition in line \ref{line:phase_condition}, ensure that exactly $n$ goods are allocated during Phase $0$, as either low-valued or high-valued goods. Therefore, $|A_i^n| = 1$, i.e., each agent gets exactly $1$ of the first $n$ goods.

Moving to part 3., let $q\in \mathbb{Z}^+$. We observe that, during Phase $q$, no more than $2n-1$ goods are allocated, as it is enforced by lines \ref{line:high_increased} and \ref{line:low_increased}, combined with the \textit{second} part of the condition in line \ref{line:phase_condition}. Next, note that, like in Phase $0$, if an agent $i$ receives a high-valued good (which triggers $\chi_i$ to become $1$ in line \ref{line:chi_increased}), becomes inactive and never receives another good during Phase $q$. However, unlike in Phase $0$, when agent $i$ receives a low-valued good at time $t$, it now stays active. Nevertheless, in such a case, agent $i$ becomes \textit{last} in the implicit priority list of active agents for low-valued goods, as now $L[i] = 2n + t$ (line \ref{line:last_in_line}) and $L[j] \le 2n + t'$ with $t'<t$, for $j\in N\setminus \{i\}$. The important observation here is that once this happens, agent $i$ cannot receive another good before every other agent is inactive or receives a low-valued good at a time $t''>t$. We are now ready for the main argument that implies part 3.

If Phase $q$ terminates because $\mathtt{high} = n$, then every agent received exactly one high-valued good. Otherwise, if Phase $q$ terminates because $\mathtt{low} = n$, we distinguish two simple cases. If no agent got more than one low-valued good in Phase $q$, then everyone received exactly one low-valued good (and at least one good, in general). So, assume that there is some agent $i$ who received at least $2$ low-valued goods during Phase $q$. By the preceding discussion, this means that every other agent became inactive at some point (by receiving a high-valued good) or received a low-valued good between the times when agent $i$ got its first and second low-valued good. In any case, if Phase $q$ terminates, then each agent has received at least $1$  out of at most $2n-1$ goods that were allocated during the phase. 
Combining this with the fact that every agent gets $1$ out of $n$ goods in Phase $0$ (by part 1.), we  
conclude that at the end of any time step $t$, $|A_i^t|\ge 1+ \lfloor (t -n)/ (2n - 1) \rfloor$.
\end{proof}

Most of the remaining section is dedicated to proving part 2.~of Theorem \ref{thm:main_algorithm}. Thus, it would be useful to give some intuition behind both the statement and its proof. In doing so, we will establish some additional notation and terminology. The obvious way to show that each agent gets at least $1$ out of the first $n$ high-valued goods they see and $1$ out of every $3n-2$ high-valued goods overall, is to show that it is always possible to allocate the goods in such a way so that $H[i]>0$, for all $i\in N$, at the end of each time step $t$ (i.e., right before the $(t+1)$-th good arrives). The reason, of course, is that $H[i]$ has been defined to contain the number of high-valued goods that agent $i$ can afford to lose before the desideratum of part 2.~of Theorem \ref{thm:main_algorithm} is violated. 

A straightforward necessary condition for $H[i]>0, i\in N$, to hold at the end of each time step $t$ is to have at most one agent $j$, such that $H[j] = 1$ before a new good arrives. To see this, assume that there are distinct $j, j'$ such that $H[j] = H[j'] = 1$ and that the next good $g$ that arrives is high-valued for everyone. Then, no matter how $g$ is allocated, at least one of $H[j], H[j']$ will hit $0$, meaning that enough high-valued goods were lost for one of the two agents for part 2.~of Theorem \ref{thm:main_algorithm} to fail. In fact, we can extend this necessary condition to having at most $k$ agents $j_1, \ldots, j_k$, such that $H[j_{\ell}] \le k$ before a new good arrives, for any $k\in[n]$. Indeed, if there are at least $k+1$ such agents and the next $k$ goods are high-valued for everyone, it is impossible to allocate them without making some coordinate(s) of vector $H$ equal to $0$.

The interesting thing here, is that this simple necessary condition always allows us to ``legally'' allocate at least one next good $g$. Roughly speaking, if we consider the agents who see $g$ as high-valued and give it to the agent with the smallest $H$ entry among them, 
then it is not very hard to see (we will prove it formally in Claim \ref{claim:sufficient_condition}) that it is not possible to end up with any agent $i$ having $H[i] = 0$ (recall that after receiving a good, an agent's $H$ entry increases significantly). So, if one could allocate each good and, at the same time, maintain the condition that $H$ contains at most $k$ entries that are $k$ or below, for all $k\in [n]$, then part 2.~of Theorem \ref{thm:main_algorithm} would follow. 
The tricky part is to make sure that this condition still holds after allocating \textit{any} good and this is the core of the technical difficulty of proving the theorem, mainly because we often need to allocate goods that are not low-valued for everyone to agents who see them as low-valued. In fact, the latter is absolutely necessary for part 3.~of the theorem shown above.

In order to formalize things, we introduce the following notation for the \textit{level sets} that contain all agents with the same priority according to $H$ at any given time:
\[\mathcal{H}_{\ell}(t) = \{i \in N \,|\, H[i] = \ell \text{ at the end of time step } t\}\,.\]
With this notation, the above necessary and sufficient condition for being able to legally extend the allocation at time $t+1$ becomes
\begin{equation}\label{eq:main_condition}
	\bigg| \bigcup_{\ell=0}^{k} \mathcal{H}_{\ell}(t) \bigg| \le k, \text{ for all } 0\le k \le n \,.
\end{equation}

In the discussion above, we imply that maintaining \eqref{eq:main_condition} is easier if, whenever a good is viewed as high-valued by someone, it is always allocated as a  high-valued good. Indeed, this is the case: if we only allocated goods so as to maximize the social welfare, then  we would be able to maintain \eqref{eq:main_condition} for every $t$, even if in line \ref{line:increase_H} we only added $n$ rather than $3n-2$. The technical reason why will become clear in the proofs of Claims \ref{claim:H-condition_<=n} and \ref{claim:H-condition_>n}, but the issue with this is that it would mean that agents who mostly see low-valued goods might have to wait arbitrarily long before getting anything. Hence, we add $3n-2$ in line \ref{line:increase_H}, to give us some extra room to keep every agent content. From a technical point of view, within our proofs we typically need to decouple the two cases that cause changes to entries of $H$ (allocating a good that is not globally low-valued as high-valued \textit{versus} as low-valued), as they are qualitatively very different.

The following lemma states the fact that condition \eqref{eq:main_condition} holds for all time steps $t\ge 0$. As its proof is fairly long and complicated, it is deferred to the Section \ref{subsec:G-lemma_proof}. 

\begin{lemma}\label{lemma:H-condition}
	At the end of any time step $t\ge 0$, we have $\big| \bigcup_{\ell=0}^{k} \mathcal{H}_{\ell}(t) \big| \le k$, for all $0\le k\le n$.
\end{lemma}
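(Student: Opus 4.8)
The plan is to prove the equivalent ``staircase'' form of \eqref{eq:main_condition}: writing $c_k(t)=\big|\bigcup_{\ell=0}^{k}\mathcal{H}_\ell(t)\big|$ for the number of agents with $H[i]\le k$ at the end of step $t$, condition \eqref{eq:main_condition} says $c_k(t)\le k$ for all $k$, which is equivalent to the sorted entries $h_1\le\dots\le h_n$ of $H$ satisfying $h_k\ge k$ for every $k\in[n]$ (the $(k+1)$-st smallest entry exceeds $k$ exactly when at most $k$ entries are $\le k$). I would prove this by induction on $t$. The base case $t=0$ is immediate since $H[i]=n$ for all $i$. For the step I track the only two operations a new good $g$ performs on $H$: line \ref{line:reduce_H} decrements $H[i]$ by one for every $i\in D:=\{i:v_i(g)=\alpha_i\}$, and, precisely when $g$ is allocated as high-valued, line \ref{line:increase_H} raises the recipient's entry by $3n-2$. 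Because decremented agents behave completely differently depending on whether they have already received a boosted good, the argument splits into two drastically different inductions, one for Phase $0$ ($t\le n$) and one for the later phases ($t>n$).

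For $t>n$ the engine of the proof is a slack estimate: every \emph{inactive} agent has $H[i]\ge n$. Indeed, upon receiving its high-valued good such an agent was raised to at least $3n-2$ by line \ref{line:increase_H}, and since a phase contains at most $2n-1$ goods (as shown for Part 3 of Theorem \ref{thm:main_algorithm}) it can be decremented at most $2n-2$ times afterwards. With this in hand I would treat the two allocation types. If $g$ is allocated as high-valued, decrementing a subset of entries lowers each order statistic by at most one, so after line \ref{line:reduce_H} we have $h_k\ge k-1$, while the boosted recipient $j$ reaches $H[j]\ge 3n-2>n$; to recover the missing unit I would argue by contradiction that if $c_k(t+1)\ge k+1$ for some $k\le n-2$ then the agents with $H[i]\le k+1$ before $g$ number exactly $k+1$ and $j$ has $H[j]\ge k+2$, so any active agent among them would be a member of $N_h(g,t)$ with strictly smaller $H$-entry than $j$, contradicting the greedy choice of line \ref{line:choice_of_j}, while no inactive agent can sit at such a low level by the slack estimate (the levels $k\in\{n-1,n\}$ are automatic, since $j$ alone exceeds them). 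If instead $g$ is allocated as low-valued, then $N_h(g,t)=\emptyset$, so every agent of $D$ is inactive with $H[i]\ge n$; decrements can therefore only populate levels $k\ge n-1$, and the sole delicate point, $k=n-1$, I would settle by noting that distinct agents turn inactive at distinct steps and hence cannot all sit at the minimum value $n$ at once, so at least one keeps $H[i]\ge n$, giving $c_{n-1}(t+1)\le n-1$.

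Phase $0$ needs an entirely separate induction, because an agent that received a \emph{low} good is never boosted and can carry a small $H$-entry, so the slack estimate fails. Its replacement is the structural fact that each agent consumes exactly one of the first $n$ goods (Part 1 of Theorem \ref{thm:main_algorithm}): an agent currently holding a low good has seen at most $n-1$ high-valued goods, hence keeps $H[i]\ge 1$; moreover, while an agent is active every high-valued good it sees is \emph{necessarily} allocated as high-valued (the agent lies in $N_h(g,t)$), so an active agent's number of high-sightings never exceeds the number of high allocations made so far. I would combine these observations to bound $c_k(t)$ for all $k$: the two facts directly control the active agents and the low-good holders, and the boosted high-good holders sit harmlessly above level $n$. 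The lemma then follows by combining the Phase $0$ and later-phase inductions.

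The hardest part will be the low-valued allocation case, where entries of $H$ fall with no compensating boost and one must certify that every decremented agent has room to spare. For $t>n$ this reduces to the slack bound $H[i]\ge n$ together with the sharp bookkeeping at the boundary level $k=n-1$; for $t\le n$ the same slack is simply unavailable and has to be replaced by the counting argument that each agent has already spent one of the first $n$ goods. Reconciling these two qualitatively different mechanisms—rather than any single calculation—is the real crux, and it is precisely why the induction must be carried out separately for the initial and the later phases.
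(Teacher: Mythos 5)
For $t>n$ your plan coincides with the paper's proof of Claim~\ref{claim:H-condition_>n}: strong induction on $t$, the same slack estimate for inactive agents ($H[i]\ge n$, derived exactly as in \eqref{eq:H_for_chi=0}, including the implicit use of the induction hypothesis to get $H_{t_h-1}[i]\ge 1$ before the boost), and for goods allocated as high-valued a counting-plus-greedy contradiction mirroring the paper's Case~2. Where you genuinely diverge is Phase~0: the paper proves Claim~\ref{claim:H-condition_<=n} by induction on the \emph{number of agents}, generalizing the algorithm to an arbitrary initial vector $H_0$ with $H_0[i]\ge n$, peeling off the recipient of $g_1$, and transferring the condition through the shifted correspondence $\mathcal{H}_{\ell}(t)\subseteq\mathcal{H}'_{\ell-1}(t-1)\cup\{1\}$. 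Your replacement---each agent consumes exactly one Phase-0 good, and every high-valued good seen by an \emph{active} agent is necessarily allocated as high-valued---does work and is arguably more transparent, but your write-up stops exactly at the crux: fact (a) as you state it ($H[i]\ge 1$ for low-recipients) is too weak, since it would permit two low-recipients at level $1$, violating $c_1(t)\le 1$. The combination you wave at can be performed as follows: order the low-recipients by inactivation time $s_1<\dots<s_\lambda$; the $r$-th one has seen at most $\eta(s_r)+(t-s_r)=t-r$ high-valued goods (because $\eta(s_r)=s_r-r$ and, while active, its high-sightings were all allocated high), hence $H\ge n-t+r$; active agents satisfy $H\ge n-\eta(t)=n-t+\lambda$; high-recipients sit at level $\ge n$. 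Counting these three groups yields $c_k(t)\le k$ for every $k$. This ``staircase of slacks'' indexed by inactivation order is the Phase-0 analogue of the remark you only make in the $t>n$ case, and it must appear explicitly; without it your Phase-0 argument is an assertion, not a proof. (Using parts 1 and 3 of Theorem~\ref{thm:main_algorithm} is legitimate---the paper proves them before the lemma---so there is no circularity.)

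Two local steps of your $t>n$ argument need repair. First, in the high-valued case, ``any active agent among them would be a member of $N_h(g,t)$'' is false as written: an active agent at level $\le k+1$ with $v_i(g)=\beta_i$ is among the $k+1$ agents but not in $N_h(g,t)$. The fix is a dichotomy: if at least one of the agents at level $\le k$ after the step saw $g$ as high-valued, it lies in $N_h(g,t)$ with entry $\le k+1<k+2\le H_t[j]$, contradicting line~\ref{line:choice_of_j}; if none did, then all of them were already at level $\le k$ before the step, so $c_k(t)\ge k+1$ contradicts the induction hypothesis directly. Second, in the low-valued case your worry at $k=n-1$ is spurious: your own slack estimate already bounds every decremented (necessarily inactive) agent by $H[i]\ge n$ \emph{at the end of the current step}, decrement included, so levels $\le n-1$ are untouched and $c_k(t+1)\le c_k(t)\le k$ for all $k\le n-1$ immediately---this is exactly the paper's Case~1. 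Worse, the patch you propose would not close the case on its own: showing that ``at least one agent keeps $H[i]\ge n$'' does not bound $c_{n-1}(t+1)$ when $n-1$ agents may already occupy levels $\le n-1$ before the step; fortunately the point is moot. With these repairs and the Phase-0 counting spelled out, your proof goes through.
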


At this point we are ready to prove part 2.~of Theorem \ref{thm:main_algorithm}.

\begin{proof}[Proof of part 2.~of Theorem \ref{thm:main_algorithm}]
In the discussion preceding this proof, we claimed that condition \eqref{eq:main_condition}
is sufficient, and we briefly argued about it in a rather hand-wavy way. Here we begin by formalizing this fact. It should be noted that Claim \ref{claim:sufficient_condition} does not say that by allocating the $(t+1)$-th good $g$ we ensure that the condition holds for time step $t+1$; this is shown separately in Lemma \ref{lemma:H-condition}. The claim barely states that whenever condition \eqref{eq:main_condition} holds and things have gone well in the past, Algorithm \ref{alg:main_alg} can allocate the next good without violating the guarantees we aim to show. 

\begin{claim}\label{claim:sufficient_condition}
Let $t$ be any time step in $\{0, 1, \ldots, m-1\}$ and let $g$ be the $(t+1)$-th good. Assuming that all entries of $H$ have remained positive at the end of all time steps up to $t$, condition \eqref{eq:main_condition} guarantees that Algorithm \ref{alg:main_alg} can allocate $g$ without any entry of $H$ becoming $0$ at the end of time step $t+1$.
\end{claim}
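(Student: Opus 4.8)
The plan is to track exactly how the vector $H$ can change while the single good $g$ is processed at time step $t+1$, and to reduce the statement to ruling out one specific ``bad event.'' Inspecting the algorithm, the only modifications to $H$ during time step $t+1$ are: (i)~line~\ref{line:reduce_H} decrements $H[i]$ by $1$ for every agent $i$ with $v_i(g)=\alpha_i$; and (ii)~if $g$ is allocated as a high-valued good, line~\ref{line:increase_H} raises the recipient's entry by $3n-2$. The recipient is therefore never a problem, since its entry ends at $H[j]-1+(3n-2)\ge 3n-2>0$ (using $H[j]\ge 1$ from the positivity hypothesis). Hence the only way some entry can reach $0$ at the end of step $t+1$ is if an agent $i$ had $H[i]=1$ at the end of step $t$, saw $g$ as high-valued, and did not receive $g$. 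I would first invoke condition \eqref{eq:main_condition} with $k=0$ and $k=1$: since $\mathcal{H}_0(t)=\emptyset$ and $|\mathcal{H}_0(t)\cup\mathcal{H}_1(t)|\le 1$, there is at most one agent, say $j^\star$, with $H[j^\star]=1$, and every other agent has $H\ge 2$. If $j^\star$ does not exist, or exists but does not see $g$ as high-valued, the claim is immediate, since every decremented entry is $\ge 2-1=1$.

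It remains to handle the case where $j^\star$ exists and sees $g$ as high-valued. The plan here is to show that $j^\star$ must then actually receive $g$, so that its entry is boosted rather than driven to $0$. The key is the following sub-claim, which I expect to be the main obstacle: at the end of any time step $t$, every \emph{inactive} agent has $H\ge 2$ (for $n\ge 2$; the case $n=1$ is trivial). Granting this, since $H[j^\star]=1$, agent $j^\star$ must be active, so $j^\star\in N_h(g,t)$ and in particular $N_h(g,t)\neq\emptyset$; thus line~\ref{line:check_Nh} allocates $g$ as a high-valued good. Because $H[j^\star]=1$ is the strict minimum of $H$ (all other entries are $\ge 2$), line~\ref{line:choice_of_j} selects $j=j^\star$ as the recipient, its entry is boosted, and every other agent seeing $g$ as high-valued keeps $H\ge 2-1=1>0$. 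This settles the claim.

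To prove the sub-claim I would use the phase structure together with the magnitude of the boost in line~\ref{line:increase_H} and the fact (established in the proof of part~3) that a phase contains at most $2n-1$ goods (exactly $n$ in Phase~$0$). An agent is inactive at the end of $t$ only because it received a good \emph{within the current phase}, since $\chi$ is reset between phases and hence by the end of a phase-terminating step no agent is inactive. If it became inactive by receiving a high-valued good, then right after receipt $H\ge 3n-2$, and it can have been decremented at most once per subsequent good of the phase, i.e.\ at most $2n-2$ times, leaving $H\ge (3n-2)-(2n-2)=n\ge 2$. The only remaining possibility is an agent that became inactive by receiving a \emph{low}-valued good during Phase~$0$ (in later phases a low-valued receipt does not set $\chi$, by line~\ref{line:chi_increased_0_condition}); here the boost does not apply, so at the end of any step $t\le n-1$ of Phase~$0$ we have $H[i]=n-n_h(i,t)$, and since the good it received is low-valued for $i$, at most $t-1\le n-2$ of the first $t$ goods are high-valued for $i$, giving $H[i]\ge n-(n-2)=2$. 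In all cases an inactive agent retains $H\ge 2$, which is exactly what the main argument requires.
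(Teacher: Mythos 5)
Your proof is correct and follows essentially the same route as the paper's: condition \eqref{eq:main_condition} reduces everything to the unique agent with $H$-entry $1$, who is then shown to be active and hence the recipient of $g$, via exactly the paper's arithmetic combining the phase-length bound $2n-1$ with the boost of line~\ref{line:increase_H} (i.e., $(3n-2)-(2n-2)=n$). If anything, your sub-claim is slightly more thorough than the paper's contradiction argument, since you also handle agents made inactive in Phase~$0$ by receiving a \emph{low}-valued good (where the boost does not apply), a case the paper passes over silently; your bound $H[i]=n-n_h(i,t)\ge n-(t-1)\ge 2$ for $t\le n-1$ correctly covers it.
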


\begin{proof}[Proof of Claim \ref{claim:sufficient_condition}]
\renewcommand\qedsymbol{{\scriptsize \textbf{Cl.~\ref{claim:sufficient_condition}} $\boxdot$}}
For the sake of clarity, here we make the dependency of $H$ on $t$ explicit and write $H_t[i]$ to denote $H[i]$ at the end of time step $t$.
Assume that $\big| \bigcup_{\ell=0}^{k} \mathcal{H}_{\ell}(t) \big| \le k$, for all $k\in [n]$. In particular, $| \mathcal{H}_1(t) | \le 1$. If $| \mathcal{H}_1(t) | = 0$, then $H_t[i]\ge 2$ for all $i\in N$ and clearly, no matter how good $g$ is allocated, $H_{t+1}[i]\ge 1$ for all $i\in N$. Similarly, if $| \mathcal{H}_1(t) |  = 1$ and $j\in N$ is the unique agent such that $H_{t}[j] =  1$ but $v_j(g) = \beta_j$, we have $H_{t+1}[j] = H_{t}[j] = 1$ as well as $H_{t+1}[i]\ge H_{t}[i] - 1\ge 1$ for all $i\in N\setminus\{j\}$, like before, no matter how $g$ is allocated. 

The interesting case here is when $|\mathcal{H}_1(t)|  = 1$,  $j\in N$ is the unique agent such that $H_{t}[j] =  1$ and $v_j(g) = \alpha_j$. Again, for $i\in N\setminus\{j\}$, $H_{t+1}[i]\ge 1$ but now we must show that Algorithm \ref{alg:main_alg} will add $g$ to $A_j$. Given that $j$ has the highest priority (i.e., lowest entry in $H$, which has temporarily dropped to $0$), for $j$ to get $g$ it suffices to show that $\chi_j = 0$. Towards a contradiction, assume this is not the case, i.e., $j$ has received a high-valued good at time $t'$ which belongs to the same phase as $t$. Since each phase has at most $2n-1$ time steps (see the proof of part 3.~of Theorem \ref{thm:main_algorithm}), $t-t' \le 2n-2$. But then, 
\[H_{t}[j] \ge H_{t'}[j] - (2n-2) = H_{t'-1}[j] - 1 + 3n -2 -2n +2 \ge n \,,\]
contradicting the choice of $j$ . We conclude that an agent $j$ with $H_{t}[j] =  1$ cannot have received a high-valued good in the phase that includes $t$, thus $\chi_j = 0$. Therefore, $j$ is the agent who gets $g$ in line \ref{line:choice_of_j}, and $H_{t+1}[j] = H_{t}[j] - 1 + 3n -2= 3n -2 > 0$, completing the proof.
\end{proof}

Recall that, by the design of the priority vector $H$, in order to show that each agent gets at least $1$ out of the first $n$ high-valued goods they see and $1$ out of every $3n-2$ high-valued goods overall, it suffices  to show that it is always possible to allocate the goods so that $H[i]>0$, for all $i\in N$, at the end of each time step $t$. 
By combining Claim \ref{claim:sufficient_condition} with Lemma \ref{lemma:H-condition}, which shows that condition \eqref{eq:main_condition} is maintained throughout the execution of Algorithm \ref{alg:main_alg}, we have exactly  that. The algorithm allocates all goods without any entry of $H$ becoming $0$ at the end of any time step. Equivalently, by the definition of how $H$ is updated, agent $i$ receives at least $1$ high-valued good by time $t_{i0} = \min\{t \,|\, n_h(i, t)\ge n\}$ and at least $1$ out of every $3n-2$ high-valued goods they see after that, for a total of at least $\lfloor n_h(i, \tau) / (3n-2)  \rfloor$ high-valued goods by the end of a time step $\tau\ge 0$.
\end{proof}

Now, using Theorem \ref{thm:main_algorithm}, we can argue about the temporal maximin share guarantees of the Deferred-Priority algorithm (Algorithm \ref{alg:main_alg}). Recall from Section \ref{sec:prelims}
that an agent $i$ is of type 1 when $\alpha_i > \beta_i =1$, it is of type 2 when $\alpha_i = \beta_i =1$ and it is of type 3 when $1 = \alpha_i > \beta_i = 0$. As the arguments needed from different types are somewhat different, we will state the corresponding guarantees separately. 

\begin{corollary}\label{cor:types2-3-4}
Any agent $i$ of type $k\in \{2, 3\}$ receives at least a constant fraction of its temporal maximin share by Algorithm \ref{alg:main_alg}. In particular, $v_i(A_i^t) \ge \bmu_i(t) / k$, for any time step $t\ge 0$. 
\end{corollary}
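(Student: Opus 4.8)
The plan is to treat the two types separately, in each case computing the maximin share exactly (it has a clean closed form for these restricted valuations) and matching it against the per-agent guarantees of \cref{thm:main_algorithm}; the only real work is a pair of elementary floor inequalities. First I would record the shares. Let $S_t$ denote the set of the first $t$ goods. For a type 2 agent every good is worth $1$, so $v_i(S_t)=t$ and the most balanced partition of $S_t$ into $n$ parts has smallest part of size $\lfloor t/n\rfloor$; hence $\bmu_i(t)=\lfloor t/n\rfloor$. For a type 3 agent a good is worth $1$ if it is high-valued and $0$ otherwise, so only the $n_h(i,t)$ high-valued goods matter and, by the same balancing argument, $\bmu_i(t)=\lfloor n_h(i,t)/n\rfloor$. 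In both cases, whenever $\bmu_i(t)=0$ the inequality $v_i(A_i^t)\ge\bmu_i(t)/k$ is trivial, so I may assume $t\ge n$ (type 2) or $n_h(i,t)\ge n$ (type 3).

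For type 2, part 3 of \cref{thm:main_algorithm} gives $|A_i^t|\ge\lfloor(t-n)/(2n-1)\rfloor+1$ for $t\ge n$, and since every good is worth $1$ this quantity is exactly $v_i(A_i^t)$. Writing $q=\lfloor t/n\rfloor$, it therefore suffices to prove the integer inequality $\lfloor(t-n)/(2n-1)\rfloor+1\ge q/2$. Using $t\ge qn$ together with $n/(2n-1)\ge 1/2$, I can lower-bound the left-hand side by $1+\lfloor(q-1)/2\rfloor$, and a short case split on the parity of $q$ then shows $1+\lfloor(q-1)/2\rfloor\ge q/2$ (with equality exactly when $q$ is even). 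This yields $v_i(A_i^t)\ge\bmu_i(t)/2$.

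For type 3 only high-valued goods contribute to $v_i$, so part 3 gives no information; instead I would use the sharper statement established \emph{inside} the proof of part 2 of \cref{thm:main_algorithm}, namely that agent $i$ receives a high-valued good by time $t_{i0}=\min\{t\,|\,n_h(i,t)\ge n\}$ and then at least one further high-valued good for every $3n-2$ high-valued goods it sees afterwards. Consequently $v_i(A_i^t)\ge 1+\lfloor(n_h(i,t)-n)/(3n-2)\rfloor$ once $n_h(i,t)\ge n$. Setting $q=\lfloor n_h(i,t)/n\rfloor$, it remains to show $1+\lfloor(n_h(i,t)-n)/(3n-2)\rfloor\ge q/3$; bounding as before (using $n_h(i,t)\ge qn$ and $n/(3n-2)\ge 1/3$) reduces this to $1+\lfloor(q-1)/3\rfloor\ge q/3$, which follows from a case split on $q\bmod 3$, giving $v_i(A_i^t)\ge\bmu_i(t)/3$.

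I expect the only genuinely delicate point to be the type 3 case. The closed-form bound $\lfloor n_h(i,t)/(3n-2)\rfloor$ quoted in the \emph{statement} of part 2 is too weak here: for example, with $n=3$ and $n_h(i,t)=12$ it yields only $1$, whereas $\bmu_i(t)/3=4/3$. Thus the argument must invoke the finer ``one high-valued good by $t_{i0}$, one per $3n-2$ thereafter'' guarantee that is actually proved for part 2, rather than its conservative closed-form summary. Beyond this, the only care required is to keep the two floor inequalities integer-exact; the corresponding real-valued approximations fail for small $t$, which is why the parity/residue case splits are needed instead of a direct estimate.
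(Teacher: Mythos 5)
Your proof is correct and takes essentially the same route as the paper: for type 2 it matches parts 1 and 3 of \cref{thm:main_algorithm} against $\bmu_i(t) = \lfloor t/n \rfloor$, and for type 3 it re-indexes time by high-valued goods and invokes the ``one high-valued good by $t_{i0}$, one per $3n-2$ thereafter'' guarantee, which is exactly the substitution ($\tau$ for $t$, $3n-2$ for $2n-1$, part 2 for part 3) that the paper performs; your residue-class floor inequalities are just a more unified packaging of the paper's range-by-range case analysis. Your closing observation is also accurate: the closed form $\lfloor n_h(i,t)/(3n-2) \rfloor$ from the statement of part 2 would not suffice, and the paper's own proof of the corollary likewise relies implicitly on the finer bound $1 + \lfloor (n_h(i,t)-n)/(3n-2) \rfloor$ established inside the proof of part 2, so your explicit appeal to it is the same move made precise.
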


\begin{proof}
First let agent $i$ be of type 2. We will bound the value $i$ gets during the allocation sequence induced by Algorithm \ref{alg:main_alg}. At the end of any time step $t<n$, we have $\bmu_i(t) = 0$, so the statement trivially holds. 
If, instead, $n\le t < n+ (2n-1)$, then by part 1.~of Theorem \ref{thm:main_algorithm} we have $|A_i^t|\ge 1$ and so, $v_i(A_i^t)\ge 1$, whereas $\bmu_i(t) \le \lfloor (3n-1)/n \rfloor = 2$. 
Finally, we may assume that $n+k(2n-1) \le t < n+ (k+1)(2n-1)$, for some $k\in \mathbb{Z}_{>0}$. Then, by part 3.~of Theorem \ref{thm:main_algorithm}, agent $i$ has received at least one out of every $2n-1$ goods they have seen after Phase $0$ and  so, $v_i(A_i^t)=|A_i^t|\geq \lfloor (t-n)/(2n-1) \rfloor + 1 = k + 1$, whereas, by the definition of maximin share, 
\[\bmu_i(t) = \left\lfloor \frac{t}{n} \right\rfloor < \left\lfloor \frac{n+ (k+1)(2n-1)}{n} \right\rfloor = \left\lfloor \frac{2(k+1)n + (n-k-1)}{n} \right\rfloor \le 2(k+1)\,.\]

Next, assume that agent $i$ is of type $3$. Given that low-valued goods are completely irrelevant to agent $i$,  we can consider a time alternative $\tau$ that starts at $0$, like $t$, but only increases when a high-valued good for $i$ arrives. That is, while $t$ reflects how many goods have arrived in general, $\tau$ reflects how many high-valued goods with respect to agent $i$ have arrived instead. We can repeat the exact same analysis we did for agents of type 2, but using $\tau$ instead of $t$, $3$ instead of $2$ for the factor,  $3n-2$ whenever the quantity $2n-1$ was used, and by invoking part 2.~of Theorem \ref{thm:main_algorithm} instead of part 3. %\qed  
\end{proof}

\begin{corollary}\label{cor:type1}
Any agent $i$ of type 1 receives at least a $1/(2n-1)$ fraction of its temporal maximin share by Algorithm \ref{alg:main_alg}., i.e., $v_i(A_i^t) \ge \bmu_i(t) /(2n-1)$, for any time step $t\ge 0$, which improves to $\Omega(1)$ from time $t_{i0}$ onward (recall that $t_{i0} = \min\{t \,|\, n_h(i, t)\ge n\}$). 
\end{corollary}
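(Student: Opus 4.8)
The plan is to fix a type-1 agent $i$ (so $\beta_i = 1 \le \alpha_i$), and at a given time step $t$ abbreviate $h = n_h(i,t)$ and $\ell = t - h$, so the value of everything seen so far is $\alpha_i h + \ell$; write $a$ and $b$ for the numbers of high- and low-valued goods in $A_i^t$, so $v_i(A_i^t) = \alpha_i a + b$. First I would discard the trivial range $t < n$, where $\bmu_i(t) = 0$. The whole argument then rests on two upper bounds for the maximin share together with the two relevant guarantees of Theorem \ref{thm:main_algorithm}: the \emph{average bound} $\bmu_i(t) \le (\alpha_i h + \ell)/n$, valid always, and a \emph{refined bound} $\bmu_i(t) \le \ell/(n-h)$ valid when $h < n$, which holds because any partition into $n$ bundles must leave at least $n - h$ bundles free of high goods, and the cheapest of those has value at most $\ell/(n-h)$. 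From Theorem \ref{thm:main_algorithm} I would use part 3, $a + b \ge \lfloor (t-n)/(2n-1)\rfloor + 1$ (hence $v_i(A_i^t) \ge a + b$ since $\alpha_i \ge 1$), and part 2, $a \ge \lfloor h/(3n-2)\rfloor$ with $a \ge 1$ once $h \ge n$.

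The two regimes are separated exactly at $t_{i0}$. For $t < t_{i0}$ (equivalently $h < n$) I would combine the refined bound with part 3: since every good is worth at least $1$, $(2n-1)\,v_i(A_i^t) \ge (2n-1)(\lfloor (t-n)/(2n-1)\rfloor + 1) \ge t - n + 1$, and it suffices to check $t - n + 1 \ge \ell/(n-h) = (t-h)/(n-h)$. Writing $d = n - h \ge 1$, this is $(t-n+1)d \ge t - n + d$, i.e. $d(t-n) \ge t-n$, which holds since $t \ge n$ and $d \ge 1$. This yields $v_i(A_i^t) \ge \bmu_i(t)/(2n-1)$, and the equality case $d = 1$ shows the factor is exactly the one forced by the impossibility construction, so this regime is tight.

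For $t \ge t_{i0}$ (equivalently $h \ge n$) I would instead argue the $\Omega(1)$ improvement using the average bound and both guarantees of part 2. Since $\max(1,\lfloor x\rfloor) \ge x/2$ for all $x \ge 0$, part 2 gives $a \ge \max(1, \lfloor h/(3n-2)\rfloor) \ge h/(2(3n-2))$, so $\alpha_i h \le 2(3n-2)\,\alpha_i a \le 2(3n-2)\,v_i(A_i^t)$ and thus $\alpha_i h/n < 6\,v_i(A_i^t)$. Using $h \ge n$ in part 3, $(2n-1)\,v_i(A_i^t) \ge (2n-1)(a+b) \ge t - n + 1 \ge \ell + 1 > \ell$, so $\ell/n < 2\,v_i(A_i^t)$. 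Adding, $\bmu_i(t) \le (\alpha_i h + \ell)/n < 8\,v_i(A_i^t)$, i.e. $v_i(A_i^t) > \bmu_i(t)/8 = \Omega(1)\cdot\bmu_i(t)$. For $n \ge 5$ this already implies the claimed $1/(2n-1)$ factor on this range; for the remaining $n \in \{2,3,4\}$ (with $n=1$ trivial) I would verify $v_i(A_i^t) \ge \bmu_i(t)/(2n-1)$ by a direct finite check that replaces the average bound with the exact maximin value, whose high-good part is $\alpha_i \lfloor h/n\rfloor$ and is matched by $a \ge \max(1,\lfloor h/(3n-2)\rfloor)$.

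The conceptual crux, and the step I expect to be most delicate, is the choice of maximin-share upper bound in each regime. The average bound $(\alpha_i h + \ell)/n$ is too generous when high goods dominate but cannot be split evenly (e.g. all goods high and $h$ not a multiple of $n$), which is precisely why the tight $1/(2n-1)$ factor in the low-high regime must come from the refined ``high-free bundles'' bound rather than the average, and why reconciling the two regimes at small $n$ needs the exact maximin value. Making Parts 2 and 3 line up---the high mass being controlled through the denominator $3n-2$ and the total mass through $2n-1$, with $\alpha_i \ge 1$ bridging the two---is the arithmetic heart of the argument.
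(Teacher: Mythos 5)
Your two-regime argument is correct except at its very last step, and it takes a genuinely different route from the paper. The paper splits on $\kappa$, the number of high-valued goods agent $i$ actually \emph{holds} (Case $\kappa=0$ versus $\kappa\ge1$), and in both cases it inverts the guarantees of Theorem \ref{thm:main_algorithm}: from the bundle counts it upper-bounds how many goods the agent can have \emph{seen} (at most $n+(\kappa+\lambda-1)(2n-1)+(2n-2)$ in total, at most $n+(\kappa-1)(3n-2)+(3n-3)$ of them high) and then bounds $\bmu_i(t)\le v_i(S)/n$, obtaining a uniform factor $4$ once $\kappa\ge1$. You instead go forward from $t$: your first regime ($h<n$, i.e.\ $t<t_{i0}$) is correct and arguably cleaner than the paper's Case 1 --- both use the high-free-bundle bound $\bmu_i(t)\le \ell/(n-h)$, but your direct comparison $(2n-1)v_i(A_i^t)\ge t-n+1\ge (t-h)/(n-h)$, reducing to $(t-n)(d-1)\ge0$, avoids the paper's detour through an upper bound on $t$ in terms of $\lambda$. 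Your second-regime $\Omega(1)$ bound is also sound, though your constant is $1/8$ where the paper gets $1/4$, because you only use the crude $\ell<(2n-1)v_i(A_i^t)$ rather than the sharper inverse count of low-valued goods seen.

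The genuine gap is the residue $n\in\{2,3,4\}$ in the second regime. First, the deferred ``direct finite check'' is not finite: $t$, $h$, $\ell$, and $\alpha_i$ are unbounded, so this is an unexecuted case analysis, not a verification. Second, the one structural idea you offer for it is false: the exact maximin share of a $2$-value agent does \emph{not} decompose with high-good part $\alpha_i\lfloor h/n\rfloor$. Take $n=2$, $h=3$, $\ell=12$, $\alpha_i=2$: the partition $\{\alpha_i,\alpha_i,\alpha_i,1,1,1\}$ versus nine low goods gives $\bmu_i = 9$, while $\alpha_i\lfloor h/n\rfloor + \lceil \ell/n\rceil = 2+6 = 8$; low goods can compensate high-poor bundles, so the MMS tracks the average rather than any floor-based split (and this example lies in your regime, $h\ge n$). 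Third, your own constants cannot close $n\in\{2,3\}$: your method yields $\bmu_i(t) < \big(2(3n-2)/n + (2n-1)/n\big)v_i(A_i^t)$, and that coefficient is $11/2$ at $n=2$ and $19/3$ at $n=3$, both exceeding $2n-1$; only $n=4$ closes, since $27/4<7$. The repair is exactly the paper's Cases 2--3: bound the total value seen via the inverse counts above, which gives factor $4\le 2n-1$ for all $n\ge3$ (and at $n=2$ even the paper's displayed rounding is too loose and must be sharpened, e.g.\ $\kappa(2n-1)+(2n-2)=3\kappa+2\le 5\kappa$ for $\kappa\ge1$, giving factor $5/2\le 3$ in its Case 2), so your small-$n$ sketch should be replaced by that argument rather than by an MMS decomposition that does not hold.
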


\begin{proof}
Let $i$ be a type 1 agent and consider any time step $t$. Let  $\kappa, \lambda$ be the number of high-valued and low-valued goods in $A_i^t$, respectively, where $\kappa, \lambda\in \mathbb{Z}_{\ge0}$.
\smallskip

\noindent\myul{\textbf{Case 1:} $\kappa = 0$.} 
At the end of any time step $t<n$, we have $\bmu_i(t) = 0$ and the statement trivially holds. So assume that $t\ge n$
As $\kappa = 0$ implies that $n_h(i, t) \le n-1$, or equivalently, $t<t_{i0}$ (by part 2.~of Theorem \ref{thm:main_algorithm}), we have $n\le t <t_{i0}$. On one hand, we know that $v_i(A_i^t) = \lambda \ge 1$, where the second inequality follows by part 1.~of Theorem \ref{thm:main_algorithm}. On the other hand, by considering a hypothetical allocation where each one of the $n_h(i, t)$ high-valued goods for agent $i$ is a whole bundle and the low-valued goods for agent $i$ are split as equally as possible into $n-n_h(i, t)$ bundles, we see that $i$'s maximin share is at most the value of the worst bundle among the ones filled with low-valued goods, i.e., 
\[\bmu_i(t) \le \left\lfloor \frac{t-n_h(i, t)}{n-n_h(i, t)}\right\rfloor \le \left\lfloor \frac{t-n_h(i, t) - (n-1-n_h(i, t))}{n-n_h(i, t) - (n-1-n_h(i, t))}\right\rfloor \le \left\lfloor \frac{t - (n-1)}{1}\right\rfloor = t - (n-1)\,,\]
where the second inequality follows from $n_h(i, t) \le n-1$ and the simple fact that $\frac{a}{b} \le \frac{a-c}{b-c}$ for any $a \ge b > c \ge 0$. Note however that there is a straightforward upper bound on $t$, implied by parts 1.~and 3.~of Theorem \ref{thm:main_algorithm}: $t \le  n+ (\lambda - 1)(2n-1) + (2n - 2)$. Thus, 
\[\bmu_i(t) \le n+ (\lambda - 1)(2n-1) + (2n - 2) - (n-1) = \lambda (2n - 1) = (2n - 1) v_i(A_i^t)\,.\]
% \smallskip

For the remaining two cases, we are going to show an approximation factor of at least $1/4$ with respect to proportionality, which then implies the same guarantee for maximin share fairness. Let $S$ be the set containing the first $t$ goods for some $t$. Note that necessarily one of these two cases holds if $t\ge t_{i0}$ (but possibly even earlier than that).\smallskip

\noindent\myul{\textbf{Case 2:} $\kappa \geqslant 1$ and $\lambda = 0$.} 
In this easier case we have $v_i(A_i^t) = \kappa \alpha_i $ and $i$ might have seen at most $\kappa (2n -1) + (2n-2)$ goods (by part 3.~of Theorem \ref{thm:main_algorithm}) of total value $v_i(S) \le (\kappa (2n -1) + (2n-2))\alpha_i$. We have
\[\bmu_i(t)\le \frac{v_i(S)}{n} \le \frac{\kappa (2n -1) + (2n-2)}{n} \alpha_i \le \frac{2n\kappa_i  + 2n}{n} \alpha_i = 2(\kappa +1)\alpha_i \le 4\kappa \alpha_i = 4 v_i(A_i^t)\,.\]
%\smallskip

\noindent\myul{\textbf{Case 3:} $\kappa \geqslant 1$ and $\lambda \geqslant 1$.} Similarly to Case 2, $v_i(A_i^t) = \kappa \alpha_i + \lambda$
and $i$ might have seen at most $n + (\kappa + \lambda - 1) (2n -1) + (2n-2)$ goods (by parts 1.~and 3.~of Theorem \ref{thm:main_algorithm}), out of which at most $n + (\kappa  - 1) (3n -2) + (3n-3)$ can be high-valued for $i$ (by parts 1.~and 2.~of Theorem \ref{thm:main_algorithm}). Then it is a matter of simple calculations to show that for the total value $v_i(S)$ to be maximized, the low-valued goods are at most $(\lambda - 1) (2n -1) + (2n-2)$. That is,  
\begin{IEEEeqnarray*}{rCl}
	v_i(S) & \le & [n + (\kappa  - 1) (3n -2) + (3n-3)]\alpha_i + [(\lambda - 1) (2n -1) + (2n-2)]	\\
	& \le & [n + 3n(\kappa  - 1)  + 3n]\alpha_i + 2n(\lambda - 1) + 2n =  3n\Big(\kappa+ \frac{1}{3}\Big)\alpha_i +2n \lambda  \,.
\end{IEEEeqnarray*}
Therefore, we have 
$\bmu_i(t)\le {v_i(S)}/{n} \le (3\kappa +1) \alpha_i +2 \lambda \le 4\kappa  \alpha_i +4 \lambda =  4 v_i(A_i^t)$.
\end{proof}

%%%%%%%%%%%%%%%%%%%%%%%%
\subsection{Proving Lemma \ref{lemma:H-condition}}\label{subsec:G-lemma_proof}
%%%%%%%%%%%%%%%%%%%%%%%%

\begin{proof}[Proof of Lemma \ref{lemma:H-condition}]
As we did in the proof of Claim \ref{claim:sufficient_condition}, for clarity, we write $H_t[i]$ to denote $H[i]$ at the end of time step $t$.
The proof will be broken down into two proofs, one for $t\le n$ and one for $t\ge n$; for the sake of presentation, the corresponding cases of the lemma are stated as Claims \ref{claim:H-condition_<=n} and \ref{claim:H-condition_>n} below.

\begin{claim}\label{claim:H-condition_<=n}
	At the end of any time step $t\le n$, we have $\big| \bigcup_{\ell=0}^{k} \mathcal{H}_{\ell}(t) \big| \le k$, for all $0\le k \le n$.
\end{claim}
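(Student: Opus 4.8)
The plan is to exploit the rigidity of Phase~$0$: by part~1 of Theorem~\ref{thm:main_algorithm}, every allocation (of a high- or a low-valued good) deactivates its recipient, so exactly one agent is deactivated per time step and no agent is ever reboosted twice. I would first record the closed form
\[ H_t[i] = n - n_h(i,t) + (3n-2)\, r_i(t), \qquad 0 \le t \le n, \]
where $r_i(t) \in \{0,1\}$ indicates whether agent $i$ has already received a \emph{high}-valued good by time $t$ (line~\ref{line:increase_H} fires at most once per agent in Phase~$0$, while line~\ref{line:reduce_H} fires once for each high-valued good the agent sees). Agents with $r_i(t)=1$ satisfy $H_t[i] \ge 3n-2 \ge n$, so they can only lie in level sets $\mathcal{H}_\ell(t)$ with $\ell \ge n$ and never contribute to $\big|\bigcup_{\ell=0}^{k}\mathcal{H}_\ell(t)\big|$ for $k<n$; the case $k=n$ of \eqref{eq:main_condition} is trivial. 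Thus, writing $P(t)=\{i : r_i(t)=0\}$, the claim reduces to the purely combinatorial statement
\[ \big|\{\, i \in P(t) : n_h(i,t) \ge s \,\}\big| \le n-s \qquad\text{for every } 1 \le s \le n; \]
call this statement $(\ast)$. Its instance $s=n$ is exactly the positivity requirement $|\mathcal{H}_0(t)|=0$.

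I would then prove $(\ast)$ by induction on $t$, the base case $t=0$ being immediate since $n_h(i,0)=0$. For the step, let $g$ be the good arriving at time $t$ and let $D=\{i : v_i(g)=\alpha_i\}$ be the agents whose $n_h$ is incremented. The favourable case is when $g$ is allocated as a high-valued good (line~\ref{line:check_Nh}): the recipient $j$ minimises $H$ among active agents of $D$, hence it is the active agent of $D$ with the \emph{largest} $n_h$, and it leaves $P(t)$. Thus the very agent the increments would push hardest against a threshold is simultaneously removed from the pool, and comparing the count at threshold $s$ after the step with the counts at thresholds $s$ and $s-1$ before it (both controlled by the inductive hypothesis) closes this case.

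The genuine obstacle is the opposite situation, where $g$ is allocated as a low-valued good (so $D$ contains no active agent, \emph{nobody} leaves $P$, and \emph{no} reboost occurs), together with the residual sub-case of a high-valued allocation in which the boundary agents being bumped are the \emph{inactive} (received-low) ones rather than the rescued recipient. In either situation some agents of $P(t)$ cross a level with nothing to compensate, and since the hypothesis tends to be tight, a one-line majorisation does not suffice. To resolve this I would bring in the two quantitative features of Phase~$0$: letting $A_t$ be the number of high-valued allocations so far, (i)~every \emph{active} agent satisfies $n_h(i,t)\le A_t$, since each high-valued good it has seen is a distinct high-valued allocation, and (ii)~the number of received-low agents equals $t-A_t$, since each of the $t$ goods deactivates exactly one recipient. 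Combining (i)--(ii) with the inductive hypothesis, I would show that if a received-low agent sitting at level $n_h=n-s$ sees a new high-valued good while the hypothesis is tight at threshold $s$, then $P(t)$ would be forced to contain strictly more received-low agents with $n_h \ge s-1$ than $t-A_t$ permits, a contradiction. Making this counting precise---tracking, for received-low agents, how many of their high-valued goods were themselves allocated low after they had become inactive---is the technically heaviest part, and it is exactly why the $t\le n$ and $t\ge n$ regimes of Lemma~\ref{lemma:H-condition} are separated, the present Phase~$0$ case (Claim~\ref{claim:H-condition_<=n}) being the more rigid and hence more tractable of the two compared with Claim~\ref{claim:H-condition_>n}.
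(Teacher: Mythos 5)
Your reduction is correct, and your route is genuinely different from the paper's. The paper proves Claim~\ref{claim:H-condition_<=n} by induction on the \emph{number of agents}: it generalizes the statement to arbitrary initializations $H_0[i]\ge n$, peels off the first good and its recipient, and identifies the remainder of Phase~$0$ with a complete Phase~$0$ on $n-1$ agents, with the sub-instance's initial vector shifted by $0$ or $1$ according to whether $g_1$ was allocated as a high- or a low-valued good. Your closed form $H_t[i]=n-n_h(i,t)+(3n-2)\,r_i(t)$, the observation that reboosted agents sit at level at least $3n-2\ge n$ and are therefore irrelevant for $k\le n-1$, and the equivalence of the claim with the counting statement $(\ast)$ are all correct, as are your facts (i) and (ii). What your approach buys is an explicit structural description of Phase~$0$ (it also makes the extremal ``staircase'' configurations, which show the bound is tight, completely transparent); what the paper's buys is a short two-case induction that avoids all counting, at the price of the generalized initialization and the bookkeeping of the sub-instance correspondence.

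The one genuine issue is that you defer the crux, but the ingredient you name---tracking, for received-low agents, the high-valued goods seen \emph{after} deactivation---does suffice, and in fact closes the argument without any induction on $t$ and without the tightness-contradiction machinery you envision. Formalize it as: if $i$ is deactivated by a low allocation at time $t_i$, then $n_h(i,t)\le A_{t_i}+(t-t_i)$, since every good that $i$ saw as high while still active was necessarily allocated as a high-valued good (an active $i$ with $v_i(g)=\alpha_i$ forces $N_h(g,\cdot)\neq\emptyset$ in line~\ref{line:check_Nh}). Now fix $t\le n$ and $1\le s\le t$ (for $s>t$ the count in $(\ast)$ is trivially zero) and split $V=\{i\in P(t):n_h(i,t)\ge s\}$ into its active part $V_a$ and its received-low part $V_\ell$. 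Since exactly one agent is deactivated at each of the first $t$ steps, $|V_a|\le n-t$. If $V_\ell\neq\emptyset$, let $t^*$ be the latest deactivation time occurring in $V_\ell$: all of $V_\ell$ was deactivated by low allocations at distinct steps up to $t^*$, so $|V_\ell|\le t^*-A_{t^*}$, while the agent deactivated at $t^*$ yields $s\le A_{t^*}+(t-t^*)$, i.e.\ $t^*-A_{t^*}\le t-s$. Hence $|V|\le (n-t)+(t-s)=n-s$, which is exactly $(\ast)$; no case analysis on how the current good is allocated is needed. (Also note a small slip in your sketch: the received-low agent about to cross threshold $s$ sits at $n_h=s-1$, i.e.\ at $H$-level $n-s+1$, not at ``$n_h=n-s$''.)
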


\begin{proof}[Proof of Claim \ref{claim:H-condition_<=n}]
\renewcommand\qedsymbol{{\scriptsize \textbf{Cl.~\ref{claim:H-condition_<=n}} $\boxdot$}}
We will use induction on the number of agents $n$ for a slightly more general version of the algorithm that takes the initialization of $H = H_0$ as part of the input, where it must be that $H_0[i] \ge n$ for all $i\in N$. Then the statement of the claim follows by fixing this part of the input to be $H_0[i] = n$ for all $i\in N$.  

For a single agent, it is straightforward that, for $H_0[1]\ge 1$, initially $\big| \mathcal{H}_0(0) \big| = 0$ and $\big| \mathcal{H}_1(0) \big| \le 1$, %
whereas after the first good is allocated, either $H_1[1]$ remains unchanged and, thus, at least $1$ (if the good was low-valued) or it is updated to $H_1[1] - 1 + 3\cdot 1 -2  \ge 1$ (if the good was high-valued). Either way, $\big| \mathcal{H}_0(1) \big| = 0$ and $\big| \mathcal{H}_1(1) \big| \le 1$, completing our base case.

Now assume that the statement of the claim is true for a certain $n'\ge 1$, and consider any instance with $n=n' +1$ agents and any $H_0\in \mathbb{Z}_{\ge n}^n$. For this particular instance, let $g_1, g_2, \ldots, g_m$ be the  goods that arrive, in this order. 
Because of how goods are allocated in Phase $0$, i.e., no agent gets a second good, once an agent $j$ receives $g_1$ at time $t = 1$,  the remaining of Phase $0$ is indistinguishable from (the complete) Phase $0$ of an instance with the agents of $N\setminus \{j\}$,  an appropriate initial priority vector (defined by $H_0[i]-\bm{1}_{N_h(g_1, 1)}(i)$ or $H_0[i]-\bm{1}_{N_h(g_1, 1)}(i) -1$; see Cases 1 and 2 below), and the sequence of goods being $g_2, g_3, \ldots, g_m$. 
We are going to invoke the induction hypothesis on this sub-instance but we distinguish two cases, depending on whether $g_1$ is allocated as a high-valued or a low-valued good.
Notice that it is without loss of generality to assume that the agent who gets good $g_1$ is agent $1$, as it is a matter of renaming the agents, if needed, and is consistent with our lexicographic tie-breaking. 

For the sub-instance that only involves agents $2$ through $n$, has a properly defined initial priority vector $H'_0$ (see within the cases for the corresponding description), and the sequence of goods is $g_2, \ldots, g_m$, we use a prime to distinguish the corresponding quantities. 
That is, we use $t'$ to denote time, rather than $t$ that we reserve for the original instance; in general $t' = t-1$, e.g., the $5$th time step in the sub-instance corresponds to the $6$th time step of the original problem. Thus, we will write $H'_{t'}$ for the  priority vector of the sub-instance at the end of time step $t'$ \emph{of that instance} and $\mathcal{H}'_{\ell}(t')$ for the level sets that it induces.
Assuming that $H'_0$ is such that $H'_0[i]\ge n-1$ for $i\in \{2, \ldots, n\}$, by the induction hypothesis, we have for this sub-instance:
\begin{equation}\label{eq:IH_G-condition_<=n}
\text{At the end of any } t'\le n-1, \, \bigg| \bigcup_{\ell=0}^{k} \mathcal{H}'_{\ell}(t') \bigg| \le k  \text{, for all }  0\le k \le n-1  \,.
\end{equation}

\noindent\myul{\textbf{Case 1:} $v_1(g_1) = \alpha_1$.} In this case, the initial vector $H'_0$ given as input for the sub-instance is defined by $H'_0[i]=H_0[i]-\bm{1}_{N_h(g_1, 1)}(i) \ge n-1$ for $i\in \{2, \ldots, n\}$, where $\bm{1}_{S}(i)$ is the indicator function of whether $i\in S$. Notice that this way, the priority among agents remains exactly the same as in the original instance and  $H'_{t'}[i] = H_{t'+1}[i]$ for all $t'\in\{0, \ldots, n-1\}$ and $i \in \{2, \ldots, n\}$; of course, $H'_{t'}[1]$ is not defined.
Further, because agent $1$ gets a high-valued good at time $1$, $H_1[1] = H_0[1] -1 + 3n -2 \ge 4n-3$ and, thus, throughout Phase 0 of the original instance (i.e., $t\in \{0, 1, \ldots, n\}$) we have $H_t[1] \ge 4n-3 - (t-1) \ge 3n -2 \ge n$. 
So, agent $1$ may only appear in $\mathcal{H}_n(t)$ for any $t\in [n]$. 
By the discussion about the correspondence between $H'$ and $H$ above, this means that $\mathcal{H}'_{\ell}(t') = \mathcal{H}_{\ell}(t'+1)$ for $\ell\in \{0, 1, \ldots, n-1\}$. Therefore, by the induction hypothesis, at the end of any $t\in \{1, \ldots, n\}$, $\big| \bigcup_{\ell=0}^{k} \mathcal{H}_{\ell}(t) \big| = \big| \bigcup_{\ell=0}^{k} \mathcal{H}'_{\ell}(t-1) \big| \le k$, for all $0\le k \le n-1$. For the missing cases, namely $t=0$ and $0\le k \le n$ or $0\le t \le n$ and $k = n$, we note that they are both trivial: (i) for $t = 0$, any $\mathcal{H}_{\ell}(0)$ is empty with the possible exception of $\mathcal{H}_n(0)$ that may contain up to $n$ agents, and (ii) for $k = n$, $\big| \bigcup_{\ell=0}^{n} \mathcal{H}_{\ell}(t) \big| \le n$ trivially holds for any $t$. We conclude that at the end of any $t\le n$, we have $\big| \bigcup_{\ell=0}^{k} \mathcal{H}_{\ell}(t) \big| \le k$, for all $0\le k \le n$.
\medskip 

\noindent\myul{\textbf{Case 2:} $v_1(g_1) = \beta_1 < \alpha_1$} Since the very first good, $g_1$, was allocated as a low-valued good despite $\chi_i = 0$, for all $i\in N$, it must be the case that $v_i(g_1) = \beta_i$. That is, $H_1[i]=H_0[i] \ge n$, for all $i\in N$.
In this case, the initial vector $H'_0$ of the sub-instance given as input is defined by $H'_0[i]=H_0[i] - 1 \ge n-1$ for $i\in \{2, \ldots, n\}$. Like in Case 1, the priority among agents remains exactly the same as in the original instance but here  $H'_{t'}[i] = H_{t'+1}[i] - 1$ for all $t'\in\{0, \ldots, n-1\}$ and $i \in \{2, \ldots, n\}$; again, $H'_{t'}[1]$ is not defined.

Unlike Case 1, however, here agent $1$ may  appear in $\mathcal{H}_{\ell}(t)$ for several combinations of $\ell$ and $t$. Nevertheless, this will not be an issue. First notice that, even if things went really wrong, there are not enough goods for $H'_{t'}[i]$ to become negative for any  $i\in \{2, \ldots, n\}$ and any $t'\in\{0, \ldots, n-1\}$, and so $\mathcal{H}'_{-1}(t') = \emptyset$ for all $t'\in\{0, \ldots, n-1\}$.
By the correspondence between $H'$ and $H$ discussed above, we have that $\mathcal{H}_{\ell}(t) \setminus \{1\} = \mathcal{H}'_{\ell-1}(t-1)$ and, thus, $\mathcal{H}_{\ell}(t) \subseteq \mathcal{H}'_{\ell-1}(t-1) \cup \{1\}$, for $\ell\in \{0, 1, \ldots, n\}$. Therefore, by the induction hypothesis, at the end of any $t\in \{1, \ldots, n\}$, 
\[\bigg| \bigcup_{\ell=0}^{k} \mathcal{H}_{\ell}(t) \bigg| \le \bigg| \bigcup_{\ell=0}^{k} \mathcal{H}'_{\ell-1}(t-1) \cup \{1\} \bigg| \le \bigg| \bigcup_{\ell=0}^{k-1} \mathcal{H}'_{\ell}(t-1) \bigg| + 1 \le k-1 +1 = k \,,\]
for all $0\le k \le n$. The missing cases ($t=0$ and $0\le k \le n$) are trivial exactly like their counterparts in Case 1. We conclude that at the end of any $t\le n$,  we have $\big| \bigcup_{\ell=0}^{k} \mathcal{H}_{\ell}(t) \big| \le k$, for all $0\le k \le n$. 
\end{proof}

It is not hard to see that the proof of Claim \ref{claim:H-condition_<=n} cannot be extended beyond $t=n$ as it crucially relies on the fact that during Phase $0$ (which has fixed length and lasts up to $t=n$) once an agent gets a good, they are inactive for the rest of the phase. Interestingly enough, the proof of Claim \ref{claim:H-condition_>n} below (induction with respect to $t$) could not have been used for $t<n$ as it crucially depends on $H[i]$ not being relevant unless agent $i$ actively competes for a high-valued good is allocated as high-valued (which, in turn, is the result of the ``slack'' we add to  $H[i]$ after an agent $i$ gets a high-valued good).

\begin{claim}\label{claim:H-condition_>n}
	At the end of any time step $t\ge n$, we have $\big| \bigcup_{\ell=0}^{k} \mathcal{H}_{\ell}(t) \big| \le k$, for all $0\le k \le n$.
\end{claim}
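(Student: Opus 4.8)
The plan is to prove the claim by induction on $t\ge n$, taking Claim~\ref{claim:H-condition_<=n} (at $t=n$) as the base case. Throughout I would read condition~\eqref{eq:main_condition} in its equivalent ``staircase'' form: since $\bigcup_{\ell=0}^{k}\mathcal H_\ell(t)=\{i:H_t[i]\le k\}$, the condition $\big|\bigcup_{\ell=0}^{k}\mathcal H_\ell(t)\big|\le k$ for all $0\le k\le n$ says exactly that $|\{i:H_t[i]\le k\}|\le k$ for every such $k$, i.e.\ the $j$-th smallest entry of $H$ is at least $j$. Note this already encodes positivity (the case $k=0$ forces every entry to be $\ge 1$), so I may freely use $H_t[i]\ge 1$ inside the induction.

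Before the inductive step I would isolate two structural facts. First, \emph{every inactive agent has $H\ge n$}: receiving a high-valued good sets its entry to at least $3n-2$ (line~\ref{line:increase_H}, using positivity of the previous entry), after which it drops by at most one per subsequent high-valued good, and since a phase lasts at most $2n-1$ steps at most $2n-2$ such drops occur, leaving $H\ge 3n-2-(2n-2)=n$. Second --- and this refinement is what makes both cases go through --- \emph{if $a$ is inactive when the good $g$ arriving at time $t+1$ is processed and $v_a(g)=\alpha_a$, then $H_t[a]\ge n+1$}. Indeed, $a$ received its high-valued good at some step $s$ of the current phase, so $H_s[a]\ge 3n-2$ and the number of high-valued-for-$a$ goods in $(s,t]$ is at least $3n-2-H_t[a]$; these goods together with $g$'s good at step $t+1$ all lie in the current phase, so the phase spans at least $3n-H_t[a]$ steps, and comparing with the maximal phase length $2n-1$ forces $H_t[a]\ge n+1$.

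For the inductive step I would split on how $g$ is allocated. If $N_h(g,t+1)=\emptyset$ (low allocation), only inactive high-valuers of $g$ get decremented, and by the refinement each such agent has $H_t\ge n+1$, hence $H_{t+1}\ge n$; thus $\{i:H_{t+1}[i]\le k\}=\{i:H_t[i]\le k\}$ for every $k\le n-1$ and the condition is inherited verbatim from the hypothesis (and $k=n$ is trivial). If $N_h(g,t+1)\ne\emptyset$ (high allocation), the recipient $j^{*}=\argmin_{i\in N_h}H_t[i]$ ends with $H_{t+1}[j^{*}]=H_t[j^{*}]-1+3n-2\ge 3n-2\ge n$, so $j^{*}\notin\{i:H_{t+1}[i]\le k\}$ for all $k\le n-1$; for the other agents I would fix $k\le n-1$ and argue in two regimes. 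When $H_t[j^{*}]\le k+1$, each remaining agent satisfies $H_{t+1}[i]=H_t[i]-\bm{1}_{A}(i)$ (where $A$ is the set of high-valuers of $g$), so $H_{t+1}[i]\le k$ implies $H_t[i]\le k+1$, giving $|\{i:H_{t+1}[i]\le k\}|\le|\{i:H_t[i]\le k+1\}|-1\le k$, since $j^{*}$ itself lies in $\{H_t\le k+1\}$. When $H_t[j^{*}]\ge k+2$, every active high-valuer already has $H_t\ge k+2$, while by the refinement every inactive high-valuer of $g$ keeps $H_{t+1}\ge n>k$; hence no decremented agent can enter $\{H_{t+1}[i]\le k\}$, the set reduces to non-high-valuers, and $|\{i:H_{t+1}[i]\le k\}|\le|\{i:H_t[i]\le k\}|\le k$.

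The main obstacle, and the genuinely delicate point, is the top level $k=n-1$: a single decrement of an entry sitting at exactly $n$ could push the maximum below $n$ and break the staircase. This is precisely what the second structural fact rules out, by turning ``an inactive high-valuer of $g$ with $H_t=n$'' into a phase that is one good too long. I expect the careful bookkeeping of which agents are active versus inactive, which are decremented, and of what ``the current phase'' means relative to the $\chi$-resets at phase boundaries, to be the part that needs the most care to state precisely; the staircase reformulation and the phase-length refinement are what keep that bookkeeping manageable.
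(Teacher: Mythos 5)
Your proposal is correct and takes essentially the same route as the paper's own proof: strong induction on $t$ with Claim~\ref{claim:H-condition_<=n} as the base case, the phase-length computation (the paper's \eqref{eq:H_for_chi=0}; your two ``structural facts'' are exactly this bound, stated pre-decrement as $H_t[a]\ge n+1$) showing that inactive agents whose entries get decremented sit at level at least $n$ and are thus irrelevant for every $k\le n-1$, and a counting step exploiting the $\argmin$ selection so that the recipient's jump to $3n-2$ pays for the new decrement. Your split of the high-valued case into the regimes $H_t[j^{*}]\le k+1$ versus $H_t[j^{*}]\ge k+2$ is merely a repackaging of the paper's split on whether $N_h(g,t)\cap\bigcup_{\ell=0}^{k}\mathcal{H}_{\ell}(t)$ is empty, with the paper's contradiction argument (showing $j\in\bigcup_{\ell=0}^{k+1}\mathcal{H}_{\ell}(t-1)$) absorbed directly into your regime hypothesis.
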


\begin{proof}[Proof of Claim \ref{claim:H-condition_>n}]
\renewcommand\qedsymbol{{\scriptsize \textbf{Cl.~\ref{claim:H-condition_>n}} $\boxdot$}}
Given an arbitrary instance, we will prove the statement using strong induction on the time step $t$. Essentially, Claim \ref{claim:H-condition_<=n} serves as the  base case.
So, assume that the statement of the claim is true for all time steps up to a certain $t_0\ge n$, and consider the next time step $t=t_0 +1$. Let $g$ be the $t$-th good and $j$ be the agent who eventually gets it. 
\medskip 

\noindent\myul{\textbf{Case 1:} $v_j(g) = \beta_j < \alpha_j$.}
That is, $g$ is allocated as a low-valued good. 
We claim that for any agent $i\in N$, such that $H_{t}[i]\le n-1$, we have $H_t[i] = H_{t-1}[i]$, i.e., the entries of $H_t$ may have changed only for agents who are irrelevant with respect to condition \eqref{eq:main_condition}. 
Indeed, if $H_t[i]$ has changed, then $v_i(g) = \alpha_i$. Moreover, it must be $\chi_i = 0$, as otherwise $N_h(g, t)\neq \emptyset$ and $g$ would be allocated as a high-valued good
instead. But $\chi_i = 0$ means that agent $i$ has received a high-valued good, say at a time step $t_h$, during the current phase. 
Recall that each phase has at most $2n-1$ time steps (see the proof of part 3.~of Theorem \ref{thm:main_algorithm}) and, thus,  $t-t_h \le 2n-2$. Also, by the induction hypothesis (for time step $t_h$ and $k=0$), we have $H_{t_h-1}[i]\ge 1$. Combining these, we get
\begin{equation}\label{eq:H_for_chi=0}
    H_{t}[i] \ge H_{t_h}[i] - (2n-2) = H_{t_h-1}[i] - 1 + 3n - 2 - 2n +2 \ge n  \,,
\end{equation}
as claimed. 
This means that $\mathcal{H}_{\ell}(t) = \mathcal{H}_{\ell}(t-1)$ for $\ell\in \{0, 1, \ldots, n-1\}$. Therefore, by invoking the induction hypothesis, at the end of time step $t$, we have  $\big| \bigcup_{\ell=0}^{k} \mathcal{H}_{\ell}(t) \big| = \big| \bigcup_{\ell=0}^{k} \mathcal{H}_{\ell}(t-1) \big| \le k$, for all $0\le k \le n-1$. For the missing case, namely $k = n$, it is trivial as $\big| \bigcup_{\ell=0}^{n} \mathcal{H}_{\ell}(t) \big| \le n$ always holds for any $t$. We conclude that at the end of time step $t$, we have $\big| \bigcup_{\ell=0}^{k} \mathcal{H}_{\ell}(t) \big| \le k$, for all $0\le k \le n$.
\medskip

\noindent\myul{\textbf{Case 2:} $v_j(g) = \alpha_j$.}
That is, we assume next that $g$ is allocated as a high-valued good.
Let $N_h^+(g, t) = \{i\in N \,|\, v_i(g) = \alpha_i\}$, i.e., the set of all agents who see the $t$-th good as high-valued. The agents of $N_h^+(g, t)$ are exactly those whose entries in $H_t$ are updated during the current time step and how relevant it is for our analysis.  
We carefully categorize agents according to what happens to their entry in $H_t$ during time step $t$:
\begin{itemize}[leftmargin=20pt,itemsep=3pt]
    \item For any agent $i\in N\setminus N_h^+(g, t)$, $H_t[i] = H_{t-1}[i]$, as they see $g$ as a low-valued good. 
    \item For any agent $i\in  N_h(g, t) \setminus \{j\}$, $H_t[i] = H_{t-1}[i] - 1$, as they see $g$ as a high-valued good and they miss it. 
    \item For any agent $i\in  N_h^+(g, t) \setminus N_h(g, t)$, again $H_t[i] = H_{t-1}[i] - 1$, as they see $g$ as a high-valued good and they miss it, but  they are essentially irrelevant because  $H_{t}[i]\ge n$. This follows by the argument in Case 1 above, as $\chi_i = 0$ and the chain of (in)equalities of \eqref{eq:H_for_chi=0} applies exactly as is.
    \item For agent $j$ itself, we have $H_t[j] = H_{t-1}[j] - 1 +3n -2 \ge 3n - 2\ge n$, where $H_{t-1}[j]\ge 1$ follows from the induction hypothesis for time step $t-1$ and $k=0$. We conclude that agent $j$ is also essentially irrelevant.   
\end{itemize}
From the above, it becomes clear that agents in $N_h(g, t) \cap \bigcup_{\ell=0}^{k} \mathcal{H}_{\ell}(t)$ should be treated carefully when we try to bound $\big| \bigcup_{\ell=0}^{k} \mathcal{H}_{\ell}(t) \big|$, for some $k\in \{0, 1, \ldots, n\}$. The easiest case, of course, is when $N_h(g, t) \cap \bigcup_{\ell=0}^{k} \mathcal{H}_{\ell}(t) = \emptyset$. Then, $\mathcal{H}_{\ell}(t) = \mathcal{H}_{\ell}(t-1)$ for $\ell \le k$ and, thus, by the induction hypothesis, we have that at the end of time step $t$, $\big| \bigcup_{\ell=0}^{k} \mathcal{H}_{\ell}(t) \big| = \big| \bigcup_{\ell=0}^{k} \mathcal{H}_{\ell}(t-1) \big| \le k$. Also, when $k=n$, it trivially holds $\big| \bigcup_{\ell=0}^{n} \mathcal{H}_{\ell}(t) \big| \le n$.

Next, assume that $k\in \{0, 1, \ldots, n-1\}$ is such that $N_h(g, t) \cap \bigcup_{\ell=0}^{k} \mathcal{H}_{\ell}(t) = S \neq\emptyset$. At this point, we need three simple observations. The first one (following from the last bullet above) is that $j\notin \bigcup_{\ell=0}^{k} \mathcal{H}_{\ell}(t)$ and, thus, $j\notin S$. The second one is that $\bigcup_{\ell=0}^{k} \mathcal{H}_{\ell}(t) \subseteq \bigcup_{\ell=0}^{k+1} \mathcal{H}_{\ell}(t-1)$, as no entry of $H$ reduces by more than $1$ in a single time step. The last one is that $j \in \bigcup_{\ell=0}^{k+1} \mathcal{H}_{\ell}(t-1)$; if not, we would have $H_{t-1}[j]\ge k+2$ and the $j$-th entry of $H$ right before $g$ was allocated to $j$ would be $H_{t-1}[j] - 1\ge k+1 > k \ge \min_{i\in S} H[i] \ge \min_{i\in N_h(g, t)} H[i]$, contradicting the choice of $j$.
Using these observations, as well as the induction hypothesis,  we have that at the end of time step $t$,  
\[\bigg| \bigcup_{\ell=0}^{k} \mathcal{H}_{\ell}(t) \bigg| \le \bigg| \bigcup_{\ell=0}^{k+1} \mathcal{H}_{\ell}(t-1) \setminus \{j\} \bigg| = \bigg| \bigcup_{\ell=0}^{k+1} \mathcal{H}_{\ell}(t-1)\bigg| - 1 \le (k+1) -1 = k. \]
This exhausts all possible cases for $k\in \{0, 1, \ldots, n\}$ and concludes Case 2.
\end{proof}

Clearly, combining the two claims completes the proof of the lemma.
\end{proof}

%%%%%%%%%%%%%%%%%%%%%%%%%%

%%%%%%%%%%%%%%%%%%%%%%%%%%%%%%%%%%%%%%%%%
%%%%%%%%%%%%%%%%%%%%%%%%%%%%%%%%%%%%%%%%%
\section{The Power of Limited Foresight}
\label{sec:foresight_whole}
%%%%%%%%%%%%%%%%%%%%%%%%%%%%%%%%%%%%%%%%%
%%%%%%%%%%%%%%%%%%%%%%%%%%%%%%%%%%%%%%%%%

A natural question at this point is whether one can avoid the linear term in the \mms approximation guarantee by allowing some additional information about the future. Unfortunately, there is a very simple instance that illustrates that this is not possible. 

\begin{proposition}\label{prop:linear_impossibility}
Let $\varepsilon > 0$ be a constant. Even if the whole instance is known in advance, as long as it is required to irrevocably allocate each good right after it arrives, no algorithm can always compute a $(1/n + \varepsilon)$-temporal-\mms allocation, even on $2$-value instances. 
\end{proposition}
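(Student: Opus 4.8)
The plan is to exhibit a single fixed $2$-value instance (with identical agents) such that \emph{every} irrevocable online assignment of its goods violates $(1/n+\varepsilon)$-temporal-\mms at some time step; since a deterministic algorithm with full knowledge produces exactly one such assignment, this suffices. I would take $n$ identical agents with $\beta_i=1$ and $\alpha_i=\alpha$ for a sufficiently large constant $\alpha$ (say $\alpha=n+1$), and the stream $g_1,\dots,g_n$ of low-valued goods followed by $g_{n+1},\dots,g_{2n-1}$ of high-valued goods, i.e.\ $n$ low goods and then $n-1$ high goods.

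The first step is to argue that the constraint at time $t=n$ removes all freedom. At $t=n$ only the $n$ low goods have arrived, so $\bmu_i(n)=1$ for every $i$, and $(1/n+\varepsilon)$-\mms requires each agent to hold value at least $1/n+\varepsilon>0$, hence at least one good. As there are exactly $n$ goods and $n$ agents, this forces a perfect bijection: every agent ends time step $n$ holding exactly one low good. Crucially, this holds for \emph{any} algorithm that has not already failed by time $n$, regardless of whether it knows the future, because the constraint is imposed on the (already committed) allocation of the first $n$ goods.

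The second step exploits irrevocability together with a jump in the maximin share. After $t=n$ no further low goods arrive, so each agent is permanently stuck with its single low good (value $1$) plus whatever high goods it later receives. At $t=2n-1$ the goods seen are $n$ low and $n-1$ high; I would compute $\bmu_i(2n-1)=n$ by checking that placing each high good in its own bundle and all $n$ low goods together yields $n$ bundles whose minimum value is $n$ (using $\alpha\ge n$), while any partition into $n$ bundles leaves some bundle without a high good, hence of value at most $n$. Thus $(1/n+\varepsilon)$-\mms now requires each agent to hold value at least $(1/n+\varepsilon)\cdot n = 1+n\varepsilon>1$. But only $n-1$ high goods are available for $n$ agents, so at least one agent receives none and is left with value exactly $1<1+n\varepsilon$, a violation.

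The last step merely assembles these: any algorithm either fails at or before $t=n$, or is forced into the bijection and then fails at $t=2n-1$. The only part requiring care is the exact \mms computations at $t=n$ and $t=2n-1$ (and fixing $\alpha$ large enough that high-good bundles are never the minimum). I expect the main point—rather than a genuine obstacle—to be the observation that full knowledge is useless here, because the binding low-good constraint at $t=n$ leaves essentially one legal partial allocation, which irrevocably predetermines the shortage exposed by the \mms jump at $t=2n-1$.
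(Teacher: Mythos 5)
Your proposal is correct and follows essentially the same route as the paper's own proof: the identical instance ($n$ universally low-valued goods followed by $n-1$ universally high-valued ones, with $\alpha \ge n$), the same forced bijection at $t=n$ driven by $\bmu_i(n) \ge 1$, and the same pigeonhole conclusion at $t=2n-1$ where some agent is stuck at value $1$ against $\bmu_i(2n-1)=n$. Your explicit verification of the two \mms computations is a slightly more detailed write-up of steps the paper treats as immediate, but the argument is the same.
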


\begin{proof}
We consider a $2$-value instance with $n$ agents, where $\beta_i = 1$ and $\alpha_i = \alpha \ge n$, for all $i\in N$. There are $n$ universally low-valued goods $g_1, \ldots, g_{n}$ and $n-1$ universally high-valued goods $g_{n+1}, \ldots, g_{2n-1}$, as shown below, that arrive according to their indices. 
\begin{center}  
\begin{tabular}{lccccc|cccc}
 & $g_1$ & $g_2$ &  \ldots & $g_{n-1}$ & $g_n$ & $g_{n+1}$ & $g_{n+2}$ & \ldots & $g_{2n-1}$  \\
ag.~1:  & \circled{1} & 1 &   \ldots & 1 & 1 & $\alpha$ & $\alpha$ & \ldots & $\alpha$  \\
ag.~2:  & 1 & \circled{1} &   \ldots & 1 & 1 & $\alpha$ & $\alpha$ & \ldots & $\alpha$  \\
$\vdots$  & $\vdots$ & $\vdots$ &    $\ddots$ & $\vdots$  & $\vdots$ & $\vdots$  & $\vdots$ & $\ddots$  & $\vdots$ \\
ag.~$n-1$:  & 1 & 1 & \ldots & \circled{1} & 1 & $\alpha$ & $\alpha$ & \ldots & $\alpha$  \\
ag.~$n$:  & 1 & 1 &  \ldots & 1 & \circled{1} & $\alpha$ & $\alpha$ & \ldots & $\alpha$  \\[2pt]
\end{tabular}
\end{center}
We may assume that this information is available even before the first good arrives.
Like in the proof of Theorem \ref{thm:imposibility_MMS}, we note that if at any point during the first $n$ time steps an agent receives a second good, then at the end of time step $t=n$ there would be at least one agent that has received value $0$ despite having a positive maximin share value. So, we may assume that any algorithm with non-trivial guarantees assigns to each agent exactly one of the first $n$ goods. Without loss of generality, agent $i$ gets good $g_i$ as shown. Given that, no matter how goods $g_{n+1}, \ldots, g_{2n-1}$ are allocated, at least one agent, say agent $1$, will not receive any of these. So, at the end of time step $t = 2n-1$ we have $v_i(A_1) = v_1(g_1) = 1$, while $\bmu_1(2n-1) = n$.
\end{proof}

Despite Proposition \ref{prop:linear_impossibility}, in this section we show that being able to see only a linear number of steps in the future leads to significantly simpler algorithms with \efo and \eft guarantees.

%%%%%%%%%%%%%%%%%%%%%%%%%%%%%%%%%%%%%%%%%
\subsection{The Illustrative Case of Two Agents}
\label{sec:foresight_2} 
%%%%%%%%%%%%%%%%%%%%%%%%%%%%%%%%%%%%%%%%%

There is an easy algorithm augmented with foresight of length $1$  that achieves \efo in every \textit{even} time step. Although Naive-Matching (Algorithm \ref{alg:foresight_2}) is essentially subsumed by the main result of the next section, it is simpler to state, much simpler to analyze and still illustrates the power of---even very limited---foresight.

\begin{table}[h!t]
\begin{center}
\small
\begin{tabular}{ccccccccccc}
\circled{$\beta$} & $\beta$                      & \hspace{1.3em}  & $\beta$                      & \circled{$\alpha$}                      & \hspace{1.3em}  & \circled{$\alpha$}            &  $\beta$          & \hspace{1.3em} &  \circled{$\alpha$}            &  $\alpha$ \\
$\beta$ & \circled{$\beta$}                      &  & \circled{$\beta$}                      & $\beta$                      &  &       $\beta$ & \circled{$\beta$}       &  &  $\beta$ & \circled{$\beta$}  \\
 & \vspace{1.4ex} &                       &  &                       &                       &  &                       &  &  &  \\ \cline{4-5}
\circled{$\beta$} & $\beta$  & \multicolumn{1}{r|}{\textcolor{blue}{$\mathbf{I}$}} & \dotcircled{$\beta$} & \multicolumn{1}{c|}{\dashcircled{$\alpha$}} &                       & \circled{$\alpha$}            &  $\beta$          & \hspace{1.3em} &  \circled{$\alpha$}            &  $\alpha$ \\
$\beta$ & \circled{$\alpha$}  & \multicolumn{1}{l|}{} &  \dashcircled{$\beta$} &  \multicolumn{1}{c|}{\dotcircled{$\alpha$}} &                       &  $\beta$ & \circled{$\alpha$}                       &  &  $\beta$ & \circled{$\alpha$}  \\ \cline{4-5}
 & \vspace{1.4ex} &                       &  &                       &                       &  &                       &  &  &  \\ \cline{7-8}
$\beta$ & \circled{$\beta$} &                       & $\beta$ &  \circled{$\alpha$}                     & \multicolumn{1}{r|}{\textcolor{blue}{$\mathbf{II}$}} & \dashcircled{$\alpha$} & \multicolumn{1}{c|}{\dotcircled{$\beta$}} &  & $\alpha$ & \circled{$\alpha$} \\
\circled{$\alpha$} & $\beta$ &                       & \circled{$\alpha$} &  $\beta$                     & \multicolumn{1}{c|}{} & \dotcircled{$\alpha$} & \multicolumn{1}{c|}{\dashcircled{$\beta$}} &  & \circled{$\alpha$} & $\beta$ \\ \cline{7-8}
 & \vspace{1.4ex} &                       &  &                       &                       &  &                       &  &  &  \\
$\beta$               & \circled{$\beta$} &    & $\beta$                      & \circled{$\alpha$}                      &   & \circled{$\alpha$}            &  $\beta$          &  &  \circled{$\alpha$}            &  $\alpha$ \\
\circled{$\alpha$}         & $\alpha$ &   & \circled{$\alpha$}                      & $\alpha$                      &  &       $\alpha$ & \circled{$\alpha$}       &  &  $\alpha$ & \circled{$\alpha$} 
\end{tabular}
\normalfont
\end{center}
\caption{When we are given foresight of length $1$, achieving \efo on every other time step for two agents is very simple. We only need to keep track of who ``wins'' the next block of type I or II, i.e., who gets the contested high-valued good (\textit{dashed:} agent 1 wins, \textit{dotted:} agent 2 wins). In every other case, we may allocate the goods in a predetermined way as shown for any other block here. Since we only care about the general value patterns, for the sake of readability, we omit the indices; e.g., we write $\alpha$ rather than $\alpha_1$ in the first row and $\alpha_2$ in the second row of each block.}
\label{table:foresight_2}
\end{table}

\begin{algorithm}[bht]
		\caption{Naive-Matching$(v_1, v_2; M)$ \\{\small {(The valuation functions, $v_1, v_2$, are given via oracles; $M$ is given in an online fashion, one good at a time, along with a preview of the next good after that.)}}}
		\begin{algorithmic}[1]
            \State $t\leftarrow 0$ \textbf{;} $\mathtt{ctr} \leftarrow 0$ \Comment{We initialize our counters and the allocation.}
			\For{$i \in N$}
                \State $A_i \leftarrow \emptyset$ \vspace{1ex} 
            \EndFor\vspace{1ex}
            \State \textbf{whenever a new good $g$ arrives along with a preview of the next good, $g'$:}
            \State $t\leftarrow t + 1$
            \If{$t = 0 \,\mathrm{mod}\,2$}
                \State Allocate $g$ according to the commitment of time step $t-1$.
            \Else \label{line:naive_else}
                \State $B \leftarrow {\small \begin{bmatrix} v_1(g) & v_1(g') \\ v_2(g) & v_2(g') \end{bmatrix}}$
                \If{$B$ follows any pattern of Table \ref{table:foresight_2} except of those of blocks I or II \label{line:naive_else->if}}
                    \State Add $g$ to $A_1$ or $A_2$ according to the allocation of the corresponding pattern in Table \ref{table:foresight_2}.
                    \State Commit to add $g'$ to $A_1$ or $A_2$ at time step $t+1$ according to the aforementioned allocation.
                \Else
                    \State $\mathtt{ctr} \leftarrow (\mathtt{ctr} + 1)\,\mathrm{mod}\,2$ \label{line:naive_ctr_update}
                    \State $j = 2 - \mathtt{ctr}$
                    \State Add $g$ to $A_1$ or $A_2$ according to the allocation of the corresponding pattern in Table \ref{table:foresight_2} (i.e., I or II) which gives the contested high-valued good to agent $j$.
                    \State Commit to add $g'$ to $A_1$ or $A_2$ at time step $t+1$ according to the aforementioned allocation. \label{line:naive_I-II_commit}
                \EndIf
            \EndIf
		\end{algorithmic}
		\label{alg:foresight_2}
\end{algorithm}

\begin{theorem}\label{thm:foresight_2}
For any two agent $2$-value instance augmented with foresight of length $1$, Algorithm \ref{alg:foresight_2} builds an allocation that is temporal-\eft, while it is \efo for every \emph{even} time step $t\ge 0$. 
\end{theorem}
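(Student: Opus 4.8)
The plan is to collapse the whole analysis onto a single scalar per agent. For two agents, write $D_i(t) = v_i(A_i^t) - v_i(A_{3-i}^t)$; then the envy of agent $i$ toward the other is exactly $-D_i(t)$, and certifying $\textsc{ef}k$ for agent $i$ amounts to comparing $D_i(t)$ with the value agent $i$ assigns to the best $k$ goods of the rival's bundle. Because Algorithm~\ref{alg:foresight_2} works in consecutive blocks of two goods and \emph{every} pattern in Table~\ref{table:foresight_2} hands exactly one good of the block to each agent, $D_i$ at an even step is just the sum of per-block contributions, and at an odd step it is that sum plus the contribution of the one good already placed in the current block. First I would compute, once and for all, the contribution of each of the sixteen value patterns to $D_1$ and $D_2$; the key bookkeeping fact is that every pattern other than I and II contributes a \emph{nonnegative} amount to \emph{both} agents (each agent either ties or nets $\alpha-\beta$), while blocks I and II give $+(\alpha-\beta)$ to whoever wins the contested high good and $-(\alpha-\beta)$ to the loser.

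Next I would exploit the counter $\mathtt{ctr}$. Since it advances only on I/II blocks and alternates the winner, agent~1 takes the odd-indexed contested goods and agent~2 the even-indexed ones; hence at any moment the cumulative I/II contribution is $+(\alpha-\beta)$ or $0$ for agent~1 and $-(\alpha-\beta)$ or $0$ for agent~2, i.e.\ at most one contested block is ever left \emph{pending}. Combining this with the nonnegativity of all other blocks gives the end-of-block bounds $D_1(t)\ge 0$ and $D_2(t)\ge -(\alpha-\beta)$ for every even $t$. For an odd step $t$, $D_i(t)$ equals $D_i(t-1)$ adjusted by the single good of the current block already allocated, which can lower it by at most $\alpha$; this yields $D_1(t)\ge -\alpha$ and $D_2(t)\ge -2\alpha+\beta\ge -2\alpha$.

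Finally I would convert these inequalities into the stated guarantees, matching each deficit to concrete removable goods. For agent~1 the bound $D_1\ge 0$ on even steps means no envy; on an odd step any negativity of $D_1$ is caused precisely by the first good of the current block going to agent~2, so deleting that single good from $A_2^t$ restores \efo. For agent~2, the equality $D_2=-(\alpha-\beta)$ on an even step can arise only from a pending I/II loss, which by construction placed into $A_1^t$ a good worth $\alpha$ to agent~2; deleting it yields \efo. On an odd step the further drop to $-2\alpha+\beta$ requires \emph{both} a pending I/II loss and the first good of the current block going to agent~1 at value $\alpha$ for agent~2, so $A_1^t$ then holds two goods worth $\alpha$ to agent~2, and deleting both certifies \eft. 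Together these statements give temporal-\eft and \efo at every even step; the degenerate regimes $\beta=0$ and $\alpha=\beta$ only make the bounds easier.

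The main obstacle is exactly this last step: an aggregate lower bound on $D_i(t)$ certifies $\textsc{ef}k$ only if the rival's bundle genuinely \emph{contains} the $k$ high goods one intends to delete, so one must argue that whenever $D_i(t)<0$ the deficit is witnessed by identifiable $\alpha$-valued goods sitting in $A_{3-i}^t$, not merely bounded in total. This rests on the invariant that at most one contested block is ever unresolved, which is the real content of the counter and should be established by a clean induction on the number of I/II blocks seen so far. The enumeration of the sixteen patterns and their contributions is routine, but it must be carried out to confirm the nonnegativity claim for all non-contested blocks.
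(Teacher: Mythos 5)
Your proposal is correct, and it rests on the same two pillars as the paper's proof: (a) every non-contested pattern in Table~\ref{table:foresight_2} weakly favors both agents (the paper phrases this as ``each agent weakly prefers the good they receive,'' you phrase it as a nonnegative per-block contribution to $D_i$), and (b) the counter $\mathtt{ctr}$ alternates the winner of contested blocks, so at most one contested deficit of size $\alpha_i-\beta_i$ is ever pending. The packaging differs in two ways worth noting. First, the paper runs an induction on $k$ maintaining the invariant ``at $t=2k$, the agent designated by $\mathtt{ctr}$ may envy by at most $\alpha_i-\beta_i$ and the other agent does not envy,'' whereas you telescope the additive contributions of all blocks; your accounting is in fact strictly sharper, since it yields $D_1(2k)\ge 0$ unconditionally (agent~$1$ never envies at even steps), while the paper's invariant permits agent-$1$ envy up to $\alpha_1-\beta_1$ when $\mathtt{ctr}=0$. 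Second, you make explicit a step the paper leaves implicit: converting the aggregate envy bound into \efo requires exhibiting a removable good of sufficient value in the rival's bundle. The paper gets this for free from its part~(i) (balanced bundles: in a balanced $2$-value allocation, positive envy forces the envied bundle to contain a good the envier values at $\alpha_i$, whose removal kills the envy), but never spells it out; your witness argument---the pending contested block deposits a universally high-valued good in the winner's bundle, and at odd steps the current block's first good supplies the second witness for \eft---is the correct and complete way to discharge it, including the observation that the two witnesses come from distinct blocks and are therefore distinct goods. Both routes are sound; yours is marginally more work up front (verifying the sixteen patterns, which indeed checks out against the table's allocations) in exchange for a tighter invariant and a fully explicit \efo/\eft certification.
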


\begin{proof}
We first observe that if an allocation is \efo at the end of some time step $t\ge 0$, then it will trivially be \eft at the end of  time step $t+1$, no matter how the corresponding good is allocated. That is, it suffices to show that Algorithm \ref{alg:foresight_2} builds an allocation that is \efo for every \emph{even} time step $t$, and the fact that it is also temporal-\eft immediately follows. We will show  a somewhat stronger statement for even time steps: \emph{for any $k\in \mathbb{Z}_{\ge 0}$, at the end of time step $t = 2k \le m$, \emph{(i)} both bundles contain $k$ goods each, and \emph{(ii)} either $\mathtt{ctr} = 0$ and only agent $1$ may envy agent $2$ by at most $\alpha_1 - \beta_1$, or $\mathtt{ctr} = 1$ and only agent $2$ may envy agent $1$ by at most $\alpha_2 - \beta_2$.} Of course, (i) is straightforward because, for any $k \ge 0$, we always create a matching between the two agents and the goods $g_{2k + 1}, g_{2k + 2}$. We are going to show (ii) using induction on $k$.

At time $t = 0$ (i.e., for $k = 0$) the statement of (ii) trivially holds: $\mathtt{ctr} = 0$ and no agent envies the other.  Now assume that (ii) holds for some $t = 2k$, such that $k\ge 0$ and $t = 2k +2 \le m$. At time $t = 2k +1$ the algorithm enters the `else' in line \ref{line:naive_else}. Note that whenever the condition of line \ref{line:naive_else->if} is true, i.e., goods $g_{2k + 1}, g_{2k + 2}$ induce any pattern of Table \ref{table:foresight_2} except of those of blocks I or II, each agent (weakly) prefers the good they receive according to the corresponding allocation shown in Table \ref{table:foresight_2}. That is, for agent $1$ we have $v_1(A_1^{2k + 2})-v_1(A_1^{2k}) \ge v_1(A_2^{2k + 2})-v_1(A_2^{2k})$, where the added superscript indicates the time step at the end of which we consider the bundle, and similarly for agent $2$. This observation that envy can only be reduced in this case, combined with the fact that $\mathtt{ctr}$ does not change here, implies that (ii) still holds for $t = 2k +2$. 

So, we may assume that the condition of line \ref{line:naive_else->if} is false, i.e., goods $g_{2k + 1}, g_{2k + 2}$ induce the pattern of  block I or block II of Table \ref{table:foresight_2} with one universally high-valued and one universally low-valued good. Here we consider the two cases of the induction hypothesis.
\medskip

\noindent\underline{\textbf{Case 1:} at $t = 2k$, $\mathtt{ctr} = 0$, $v_1(A_1^{2k}) \ge v_1(A_2^{2k}) - (\alpha_1 - \beta_1)$ and $v_2(A_2^{2k}) \ge v_2(A_1^{2k})$.}
According to lines \ref{line:naive_ctr_update}-\ref{line:naive_I-II_commit}, in this case, $\mathtt{ctr}$ becomes $1$ and the algorithm commits to giving the high-valued good to agent $1$. Thus,  
\[v_1(A_1^{2k+2})  = v_1(A_1^{2k}) + \alpha_1 \ge v_1(A_2^{2k}) - (\alpha_1 - \beta_1) + \alpha_1 =  v_1(A_2^{2k+2})\,,\]
as well as
\[v_2(A_2^{2k+2})  = v_2(A_2^{2k}) + \beta_2 \ge v_2(A_1^{2k}) + \beta_2 =  v_2(A_1^{2k+2}) - \alpha_2 + \beta_2\,,\]
i.e., (ii) still holds for $t = 2k +2$.
\medskip

\noindent\underline{\textbf{Case 2:} at $t = 2k$, $\mathtt{ctr} = 1$, $v_1(A_1^{2k}) \ge v_1(A_2^{2k})$ and $v_2(A_2^{2k}) \ge v_2(A_1^{2k})  - (\alpha_2 - \beta_2)$.} This is completely symmetric to Case 1 above. According to lines \ref{line:naive_ctr_update}-\ref{line:naive_I-II_commit}, $\mathtt{ctr}$ becomes $0$ and the algorithm commits to giving the high-valued good to agent $2$. Thus,  
\[v_1(A_1^{2k+2})  = v_1(A_1^{2k}) + \beta_1 \ge v_1(A_2^{2k}) + \beta_1 =  v_1(A_2^{2k+2}) - \alpha_1 + \beta_1\,,\]
and
\[v_2(A_2^{2k+2})  = v_2(A_2^{2k}) + \alpha_2 \ge v_2(A_1^{2k}) - (\alpha_2 - \beta_2) + \alpha_2 =  v_2(A_1^{2k+2})\,,\]
i.e., (ii) still holds for $t = 2k +2$.
\medskip

This concludes the induction and shows that Algorithm \ref{alg:foresight_2} builds an allocation that is \efo for every even time step $t$ and, thus, temporal-\eft.
\end{proof}

\begin{corollary}\label{cor:foresight_2-asymptotics}
    Assuming $m$ is large enough, for any $\lambda \in \mathbb{Z}_{>0}$, after a sufficient number of steps, the allocation built by Algorithm \ref{alg:foresight_2} becomes and remains $\lambda/(\lambda+2)$-\ef, $\lambda/(\lambda+1)$-\efo, and $\lambda/(\lambda+1)$-\prop. 
\end{corollary}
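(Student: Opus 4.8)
The plan is to read the corollary as a purely asymptotic consequence of the \emph{additive} (bounded-difference) guarantees already proved in Theorem~\ref{thm:foresight_2}: once the additive envy of every agent is bounded by a constant while the value each agent holds grows without bound, every multiplicative ratio tends to $1$, and it only remains to solve for the exact time/value thresholds that realize the parametrized ratios $\lambda/(\lambda+2)$ and $\lambda/(\lambda+1)$. Concretely, I would first record two uniform facts. By temporal-\eft (from Theorem~\ref{thm:foresight_2}), at every step and for the envious agent $i$ there is $S\subseteq A_j$, $|S|\le 2$, with $v_i(A_i)\ge v_i(A_j\setminus S)$, hence the additive envy satisfies $v_i(A_j)-v_i(A_i)\le v_i(S)\le 2\alpha_i$ at \emph{all} time steps (at even steps the sharper bound $\alpha_i-\beta_i$ holds, but the crude uniform bound suffices). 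And from part~(i) of the theorem each bundle carries $\approx t/2$ goods, so $v_i(A_i)$ and $v_i(S)=v_i(A_i)+v_i(A_j)$ grow linearly in $t$ for every agent with $\beta_i>0$.

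The second step exploits that for $n=2$ envy-freeness and proportionality are interchangeable. Writing $v_i(A_j)=v_i(S)-v_i(A_i)$, the condition $v_i(A_i)\ge\rho\,v_i(A_j)$ is equivalent to $v_i(A_i)\ge\frac{\rho}{1+\rho}v_i(S)$, while $\rho'$-\prop reads $v_i(A_i)\ge\frac{\rho'}{2}v_i(S)$; matching $\frac{\rho}{1+\rho}=\frac{\rho'}{2}$ shows that $\frac{\lambda}{\lambda+2}$-\ef and $\frac{\lambda}{\lambda+1}$-\prop are exactly the same requirement. I would therefore prove only the proportionality bound: the additive-envy inequality gives $v_i(S)\le 2v_i(A_i)+2\alpha_i$, i.e.\ $v_i(A_i)\ge \tfrac12 v_i(S)-\alpha_i$, which rearranges to $\frac{\lambda}{\lambda+1}$-\prop precisely once $v_i(S)\ge 2(\lambda+1)\alpha_i$, and the \ef\ claim then follows for free.

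For \efo I would argue directly and uniformly. Removing a good $g^\ast\in A_j$ of maximum value $\alpha_i$ to $i$ (if $A_j$ contains one; otherwise all of $A_j$ is low-valued for $i$ and deleting a single low good already restores non-envy), the additive bound gives $v_i(A_j\setminus\{g^\ast\})\le v_i(A_i)+\alpha_i$, so $v_i(A_i)\ge\frac{\lambda}{\lambda+1}\,v_i(A_j\setminus\{g^\ast\})$ holds as soon as $v_i(A_i)\ge\lambda\alpha_i$. Taking the maximum of the thresholds $2(\lambda+1)\alpha_i$ (on $v_i(S)$) and $\lambda\alpha_i$ (on $v_i(A_i)$) over the two agents and translating them into a number of steps via the linear growth of the bundle values yields a single time $T$ after which all three guarantees hold simultaneously and persist.

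The main obstacle is the growth hypothesis rather than the inequalities themselves: the argument needs $v_i(A_i),v_i(S)\to\infty$, which is automatic for agents of types~1 and~2 (where $\beta_i=1$, so each of the $\approx t/2$ goods in a bundle contributes at least $1$), but can fail for a type-3 agent ($\beta_i=0$) if only finitely many goods are ever high-valued for it. This is precisely the regime excluded by the ``$m$ large enough'' hypothesis, and here I would add the bookkeeping showing that while high-valued goods keep arriving the alternation driven by $\mathtt{ctr}$ (for universally high goods) together with the weakly-preferred predetermined matching (for goods high only for $i$) forces $i$'s count of high-valued goods, and hence $v_i(A_i)$, to diverge. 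Pinning down this divergence and confirming the thresholds are met at \emph{every} sufficiently large step---even and odd alike---is where the care lies; the conversion from additive to multiplicative guarantees is routine once growth is in hand.
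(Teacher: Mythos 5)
Your proof is correct and runs on the same engine as the paper's: temporal-\eft from Theorem~\ref{thm:foresight_2} yields the uniform additive bound $v_i(A_j^t)-v_i(A_i^t)\le 2\alpha_i$ at every step, and once $v_i(A_i^t)\ge\lambda\alpha_i$ all three multiplicative guarantees follow and persist. The paper does this via three direct one-line chains starting from $\min_{g,g'\in A_2^t}v_1\big(A_2^t\setminus\{g,g'\}\big)\le v_1(A_1^t)$, one chain each for \ef, \efo, and \prop; you instead route \ef and \prop through the two-agent equivalence $\rho$-\ef $\Leftrightarrow \frac{2\rho}{1+\rho}$-\prop, which is verified correctly (indeed $\frac{\lambda}{\lambda+2}$-\ef matches $\frac{\lambda}{\lambda+1}$-\prop) and buys you one computation instead of two. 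Two local inaccuracies are worth fixing, neither fatal. First, in your \efo argument the parenthetical claim that when $A_j$ is entirely low-valued for $i$ ``deleting a single low good already restores non-envy'' is an overstatement: at odd steps only \eft is available, so after removing one low good residual envy up to $\beta_i$ may remain; the ratio survives anyway since $v_i(A_i^t)\ge\lambda\alpha_i\ge\lambda\beta_i$, and note the paper's single inequality $\min_{g}v_1\big(A_2^t\setminus\{g\}\big)\le\min_{g,g'}v_1\big(A_2^t\setminus\{g,g'\}\big)+\alpha_1$ handles both of your cases without any split. Second, your concern about type-3 agents is well placed---the paper's threshold $m\ge 2\lambda\max_{i\in[2]}\lceil\alpha_i\rceil$ tacitly assumes $\beta_i\ge 1$, i.e., agents of types 1 and 2---but your proposed repair does not actually close it: no hypothesis on $m$ forces high-valued goods for a type-3 agent to keep arriving (a single universally high good followed only by goods of value $0$ to that agent defeats the \prop\ and \ef\ ratios forever, for any $m$), so the divergence of $v_i(A_i^t)$ is an assumption rather than a consequence of ``$m$ large enough.'' This imprecision is shared by the paper's own proof, so it does not count against your argument; under the $\beta_i\ge 1$ normalization both proofs are complete and essentially identical in substance.
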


\begin{proof}
Clearly, if $m$ is large enough (e.g., $m\ge 2 \lambda \max_{i\in[2]}\lceil\alpha_i \rceil$) there is some $t_*$ by which each agent $i$ has received value equal to at least $\lambda \alpha_i$. At the end of any time step $t\ge t_*$, we have for agent $1$
\[v_1(A_2^{t}) \le \min_{g, g'\in A_2^{t}} v_1(A_2^{t}\setminus\{g, g'\}) + 2 \alpha_1 \le  v_1(A_1^{t}) + 2 \alpha_1 \le (1 + 2/\lambda) v_1(A_1^{t})\,, \]
\[\min_{g\in A_2^{t}} v_1(A_2^{t}\setminus\{g\}) \le \min_{g, g'\in A_2^{t}} v_1(A_2^{t}\setminus\{g, g'\}) + \alpha_1 \le  v_1(A_1^{t}) +  \alpha_1 \le (1 + 1/\lambda) v_1(A_1^{t})\,, \text{\ \ and\ \ }\]
\[0.5(v_1(A_1^{t}) + v_1(A_2^{t}))  \le 0.5\big[v_1(A_1^{t}) + (1 + 2/\lambda)v_1(A_1^{t})\big] = (1 + 1/\lambda) v_1(A_1^{t})\,, \]
and similarly for agent $2$.
\end{proof}

%%%%%%%%%%%%%%%%%%%%%%%%%%%%%%%%%%%%%%%%%
\subsection{Foresight of Length $n-1$ Suffices}
\label{sec:foresight_n}
%%%%%%%%%%%%%%%%%%%%%%%%%%%%%%%%%%%%%%%%%
As we mentioned in the beginning of the previous section, we can generalize Theorem \ref{thm:foresight_2} to any number of agents, albeit with a more complicated algorithm. 

Similarly to Algorithm \ref{alg:foresight_2}, what we would like to achieve between time steps $kn$ and $(k+1)n$ (i.e., with goods $g_{kn+1}, \ldots, g_{(k+1)n}$) is to obtain an \efo allocation by taking the union of the \efo allocation of time step $kn$ and an appropriately chosen matching. However, it is easy to see that this fails if not done carefully, even for two agents, which is the reason why Algorithm \ref{alg:foresight_2} treats the patterns of blocks I and II with some care. In order to achieve a similar thing for general $n$, at time step $kn+1$ we construct a matching $\mathcal{M}$ involving the current good and the $n-1$ predicted goods, so that the matching's envy graph ``cancels out'' any problematic edges of the envy graph  of the \efo allocation of time step $kn$. Then, for what we call the $(k+1)$-th round, we allocate these $n$ goods according to $\mathcal{M}$. For $\mathcal{M}$ to have the nice aforementioned behavior, however, we take it to be a maximum weight matching with respect to carefully defined auxiliary valuation functions. These auxiliary functions encode all the information about which goods are high or low for which agents, while giving additional weight to goods for agents who come earlier in the topological sorting. The latter captures the intuition that such agents should have a higher priority during this round.

\begin{algorithm}[h!t]
		\caption{Priority-Matching$(v_1, \ldots, v_n; M)$ \\{\small {(The valuation functions, $v_i, i\in [n]$, are given via oracles; $M$ is given in an online fashion, one good at a time, along with a preview of the next $n-1$ goods after that.)}}}
		\begin{algorithmic}[1]
            \State $t\leftarrow 0$ \textbf{;} $\mathtt{ctr} \leftarrow 0$ \Comment{We initialize our counters and the allocation.}
			\For{$i \in N$}
                \State $A_i \leftarrow \emptyset$ \vspace{1ex} 
            \EndFor\vspace{1ex}
            \State \textbf{whenever a new good $g$ arrives along with a preview of the next $n-1$ goods} 
            \State $t\leftarrow t + 1$
            
            \If{$t = 1 \,\mathrm{mod}\,n$} \label{line:priority_if} \Comment{A new \emph{round} begins, namely round number $\lceil t/n \rceil$.}
                \State Construct the current envy-graph $G_t = (V_t, E_t)$. \label{line:priority_envy_graph} \Comment{$V_t = N$ and $(i,j)\in E_t$ if $v_i(A_j)> v_i(A_i)$. }
                \State Find a permutation $\pi$ that induces a topological sorting of $G_t$. \label{line:priority_topological} \Comment{If $(i,j)\in E_t$, then $\pi(i) < \pi(j)$. }
                \For{$i\in [n]$ and $h\in \{g_{t}, g_{t+1}, \ldots, g_{t+n-1}\}$ } \Comment{Define auxiliary valuation functions consistent with $\pi$.}
                        \State $\tilde{v}_{\pi^{-1}(i)}(h)  \leftarrow {\small \begin{cases} 2 (1+ 1/2n)^{n-i} & \text{if }v_{\pi^{-1}(i)}(h) = \alpha_{\pi^{-1}(i)} \\ (1+ 1/2n)^{n-i} & \text{otherwise} \end{cases}}$ \label{line:priority_aux} \Comment{$\pi^{-1}(i)$ is the $i$-th agent in the ordering.}
                \EndFor\vspace{0.6ex}
                \State Find a maximum weight  matching $\mathcal{M}$ between the agents and goods in $\{g_{t}, \ldots, g_{t+n-1}\}$ with respect to these auxiliary functions. \Comment{The weight of a pair $(j, h)$ is $\tilde{v}_{j}(h)$. The matching is perfect unless this is the last round; in this case, the matching is between $N$ and $\{g_{t}, \ldots, g_{m}\}$.}
                \State Add $g = g_t$ to a bundle according to $\mathcal{M}$, i.e., add $g_t$ to $A_j$ if and only if $(j, g_t)\in \mathcal{M}$.
                \State Commit to also allocate $\{g_{t+1}, \ldots, g_{t+n-1}\}$ according to $\mathcal{M}$ at time steps $t+1$ through $t+n-1$.
            \Else \label{line:priority_else} \Comment{The allocation of $g$ was decided at the beginning of this round.} %, $k-1$ time steps ago.}
                \State Allocate $g$ according to the commitment of time step 
                $(\lceil t/n \rceil - 1)n +1$ (when this round begun). 
            \EndIf
		\end{algorithmic}
		\label{alg:foresight_n}
\end{algorithm}

\begin{theorem}\label{thm:foresight_n-b}
For any personalized $2$-value instance augmented with foresight of length $n-1$, Algorithm \ref{alg:foresight_n} builds an allocation that is temporal-\eft, while it is \efo (and, thus, $1/n$-\mms) for every time step $t = kn, k\in \mathbb{Z}_{\ge 0}$. Moreover, if at any step $t_{0}$ the allocation fails to be $1/2$-\efo, then it remains $1/2$-\efo  at the end of every time step $t\ge \lceil t_0 / n \rceil n$. 
\end{theorem}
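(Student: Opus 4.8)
The plan is to reduce temporal $1/2$-\efo to a simple per-agent threshold condition and then track how quickly each agent crosses its threshold. Throughout I will use the two facts established in the first part of the theorem: the allocation is temporal-\eft, and it is \efo at every round boundary $t=kn$. The central observation, which I would isolate as a lemma, is the following monotone criterion: \emph{at any time $t$, if $v_i(A_i^t)\ge \alpha_i$ then agent $i$ is $1/2$-\efo towards every agent $j$.} To prove it, apply temporal-\eft to the pair $(i,j)$, removing the two most valuable goods of $A_j^t$; this yields $v_i(A_i^t)\ge v_i(A_j^t)-(\text{value of those two goods})$. If $A_j^t$ contains a good that is high for $i$, the best single good to delete is worth $\alpha_i$, and the bound $v_i(A_j^t)\le v_i(A_i^t)+2\alpha_i$ gives $\tfrac12 v_i(A_j^t\setminus\{g\})\le v_i(A_i^t)$ exactly when $v_i(A_i^t)\ge\alpha_i$; if $A_j^t$ is all-low for $i$, the two deleted goods are worth $2\beta_i$ and the same computation needs only $v_i(A_i^t)\ge\beta_i$. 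In both cases the hypothesis suffices.

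Given this criterion, a $1/2$-\efo violation at time $t$ can only be caused by an agent $i$ with $v_i(A_i^t)<\alpha_i$, i.e., an agent that has so far received no $i$-high good and only a few low ones. Since bundles only grow, $v_i(A_i^t)$ is nondecreasing, so once an agent reaches $\alpha_i$ it stays above its threshold forever; combined with the criterion and temporal-\eft this shows that if every agent is above threshold at the start of a round, then $1/2$-\efo holds at every step of that round (round boundaries being automatically \efo). Hence it suffices to control the rounds in which some agent is still below its threshold.

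The key step, which I would phrase as a second lemma, is that \emph{a failure forces its own agent past the threshold within the same round.} Suppose $1/2$-\efo fails at $t_0$ for the pair $(i,j)$ and let $r=\lceil t_0/n\rceil$. First, no failure can occur during round $1$, since there any bundle has at most one good, which can simply be removed; hence $r\ge 2$ and $A_i^{t_0}\neq\emptyset$. A short case analysis (the all-low case is impossible, as it would force $A_i^{t_0}=\emptyset$) shows $A_j^{t_0}$ contains an $i$-high good, so $v_i(A_j^{t_0})>2v_i(A_i^{t_0})+\alpha_i$. Feeding this into the \efo guarantee at the round boundary, $v_i(A_i^{rn})\ge v_i(A_j^{rn})-\alpha_i\ge v_i(A_j^{t_0})-\alpha_i>2v_i(A_i^{t_0})$, so agent $i$ must acquire, after $t_0$ and within round $r$, a good worth more than its entire current bundle; for every agent type this good is necessarily $i$-high, whence $v_i(A_i^{rn})\ge\alpha_i$. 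Thus agent $i$ is above threshold from $rn$ onward and, by the criterion, never again causes a violation.

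Assembling these pieces is where the real difficulty lies. The criterion plus monotonicity confine future violations to agents still below threshold, and the second lemma shows that any agent witnessing a failure in round $r$ has crossed its threshold by time $rn$; what remains, and what I expect to be the main obstacle, is the \emph{global} confinement---arguing that no agent remains below threshold long enough to trigger a fresh violation after time $\lceil t_0/n\rceil n$. This is precisely where the priority matching of Algorithm~\ref{alg:foresight_n} must be used in earnest: one has to show, via the auxiliary weights of line~\ref{line:priority_aux} together with the \efo bound at round boundaries, that the envious (hence topologically earlier) sub-threshold agents are served the contested high-valued goods quickly enough that the entire set of below-threshold agents is exhausted by the end of round $r$. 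Type $3$ agents ($\beta_i=0$), whose threshold can be crossed only by actually receiving a high-valued good, are the most delicate case and are exactly where the fine structure of the maximum-weight matching---rather than the clean two-good bound above---becomes indispensable.
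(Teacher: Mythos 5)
Your proposal has a genuine gap, but it is a gap of scope rather than of reasoning: you explicitly assume ``the two facts established in the first part of the theorem''---temporal-\eft and \efo at every $t=kn$---yet these facts \emph{are} the first part of the statement to be proven, and they constitute the bulk of the paper's proof. The paper establishes them by induction on rounds, maintaining three invariants at each boundary $t=kn$: all bundles have $k$ goods, the envy graph is acyclic, and every envy edge $(i,j)$ represents envy of at most $\alpha_i-\beta_i$. Preserving the last two invariants is exactly where the maximum-weight matching with the auxiliary weights of line \ref{line:priority_aux} (values $(1+\frac{1}{2n})^{n-\pi(i)}$, doubled on high-valued goods, scaled by the topological order of the previous envy graph) does its work: one shows via exchange arguments---swapping $h_i$ and $h_j$ for a would-be large envy edge, or rotating the goods along a would-be envy cycle---that the matching's weight would strictly increase, contradicting maximality (Claims \ref{claim:priority_envy} and \ref{claim:priority_cycles}). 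None of this appears in your proposal, so as a proof of the stated theorem it is incomplete; notably, the envy-graph machinery you defer to your ``global confinement'' step is in fact needed precisely in the part you took for granted.

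For the final clause, by contrast, your two lemmas are correct and give a route that is arguably cleaner than the paper's. Your threshold criterion (if $v_i(A_i^t)\ge\alpha_i$ then $i$ is $1/2$-\efo towards everyone, via temporal-\eft) coincides with the paper's closing computation. But where the paper proves $v_i(A_i^{kn})\ge\alpha_i$ by re-running the matching swap argument inside the failing round (first forcing $v_i(h_j)=\alpha_i$, then $v_i(h_i)=\alpha_i$), you derive it from \efo at the boundary plus monotonicity, balancedness, and the $2$-value structure alone: $v_i(A_i^{rn})\ge v_i(A_j^{rn})-\alpha_i\ge v_i(A_j^{t_0})-\alpha_i>2v_i(A_i^{t_0})$ forces the single good $i$ receives after $t_0$ to satisfy $v_i(h)>v_i(A_i^{t_0})\ge\beta_i$, hence $v_i(h)=\alpha_i$. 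This is valid (your parenthetical justification that the all-low case ``forces $A_i^{t_0}=\emptyset$'' is not quite the right reason---the exclusion follows from balancedness, since $v_i(A_i^{t_0})\ge(r-1)\beta_i$ while an all-low $A_j^{t_0}$ gives $v_i(A_j^{t_0}\setminus\{g\})\le(r-1)\beta_i$---but this is a trivial repair) and it avoids the matching structure entirely. Finally, your anticipated ``main obstacle'' is a misdiagnosis: the honest content of the theorem's last sentence, and all the paper actually proves, is per-pair---the failing agent $i$ crosses its threshold by the end of its round and, by your own criterion, never again violates $1/2$-\efo towards \emph{any} agent. No global confinement of below-threshold agents is argued in the paper (a fresh failure by a different pair in a later round is simply a new instance of the same statement), and no further use of the matching is made there. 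So the genuinely missing piece is not the confinement argument you flagged, but the first part of the theorem itself.
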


\begin{proof}
The proof has similar structure to the proof of Theorem \ref{thm:foresight_2}, but the induction is fairly more complicated due to the fact that the matching process is much less trivial here. In what follows, we refer to the execution of Algorithm \ref{alg:foresight_n} for time steps $kn+1, kn+1, \ldots, (k+1)n$ as the $(k+1)$-th \emph{round}, for any $k\in \mathbb{Z}_{\ge 0}$.

We begin with a generalization of the first observation made in the 
proof of Theorem \ref{thm:foresight_2}:
if an allocation is \efo at the end of some time step $t\ge 0$, then it will trivially be \eft at the end of time step $t+\ell$, as long as no bundle receives more than one out of the last $\ell$ goods. For Algorithm \ref{alg:foresight_n}, this means that if the allocation is \efo at the end of time step $t = kn$ (i.e., right before round $k+1$ begins), for every $k\in \mathbb{Z}_{\ge 0}$, then it remains \eft for every time step in between. This holds just because in every round each agent receives (at most) one good via the corresponding matching $\mathcal{M}$. So, it suffices to show that Algorithm \ref{alg:foresight_n} builds an allocation that is \efo for every time step $t$ which is a multiple of $n$, and the fact that it is also temporal-\eft immediately follows. Again we show a somewhat stronger statement: \emph{for any $k\in \mathbb{Z}_{\ge 0}$, at the end of time step $t = kn \le m$,
\begin{enumerate}[leftmargin=25pt,itemsep=3pt]
    \item[(i)] all bundles contain $k$ goods each;
    \item[(ii)] the corresponding envy graph (i.e., $G_{kn+1}$ constructed in the beginning of the next time step in line \ref{line:priority_envy_graph}) is acyclic;
    \item[(iii)] any edge $(i, j)$ in the above graph indicates envy 
    which is at most $\alpha_i - \beta_i$.
\end{enumerate}}
Like before, (i) is straightforward, since the algorithm always creates a matching between the agents and (at most) $n$ goods in the beginning of a round and, as these goods arrive, it allocates them according to the matching. We are going to show (ii) and (iii) using induction on $k$.
At time $t = 0$ (i.e., for $k = 0$) the statements of (ii) and (iii) trivially hold, as no agent envies another agent.  Now assume that (ii) and (iii) hold at the end of some time step $t = kn$, such that $k\ge 0$ and $t = (k + 1)n\le m$.  At the beginning of time step $t = kn +1$ the algorithm enters the `if' in line \ref{line:priority_if} and constructs the envy graph $G:= G_{kn+1}$ in line \ref{line:priority_envy_graph}. This $G$ is exactly the envy graph for which the two parts of the induction hypothesis hold. The fact that $G$ is acyclic is what allows it to be topologically sorted (see, e.g., \citet{cormen2022introduction}) and, hence, makes line \ref{line:priority_topological} well-defined. 
We use $G'$ to denote the envy graph at the end of the current (i.e., the $(k+1)$-th) round; note that $G'$ is constructed as $G_{(k+1)n+1}$ at the beginning of the next round.

\begin{claim}\label{claim:priority_envy}
Any edge $(i, j)$ of $G'$ indicates envy which is at most $\alpha_i - \beta_i$.
\end{claim}

\begin{proof}[Proof of Claim \ref{claim:priority_envy}]
\renewcommand\qedsymbol{{\scriptsize \textbf{Cl.~\ref{claim:priority_envy}} $\boxdot$}}
Consider any edge $(i, j)$ of $G'$ and let $h_i, h_j$ be the goods agents $i$ and $j$ received, respectively, in round $k+1$. This means that $h_i, h_j$ were matched with $i$ and $j$, respectively, in $\mathcal{M}$. By the induction hypothesis, we know that the envy from agent $i$ towards agent $j$ at the end of time step $t = kn$---if it existed at all---was upper bounded by $\alpha_i - \beta_i$, i.e., $v_i(A_j^{kn}) - v_i(A_i^{kn}) \le \alpha_i - \beta_i$. If $v_i(h_i) \ge v_i(h_j)$, then  
\[v_i(A_j^{(k+1)n}) - v_i(A_i^{(k+1)n}) = v_i(A_j^{kn}) + v_i(h_j) - v_i(A_i^{kn}) - v_i(h_i) \le \alpha_i - \beta_i\,.\]

We need to consider the case where $\beta_i = v_i(h_i) < v_i(h_j) = \alpha_i$. If there was no  edge $(i, j)$ in $G$, i.e., if  $v_i(A_j^{kn}) - v_i(A_i^{kn}) \le 0$, then clearly
\[v_i(A_j^{(k+1)n}) - v_i(A_i^{(k+1)n}) = v_i(A_j^{kn}) + \alpha_i - v_i(A_i^{kn}) - \beta_i \le \alpha_i - \beta_i\,.\]
So we may assume that $(i, j)$ was an edge in $G$. Dy definition, this means that in any topological sorting agent $i$ must come before agent $j$. In particular, $\hat{\imath} = \pi(i) < \pi(j) = \hat{\jmath}$, i.e., in the sorting induced by the permutation $\pi$, agent $i$ is the $\hat{\imath}$-th agent and agent $j$ is the $\hat{\jmath}$-th. By the definition of the auxiliary functions in line \ref{line:priority_aux}, we have
\[\tilde{v}_{i}(h_{j}) = 2\Big(1 + \frac{1}{2n}\Big)^{n-\hat{\imath}} \text{\quad and \quad}
\tilde{v}_{i}(h_{i}) =  \Big(1 + \frac{1}{2n}\Big)^{n-\hat{\imath}}\,.\]
Let $\mathcal{M}'$ be the matching that one gets from $\mathcal{M}$ by switching  $h_i$ and $h_j$. That is, in $\mathcal{M}'$  agent $i$ is matched with  $h_j$, agent $j$ is matched with $h_i$, and every other agent $\ell$ is matched with $h_{\ell}$ as in $\mathcal{M}$. Now, if we use $w(\mathcal{M})$ to denote the sum of weights of the pairs in $\mathcal{M}$, we have 
\begin{IEEEeqnarray*}{rCl}
w(\mathcal{M}')-w(\mathcal{M}) & = & \Big(\tilde{v}_{i}(h_{j}) + \tilde{v}_{j}(h_{i})+ \!\!\!\sum_{\ell\in N\setminus\{i,j\}} \!\!\!\tilde{v}_{\ell}(h_{\ell}) \Big) - \sum_{\ell\in N} \tilde{v}_{\ell}(h_{\ell})
\\
& = & \tilde{v}_{i}(h_{j}) - \tilde{v}_{i}(h_{i})  + \tilde{v}_{j}(h_{i}) - \tilde{v}_{j}(h_{j})
\\
& \ge & 2\Big(1 + \frac{1}{2n}\Big)^{n-\hat{\imath}} - \Big(1 + \frac{1}{2n}\Big)^{n-\hat{\imath}} + \Big(1 + \frac{1}{2n}\Big)^{n-\hat{\jmath}} - 2\Big(1 + \frac{1}{2n}\Big)^{n-\hat{\jmath}} \\
& = & \Big(1 + \frac{1}{2n}\Big)^{n-\hat{\imath}} - \Big(1 + \frac{1}{2n}\Big)^{n-\hat{\jmath}} = \Big(1 + \frac{1}{2n}\Big)^{n-\hat{\jmath}} \Big[\Big(1 + \frac{1}{2n}\Big)^{\hat{\jmath}-\hat{\imath}} - 1 \Big] \\
& \ge &  \Big(1 + \frac{1}{2n}\Big)^{0} \Big[\Big(1 + \frac{1}{2n}\Big)^{1} - 1 \Big] = \frac{1}{2n} > 0 \,,
\end{IEEEeqnarray*}
where for the first inequality we used the exact values of $\tilde{v}_{i}(h_{j}), \tilde{v}_{i}(h_{i})$ and lower and upper bounds for  $\tilde{v}_{j}(h_{i}), \tilde{v}_{j}(h_{j})$. This, however, contradicts the choice of $\mathcal{M}$ as a maximum weight matching. Thus, under the assumption that $\beta_i = v_i(h_i) < v_i(h_j) = \alpha_i$, $(i, j)$ cannot be an edge in $G$. We conclude that, in any case, an edge $(i, j)$ of $G'$ indicates envy which is no more than $\alpha_i - \beta_i$.
\end{proof}

\begin{claim}\label{claim:priority_cycles}
The graph $G'$ is acyclic.
\end{claim}

\begin{proof}[Proof of Claim \ref{claim:priority_cycles}]
\renewcommand\qedsymbol{{\scriptsize \textbf{Cl.~\ref{claim:priority_cycles}} $\boxdot$}}
Suppose, towards a contradiction, that $G'$ contains a simple directed cycle $C = (i_1, i_2, \allowbreak \ldots, \allowbreak i_s, \allowbreak i_1)$. Also, let $h_{i_1}, \ldots, h_{i_s}$ denote the goods these agents received, respectively, in round $k+1$. Since $G$ was acyclic, not all edges of $C$ existed in $G$. Without loss of generality (as it is a matter of renaming the agents / vertices), we may assume that the $(i_1, i_2)$ was not an edge in $G$. Our first observation is that the only way this could happened, is that $v_{i_1}(h_{i_1}) = \beta_{i_1}$ but $v_{i_1}(h_{i_2}) = \alpha_{i_1}$. 

We next note that it must be the case that $v_{i_2}(h_{i_2}) = \alpha_{i_2}$, otherwise we could define a new matching $\mathcal{M}'$ by only switching $h_{i_1}$ and $h_{i_2}$ in $\mathcal{M}$ and improve the maximum weight, similarly to what we did in the proof of Claim \ref{claim:priority_envy}:
\begin{IEEEeqnarray*}{rCl}
w(\mathcal{M}')-w(\mathcal{M}) & = & \tilde{v}_{i_1}(h_{i_2}) - \tilde{v}_{i_1}(h_{i_1})  + \tilde{v}_{i_2}(h_{i_1}) - \tilde{v}_{i_2}(h_{i_2})
\\
& \ge & 2\Big(1 + \frac{1}{2n}\Big)^{n-\pi(i)} - \Big(1 + \frac{1}{2n}\Big)^{n-\pi(i)} + \Big(1 + \frac{1}{2n}\Big)^{n-\pi(j)} - \Big(1 + \frac{1}{2n}\Big)^{n-\pi(j)} \\
& \ge &  \Big(1 + \frac{1}{2n}\Big)^{0}  > 0 \,,
\end{IEEEeqnarray*}
where we used $v_{i_2}(h_{i_2}) = \beta_{i_2}$ for the first inequality, contradicting the choice of $\mathcal{M}$. So, it must be $v_{i_2}(h_{i_2}) = \alpha_{i_2}$.

The third observation we need for the proof is that for any edge $(i_r, i_{r+1})$ in $C$, such that $v_{i_r}(h_{i_r}) = \alpha_{i_r}$, it must also be the case that $v_{i_r}(h_{i_{r+1}}) = \alpha_{i_r}$.\footnote{\ We use the standard convention that $i_{s+1} := i_1$.} 
To see this, notice that 
\begin{IEEEeqnarray*}{rCl}
v_{i_r}(A_{i_{r+1}}^{(k+1)n}) - v_{i_r}(A_{i_r}^{(k+1)n}) & = & v_{i_r}(A_{i_{r+1}}^{kn}) - v_{i_r}(A_{i_r}^{kn}) + v_{i_r}(h_{i_{r+1}}) - v_{i_r}(h_{i_r})
\\
& \le & \alpha_{i_r} - \beta_{i_r} + v_{i_r}(h_{i_{r+1}})  - \alpha_{i_r} \\
& = &  v_{i_r}(h_{i_{r+1}}) - \beta_{i_r}  \,,
\end{IEEEeqnarray*}
where we used the induction hypothesis for the first inequality. Since this difference must be positive for $(i_r, i_{r+1})$ to be in $C$, we get that $v_{i_r}(h_{i_{r+1}}) = \alpha_{i_r}$.

Finally, we distinguish two cases, depending on whether there is another agent, besides $i_1$, who sees the good it received in this round as low-valued.
\medskip

\noindent\underline{\textbf{Case 1:} $v_{i_r}(h_{i_r}) = \alpha_{i_r}$, for all $r\in \{2, \ldots, s\}$.} In this case, we can get a new matching $\mathcal{M}'$ by  assigning each good among $h_{i_1}, \ldots, h_{i_s}$ to the `previous' agent with respect to the cycle, i.e., match $h_{i_1}$ to $i_s$, $h_{i_2}$ to $i_1$, and so on, and keeping the rest of the matching the same as $\mathcal{M}$. By the third  observation above, we have that $v_{i_r}(h_{i_{r+1}}) = \alpha_{i_r}$, for all $r\in \{2, \ldots, s\}$, whereas by our first observation about agent $i_1$'s envy, we also have $v_{i_1}(h_{i_2}) = \alpha_{i_1}$. Note that no weight is decreased going from $\mathcal{M}$ to $\mathcal{M}'$ and  $\tilde{v}_{i_1}(h_{i_2})$ is now increased to $2\big(1 + \frac{1}{2n}\big)^{n-\pi(i_1)}$ (from $\big(1 + \frac{1}{2n}\big)^{n-\pi(i_1)} $ that was in $\mathcal{M}$). This contradicts the choice of $\mathcal{M}$ as a maximum weight matching. 
\medskip

\noindent\underline{\textbf{Case 2:} $v_{i_r}(h_{i_r}) = \beta_{i_r}$, for some $r\in \{2, \ldots, s\}$.} Let $\ell$ the smallest such $r$. That is, $v_{i_x}(h_{i_x}) = \alpha_{i_x}$, for all $x\in \{2, \ldots, \ell-1\}$, but $v_{i_\ell}(h_{i_\ell}) = \beta_{i_\ell}$.
In this case, we can get a new matching $\mathcal{M}'$ by  assigning each good among $h_{i_1}, \ldots, h_{i_\ell}$ to the `previous' agent with respect to the cycle and , i.e., match $h_{i_2}$ to $i_1$, $h_{i_3}$ to $i_2$,  and so on, as well as $h_{i_1}$ to $i_\ell$, while keeping the rest of the matching the same as $\mathcal{M}$. Now we can argue like in Case 1. We have  $v_{i_r}(h_{i_{r+1}}) = \alpha_{i_r}$, for all $r\in \{2, \ldots, \ell - 1\}$, by our third  observation, whereas $v_{i_1}(h_{i_2}) = \alpha_{i_1}$ like before and, of course, $v_{i_\ell}(h_{i_1}) \ge \beta_{i_\ell}$.
Again, no weight is decreased going from $\mathcal{M}$ to $\mathcal{M}'$ and $\tilde{v}_{i_1}(h_{i_2})$ is increased from $\big(1 + \frac{1}{2n}\big)^{n-\pi(i_1)}$ to $2\big(1 + \frac{1}{2n}\big)^{n-\pi(i_1)}$ (possibly $\ell$'s weight increased as well). Like in Case 1, this contradicts the choice of $\mathcal{M}$ as a maximum weight matching. 
\medskip

We conclude that $G'$ cannot contain any cycles.
\end{proof}

Claims \ref{claim:priority_envy} and \ref{claim:priority_cycles} complete the induction. Thus,  Algorithm \ref{alg:foresight_n} builds an allocation that is \efo for every time step which is a multiple of $n$ and, because the allocation is balanced, it is also temporal-\eft.

What remains to be shown is the last part of the theorem. Suppose that at the end of some step $t_{0}$ the allocation fails to be $1/2$-\efo. Note that this happens during round $k=\lceil t_{0}/n \rceil$ and let $i, j$ be two agents, such that $v_i(A^{t_0}_i) < 0.5 v_i(A^{t_0}_j \setminus S)$ for any $S\subseteq A^{t_0}_j$ with $|S|\le 1$. It is easy to see that this can only happen if $i$ was envious of $j$ already at the end of round $k -1$; otherwise no single good added in $j$'s bundle can violate (even exact) \efo from $i$'s perspective. 
Besides this, we also claim that, if $h_i, h_j$ are the goods agents $i$ and $j$ receive, respectively, in round $k$, then $v_i(h_j) = \alpha_i$. Indeed, using property (iii) we showed above, 
even if agent $j$ receives its good first in round $k$, we  have 
\[v_i(A^{t_0}_j) - v_i(A^{t_0}_i) \le v_i(A^{(k-1)n}_j) + v_i(h_j) - v_i(A^{(k-1)n}_i) \le \alpha_i - \beta_i + v_i(h_j)\,.\]
So, if $h_j$ was low-valued for agent $i$, we would be able to eliminate $i$'s envy towards $j$ by just removing a high-valued good from  $A^{t_0}_j$ (which must exist, otherwise $v_i(A^{t_0}_j)$ and $v_i(A^{t_0}_i)$ would only differ by a single low-valued good at most).
Now, given that $v_i(h_j) = \alpha_i$ and that $i$ was already envious of $j$, it must also be the case that $v_i(h_i) = \alpha_i$, or we could repeat the exact same argument as in the proof of Claim \ref{claim:priority_envy} (i.e., construct the matching $\mathcal{M}'$ by switching  $h_i$ and $h_j$ and get a contradiction by showing it has larger weight than $\mathcal{M}$). Therefore, by the end of the round (at time step $kn$), we have $v_i(A^{kn}_i) \ge \alpha_i$. Since the allocation is temporal-\eft, at the end of any time step $t\ge kn$, we have
\[v_i(A^{t}_i)   \ge  \min_{S:|S|\le 2} v_i(A^{t}_j \setminus S)
 \ge  \min_{S:|S|\le 1} v_i(A^{t}_j \setminus S) - \alpha_i 
 \ge   \min_{S:|S|\le 1} v_i(A^{t}_j \setminus S) - v_i(A^{t}_i)\,, \]
 and, thus, $v_i(A^{t}_i)   \ge  0.5 \min_{S:|S|\le 1} v_i(A^{t}_j \setminus S)$, i.e., the allocation at the end of time step $t$ is $1/2$-\efo  
\end{proof}

Like in Section \ref{sec:foresight_2}, we can get the analog of Corollary \ref{cor:foresight_2-asymptotics} but for any number of agents.

\begin{corollary}\label{cor:foresight_n-asymptotics}  
For any $\lambda \in \mathbb{Z}_{>0}$, if $m$ is large enough, after a sufficient number of steps the allocation built by Algorithm \ref{alg:foresight_n} becomes and remains $\lambda/(\lambda+2)$-\ef, $\lambda/(\lambda+1)$-\efo, and $\lambda/(\lambda+2)$-\prop. 
\end{corollary}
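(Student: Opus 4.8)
The plan is to follow the template of Corollary \ref{cor:foresight_2-asymptotics}: reduce all three guarantees to the single claim that, for $m$ large enough, there is a time $t_*$ with $v_i(A_i^t) \ge \lambda\,\alpha_i$ for every agent $i$ and every $t\ge t_*$, and then read off the three approximation factors from temporal-\eft alone. By Theorem \ref{thm:foresight_n-b} the allocation is \eft at \emph{every} step, so $\min_{S:\,|S|\le 2} v_i(A_j^t\setminus S)\le v_i(A_i^t)$ for all $i,j$, whence $v_i(A_j^t)\le v_i(A_i^t)+2\alpha_i$. Dividing by $v_i(A_i^t)\ge\lambda\alpha_i$ gives $v_i(A_j^t)\le(1+2/\lambda)\,v_i(A_i^t)$, i.e.\ $\lambda/(\lambda+2)$-\ef; removing one good instead of two costs only an extra $\alpha_i$ and yields $\min_g v_i(A_j^t\setminus\{g\})\le v_i(A_i^t)+\alpha_i\le(1+1/\lambda)\,v_i(A_i^t)$, i.e.\ $\lambda/(\lambda+1)$-\efo; and summing the envy bound over the $n-1$ other agents gives $v_i(S)\le n\,v_i(A_i^t)+2(n-1)\alpha_i$, so $v_i(S)/n\le v_i(A_i^t)+2\alpha_i\le(1+2/\lambda)\,v_i(A_i^t)$, i.e.\ $\lambda/(\lambda+2)$-\prop. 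This last computation is exactly where the proportionality factor degrades from the $\lambda/(\lambda+1)$ of the two-agent case: averaging the additive slack $2\alpha_i$ over $n$ bundles no longer absorbs a factor of two. Since bundles only grow, $v_i(A_i^t)$ is nondecreasing, so once the three properties hold at $t_*$ they hold for all $t\ge t_*$, which is the ``becomes and remains'' part.

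The remaining and main task is to produce $t_*$, i.e.\ to show each agent eventually accumulates value $\lambda\,\alpha_i$. For this I would exploit the stronger structure that Theorem \ref{thm:foresight_n-b} provides at multiples of $n$: at $t=kn$ the envy graph is acyclic and each of its edges $(i,j)$ certifies $v_i(A_j^{kn})-v_i(A_i^{kn})\le\alpha_i-\beta_i$ (the inequality also holding, with nonnegative slack, when there is no edge), so $v_i(A_j^{kn})\le v_i(A_i^{kn})+(\alpha_i-\beta_i)$ for every $j$. Summing over the $n$ bundles and writing $S_{kn}$ for the set of the first $kn$ goods yields
\[v_i(A_i^{kn})\ \ge\ \frac{v_i(S_{kn})}{n}-(\alpha_i-\beta_i)\,,\]
so it suffices that $v_i(S_{kn})\to\infty$ fast enough. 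For agents of types $1$ and $2$ this is automatic, since $\beta_i=1$ forces $v_i(S_{kn})\ge kn$ and hence $v_i(A_i^{kn})\ge k-(\alpha_i-\beta_i)$, which exceeds $\lambda\alpha_i$ beyond an explicit threshold on $k$; taking $t_*$ to be the largest such threshold (a multiple of $n$) over all such agents settles these cases.

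The delicate case --- and the step I expect to be the real obstacle --- is a type-$3$ agent ($\beta_i=0$), whose value comes solely from high-valued goods: here $v_i(S_{kn})=\alpha_i\,n_h(i,kn)$, and the displayed bound only certifies $v_i(A_i^{kn})\ge\lambda\alpha_i$ once agent $i$ has \emph{seen} at least $n(\lambda+1)$ of its high-valued goods. As long as high-valued goods for $i$ keep arriving this threshold is eventually crossed, so the argument goes through under the natural reading that ``$m$ large enough'' means large enough for every agent to have received enough goods and, for type-$3$ agents, enough high-valued goods --- precisely the (implicitly assumed) accumulation step behind Corollary \ref{cor:foresight_2-asymptotics}. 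This hypothesis is essential rather than cosmetic: Claim \ref{claim:priority_envy} only bounds envy by $\alpha_i-\beta_i$, so a type-$3$ agent that sees a lone high-valued good allocated to someone else can sit at $v_i(A_i^{kn})=0$ while $v_i(A_j^{kn})=\alpha_i$, a configuration that Theorem \ref{thm:foresight_n-b} permits (it is still \efo). Consequently, the part that would need the most care is stating the exact quantitative condition on the instance under which accumulation is guaranteed and making the threshold for $t_*$ explicit in terms of $n$, $\lambda$, and $\max_i\alpha_i$.
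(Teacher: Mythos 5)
Your proposal is correct and takes essentially the same route as the paper: the paper's proof likewise presupposes a time $t_*$ by which every agent $i$ holds value at least $\lambda\alpha_i$, reads the $\lambda/(\lambda+2)$-\ef and $\lambda/(\lambda+1)$-\efo factors off temporal-\eft exactly as in Corollary \ref{cor:foresight_2-asymptotics}, and gets $\lambda/(\lambda+2)$-\prop by the same averaging of the envy bound over the $n$ bundles (the paper's computation yields $\big(1+\tfrac{2(n-1)}{\lambda n}\big)v_1(A_1^t)\le(1+\tfrac{2}{\lambda})v_1(A_1^t)$, which is your $2(n-1)\alpha_i/n\le 2\alpha_i$ slack, including your correct remark on why the two-agent factor $\lambda/(\lambda+1)$ degrades). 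The one place you go beyond the paper is the existence of $t_*$: the paper simply asserts it as clear (giving a sample bound on $m$ in Corollary \ref{cor:foresight_2-asymptotics}), whereas you derive it from property (iii) at multiples of $n$ --- though for type-1/2 agents part (i) of the induction already gives $v_i(A_i^{kn})\ge k\beta_i=k$ directly, without summing envy bounds --- and your observation that accumulation genuinely fails for type-3 agents ($\beta_i=0$), so that ``$m$ large enough'' must implicitly mean every agent eventually receives value $\lambda\alpha_i$, is an accurate identification of a hypothesis the paper leaves unstated rather than a flaw in your argument.
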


\begin{proof}
The proof differs from that of Corollary \ref{cor:foresight_2-asymptotics} only in getting the guarantee for proportionality.
Assuming that by time step $t_*$ by  each agent $i$ has received value equal to at least $\lambda \alpha_i$, we have for agent $1$ \linebreak
\[\frac{1}{n}\sum_{i=1}^{n}v_1(A_i^{t}) \le \frac{1}{n} \Big[v_1(A_1^{t}) + (n-1)\Big(1 + \frac{2}{\lambda}\Big)v_1(A_1^{t})\Big] =  \Big(1 + \frac{2(n-1)}{\lambda n}\Big) v_1(A_1^{t}) \le \Big(1 + \frac{2}{\lambda}\Big) v_1(A_1^{t})\,, \]
and similarly for agents $2$ through $n$.
\end{proof}

%%%%%%%%%%%%%%%%%%%%%%%%%%%%%%%%%%%%%%%%%
%%%%%%%%%%%%%%%%%%%%%%%%%%%%%%%%%%%%%%%%%
\section{Going Beyond $2$-Value Instances}
\label{sec:beyond_2value}
%%%%%%%%%%%%%%%%%%%%%%%%%%%%%%%%%%%%%%%%%
%%%%%%%%%%%%%%%%%%%%%%%%%%%%%%%%%%%%%%%%%

As we mentioned in the Introduction, there is a simple way of approximating any additive instance via a $2$-value instance: we set a threshold for each agent and we round everything up to the maximum value this agent has for any good or down to the minimum value this agent has for any good. Of course, this naive idea does not give any guarantees, in general, but it seems like it is a very natural  
approach for personalized interval-restricted instances. Indeed, we can transfer all of our positive results to personalized interval-restricted instances at the expense of an additional multiplicative factor that is equal to the harmonic mean of the upper and lower bounds of the values an agent may have for the goods.

\begin{theorem}\label{thm:reduction}
For any personalized interval-restricted instance (augmented with foresight or not), there is a simple reduction to a personalized $2$-value instance (which is equally augmented), so that any guarantee with respect to \ef, \efo, \eft, \prop, or \mms we may obtain for the latter (e.g., via Algorithms \ref{alg:main_alg}, \ref{alg:foresight_2}, or \ref{alg:foresight_n}) can be translated to the same guarantee for the original instance at the expense of an additional multiplicative factor of $\sqrt{\alpha_i}$ for each agent $i$. 
\end{theorem}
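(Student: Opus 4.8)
The plan is to realize the reduction through a \emph{one-sided} rounding. For each agent $i$ I would threshold at the geometric mean $\sqrt{\alpha_i}$ of the two endpoints of the interval $[1,\alpha_i]$ and define the proxy valuation
\[
\hat v_i(g) = \begin{cases} \sqrt{\alpha_i} & \text{if } v_i(g) \ge \sqrt{\alpha_i},\\ 1 & \text{otherwise.}\end{cases}
\]
This is a personalized $2$-value instance in which agent $i$ has low value $1$ and high value $\sqrt{\alpha_i}$ (a type~1 agent), and it is computable fully online: when $g$ arrives we obtain $\hat v_i(g)$ from $v_i(g)$ by a single threshold comparison, and whenever the original instance reveals a preview of the next $\ell$ goods we likewise obtain their proxy values, so foresight of length $\ell$ is preserved verbatim. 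The reduction then simply runs the chosen $2$-value algorithm (Algorithm~\ref{alg:main_alg}, \ref{alg:foresight_2}, or \ref{alg:foresight_n}) on $\hat v$ and allocates each good to whichever agent that algorithm selects.

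The single structural ingredient I would prove first is the pointwise comparison
\[
\tfrac{1}{\sqrt{\alpha_i}}\, v_i(S) \le \hat v_i(S) \le v_i(S) \qquad \text{for every } i\in N \text{ and every } S\subseteq M,
\]
established good-by-good: in the case $v_i(g)\ge\sqrt{\alpha_i}$ we have $\hat v_i(g)=\sqrt{\alpha_i}\le v_i(g)$ and $\sqrt{\alpha_i}\ge v_i(g)/\sqrt{\alpha_i}$ since $v_i(g)\le\alpha_i$, while in the case $v_i(g)<\sqrt{\alpha_i}$ we have $\hat v_i(g)=1\le v_i(g)$ and $1\ge v_i(g)/\sqrt{\alpha_i}$ since $v_i(g)\le\sqrt{\alpha_i}$; additivity then lifts this to all sets. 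The crucial feature is that $\hat v_i$ \emph{under}-estimates $v_i$ yet never by more than a factor $\sqrt{\alpha_i}$, and the geometric-mean threshold is exactly the choice balancing the two rounding directions so that the bound holds with $\sqrt{\alpha_i}$ rather than $\alpha_i$.

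Given this bound, transferring each guarantee is a short computation in which the factor enters only once. If the proxy allocation is $\rho$-EF$k$ for $i$ towards $j$, witnessed by $T\subseteq A_j$ with $|T|\le k$, then $v_i(A_i)\ge \hat v_i(A_i)\ge \rho\,\hat v_i(A_j\setminus T)\ge \tfrac{\rho}{\sqrt{\alpha_i}} v_i(A_j\setminus T)$, so the \emph{same} removal set witnesses $\tfrac{\rho}{\sqrt{\alpha_i}}$-EF$k$ on the original (specializing $T$ recovers \ef, \efo, and \eft); \prop{} transfers identically with $\hat v_i(S_t)/n$ in place of $\hat v_i(A_j\setminus T)$. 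The one place needing a separate argument is \mms, since it compares $v_i(A_i)$ against the maximin share $\mu_i$, itself a max–min over partitions. Here I would apply the proxy to the partition $B^{\ast}$ that is optimal for the \emph{original} instance: $\hat v_i(B^{\ast}_j)\ge \tfrac{1}{\sqrt{\alpha_i}} v_i(B^{\ast}_j)\ge \tfrac{1}{\sqrt{\alpha_i}}\mu_i$ for every part, whence the proxy maximin share obeys $\hat\mu_i \ge \tfrac{1}{\sqrt{\alpha_i}}\mu_i$; combined with $v_i(A_i)\ge \hat v_i(A_i)\ge \rho\,\hat\mu_i$ this gives $v_i(A_i)\ge \tfrac{\rho}{\sqrt{\alpha_i}}\mu_i$. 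All inequalities refer to the set of goods present at a fixed step, so the temporal versions transfer step by step.

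The hard part, and the point I would emphasize, is obtaining $\sqrt{\alpha_i}$ rather than $\alpha_i$. The seemingly natural ``round up to the maximum or down to the minimum'' rounding to $\{1,\alpha_i\}$ is two-sided and provably loses a full factor $\alpha_i$: a bundle of goods each worth just under $\sqrt{\alpha_i}$ is rounded down to $1$ while a single good of value $\sqrt{\alpha_i}$ is rounded up to $\alpha_i$, so the proxy can declare envy-freeness while the true envy ratio is $\approx 1/\alpha_i$. Avoiding this forces the proxy to be a \emph{uniformly dominated} approximation of $v_i$, which is precisely why the reduction rounds to $\{1,\sqrt{\alpha_i}\}$; once one-sidedness is secured, everything else is routine bookkeeping.
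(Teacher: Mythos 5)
Your proposal is correct and is essentially the paper's own proof: the paper uses the threshold function $\hat v_i(g)=\alpha_i$ if $v_i(g)>\sqrt{\alpha_i}$ and $\hat v_i(g)=\sqrt{\alpha_i}$ otherwise, which is exactly your proxy scaled up by $\sqrt{\alpha_i}$ (over-estimating instead of under-estimating $v_i$), and since these fairness notions are scale-invariant the two proxy instances are equivalent, with the pointwise sandwich $\hat v_i(S)/\sqrt{\alpha_i}\le v_i(S)\le \hat v_i(S)$ playing the role of your bound. Your \mms{} step also mirrors the paper's Claim on maximin shares---both evaluate the proxy on the partition that is optimal for the original valuations---with the $\sqrt{\alpha_i}$ factor simply absorbed at the other end of the chain of inequalities.
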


\begin{proof}
The main idea is to use auxiliary valuation functions that aim to approximate the original valuation functions while taking only two values. Given the personalized interval-restricted valuation function $v_i$ of an agent $i$, such that $v_i(g)\in [1, \alpha_i]$, for all $g\in M$, we define the personalized $2$-value \emph{threshold}  function $\hat{v}_i$ as follows:
\[\hat{v}_{i}(g)  = \begin{cases} \alpha_i \,, & \text{if }v_i(g) > \sqrt{\alpha_i\,} \\ \sqrt{\alpha_i\,}\,, & \text{otherwise} \end{cases}\]
for any $g\in M$. It is not hard to see how $v_i$ and $\hat{v}_{i}$ are related.

\begin{claim}\label{claim:reduction}
For any set of goods $S\subseteq M$ and any agent $i\in N$, it holds that $\hat{v}_{i}(S) / \sqrt{\alpha_i\,} \le v_i(S) \le \hat{v}_{i}(S)$.
\end{claim}
\begin{proof}[Proof of Claim \ref{claim:reduction}]
\renewcommand\qedsymbol{{\scriptsize \textbf{Cl.~\ref{claim:reduction}} $\boxdot$}}
For any $g\in M$, $v_i(g)$ is rounded up in order to obtain $\hat{v}_{i}(g)$, so it is straightforward that $v_i(g) \le \hat{v}_{i}(g)$. On the other hand, $v_i(g)$ is always rounded up by a factor that is at most $\sqrt{\alpha_i\,}$, so $\sqrt{\alpha_i\,} {v}_{i}(g) \ge \hat{v}_{i}(g)$. Since both functions are additive, these inequalities extend to any set of goods.
\end{proof}

We first argue about \ef, \efo, and \eft. Suppose that at the end of some time step $t$
the allocation $(A_1^t, \ldots, A_n^t)$ is $\rho$-\ef$\!k$ (where \ef$\!0$ is just \ef) with respect to the threshold functions $\hat{v}_1, \ldots, \hat{v}_n$. Then, for any $i, j \in N$, we have
\[v_i(A^{t}_i)   \ge \frac{1}{\sqrt{\alpha_i\,}}\, \hat{v}_i(A^{t}_i) \ge  \frac{1}{\sqrt{\alpha_i\,}}\, \rho  \min_{S:|S|\le k} \hat{v}_i(A^{t}_j \setminus S)
 \ge \frac{\rho}{\sqrt{\alpha_i\,}}\, \min_{S:|S|\le k} v_i(A^{t}_j \setminus S)\,, \]
where the first and the third inequalities follow from Claim \ref{claim:reduction}. Thus, $(A_1^t, \ldots, A_n^t)$ is ${\rho}/\!{\sqrt{\alpha_i}}$-\ef$\!k$ with respect to the original functions ${v}_1, \ldots, {v}_n$.

Next, suppose that at the end of  time step $t$ the allocation $(A_1^t, \ldots, A_n^t)$ is $\rho$-\prop with respect to $\hat{v}_1, \ldots, \hat{v}_n$.
Then, similarly to the above, for any $i\in N$, we have
\[v_i(A^{t}_i)   \ge \frac{1}{\sqrt{\alpha_i\,}}\, \hat{v}_i(A^{t}_i) \ge  \frac{1}{\sqrt{\alpha_i\,}} \frac{\rho}{n} \, \hat{v}_i\Big({\textstyle \bigcup\limits_{j=1}^n} A^{t}_j \Big)
 \ge \frac{\rho}{n\sqrt{\alpha_i\,}}\, \hat{v}_i\Big({\textstyle \bigcup\limits_{j=1}^n} A^{t}_j \Big)\,, \]
where again the first and the third inequalities follow from Claim \ref{claim:reduction}. Thus, $(A_1^t, \ldots, A_n^t)$ is ${\rho}/\!{\sqrt{\alpha_i}}$-\prop with respect to the original functions ${v}_1, \ldots, {v}_n$.

We last argue about \mms. Although the idea is the same, now it is not straightforward to get the last inequality by Claim \ref{claim:reduction} in the chain of inequalities 
needed. Instead, we are going to relate the maximin shares with respect to the original and to the threshold valuation functions. For a set of goods $S\subseteq M$, let $\bmu_i^n(S)$ and $\hat{\bmu}_i^n(S)$ be the maximin shares of agent $i$ with respect to $v_i$ and to $\hat{v}_i$, respectively. 

\begin{claim}\label{claim:mms_comparison}
For any set of goods $S\subseteq M$ and any agent $i\in N$, it holds that $\bmu_i^n(S) \le \hat{\bmu}_i^n(S)$.
\end{claim}
\begin{proof}[Proof of Claim \ref{claim:mms_comparison}]
\renewcommand\qedsymbol{{\scriptsize \textbf{Cl.~\ref{claim:mms_comparison}} $\boxdot$}}
Suppose $\mathcal{T} = (T_1, \dots, T_n)$ is a maximin share defining partition of $S$ for agent $i$ with respect to $v_i$, i.e., $\bmu_i^n(S) = \min_{T_j\in \mathcal{T}} v_i(T_j)$. Now it is easy to relate the two maximin shares:
\[\hat{\bmu}_i^n(S) \ge \min_{T_j\in \mathcal{T}} \hat{v}_i(T_j) \ge \min_{T_j\in \mathcal{T}} v_i(T_j) = \bmu_i^n(S)\,,\]
where the first inequality follows by the definition of $\hat{\bmu}_i^n(S)$ (that takes the maximum over all such partitions) and the second inequality follows by Claim \ref{claim:reduction}.
\end{proof}

Suppose that at the end of some time step $t$
the allocation $(A_1^t, \ldots, A_n^t)$ is $\rho$-\mms with respect to the threshold functions $\hat{v}_1, \ldots, \hat{v}_n$. 
Then, for any $i\in N$, we have
\[v_i(A^{t}_i)   \ge \frac{1}{\sqrt{\alpha_i\,}}\, \hat{v}_i(A^{t}_i) \ge  \frac{1}{\sqrt{\alpha_i\,}}\, \rho \, \hat{\bmu}^n_i\Big({\textstyle \bigcup\limits_{j=1}^n} A^{t}_j \Big)
 \ge \frac{\rho}{\sqrt{\alpha_i\,}}\, {\bmu}^n_i\Big({\textstyle \bigcup\limits_{j=1}^n} A^{t}_j \Big)\,, \]
where, as usual the first and third inequalities follow from Claim \ref{claim:reduction}. Thus, $(A_1^t, \ldots, A_n^t)$ is ${\rho}/\!{\sqrt{\alpha_i}}$-\mms with respect to the original functions ${v}_1, \ldots, {v}_n$.
\end{proof}

For any personalized interval-restricted instance let $\alpha_* := \max_{i\in N}\sqrt{\alpha_i\,}$. Then Theorem \ref{thm:reduction}, combined with Corollary \ref{cor:type1} or Theorem \ref{thm:foresight_n-b}, directly implies the following corollaries.

\begin{corollary}\label{cor:interval-main}
For any personalized interval-restricted instance, we can construct a ${1}/{\sqrt{a_*}(2n-1)}$-temporal-\mms allocation. Moreover, for any  agent $i$ who sees at least $n$ high-valued goods with respect to $\hat{v}_i$ of Theorem \ref{thm:reduction}, this guarantee eventually improves to $\Omega(1/\sqrt{a_i\,})$.
\end{corollary}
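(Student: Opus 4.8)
The plan is to combine Theorem~\ref{thm:reduction} with Corollary~\ref{cor:type1} essentially mechanically, so the proof will be short. Given any personalized interval-restricted instance with functions $v_i$ (where $v_i(g)\in[1,\alpha_i]$ for all $g\in M$), I would first invoke the reduction of Theorem~\ref{thm:reduction} to pass to the personalized $2$-value proxy instance defined by the threshold functions $\hat{v}_i$, where $\hat{v}_i(g)\in\{\sqrt{\alpha_i},\alpha_i\}$. The first thing to check is that every agent in this proxy instance is of type~1: since Definition~\ref{def:interval} guarantees $\alpha_i>1$, the two proxy values satisfy $\alpha_i>\sqrt{\alpha_i}>0$, and after the harmless scaling that normalizes the low value to $1$ (which, as noted in Section~\ref{sec:prelims}, does not affect any of our fairness notions) we get $\alpha_i'=\sqrt{\alpha_i}>\beta_i'=1>0$. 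Hence Corollary~\ref{cor:type1} applies verbatim to the proxy instance.

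Next I would run Algorithm~\ref{alg:main_alg} on the proxy instance. By Corollary~\ref{cor:type1}, the resulting allocation satisfies $\hat{v}_i(A_i^t)\ge \hat{\bmu}_i(t)/(2n-1)$ at every time step $t\ge 0$, for each agent~$i$, where here the maximin share is taken with respect to $\hat{v}_i$. Applying the \mms{} part of Theorem~\ref{thm:reduction} then translates this guarantee back to the original functions at the multiplicative cost of a factor $\sqrt{\alpha_i}$, yielding
\[
v_i(A_i^t)\ \ge\ \frac{1}{\sqrt{\alpha_i\,}\,(2n-1)}\,\bmu_i(t)
\]
for every agent~$i$ and every $t\ge 0$. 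Since $\sqrt{\alpha_i}\le \alpha_* = \max_{i\in N}\sqrt{\alpha_i\,}$ by definition, each per-agent bound is at least $\tfrac{1}{\alpha_*(2n-1)}\,\bmu_i(t)$, and taking this worst case over all agents gives that the allocation is $\tfrac{1}{\alpha_*(2n-1)}$-temporal-\mms, as claimed.

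For the ``Moreover'' part, I would reuse the second half of Corollary~\ref{cor:type1}, which states that once agent~$i$ has seen $n$ high-valued goods (with respect to $\hat{v}_i$), i.e.\ from $t_{i0}=\min\{t\mid n_h(i,t)\ge n\}$ onward, the proxy guarantee improves from $1/(2n-1)$ to $\Omega(1)$. The only point to be careful about is that ``high-valued good'' for the proxy means precisely $v_i(g)>\sqrt{\alpha_i}$, which is exactly the rounding-up threshold used to define $\hat{v}_i$ in Theorem~\ref{thm:reduction}; this is why the statement phrases the condition in terms of high-valued goods with respect to $\hat{v}_i$. Feeding this improved $\Omega(1)$ factor through the same translation of Theorem~\ref{thm:reduction} (again losing the factor $\sqrt{\alpha_i}$) yields the $\Omega(1/\sqrt{\alpha_i\,})$ guarantee for that agent from $t_{i0}$ onward. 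I do not anticipate a genuine obstacle here; the only things that require a line of justification rather than a citation are the type-1 classification of the proxy agents and the identification of the proxy's high-valued goods with the interval-instance goods above the threshold $\sqrt{\alpha_i}$.
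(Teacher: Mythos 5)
Your proposal is correct and matches the paper exactly: the paper derives this corollary directly by combining Theorem~\ref{thm:reduction} with Corollary~\ref{cor:type1}, which is precisely your argument, and your two explicit verifications (that the proxy agents are of type~1 after normalization, and that the proxy's high-valued goods are the goods with $v_i(g)>\sqrt{\alpha_i\,}$) are the right details to spell out. Note also that you correctly resolve the paper's notational slip: since $\alpha_*=\max_{i\in N}\sqrt{\alpha_i\,}$ already carries the square root, the intended bound is $\frac{1}{\alpha_*(2n-1)}$, as in your final inequality.
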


\begin{corollary}\label{cor:interval-foresight}
For any personalized interval-restricted instance augmented with foresight of length $n-1$, we can construct a ${1}/{\sqrt{a_*\,}}$-temporal-\eft allocation that is also ${1}/{\sqrt{a_*\,}}$-\efo (and, thus, ${1}/{n\sqrt{a_*\,}}$-\mms) for every time step $t = kn, k\in \mathbb{Z}_{\ge 0}$. If at any step $t_{0}$ the allocation fails to be ${1}/{2\sqrt{a_*\,}}$-\efo, then it remains ${1}/{2\sqrt{a_*\,}}$-\efo  at the end of every time step $t\ge \lceil t_0 / n \rceil n$. 
\end{corollary}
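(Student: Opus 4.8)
The plan is to combine Theorem~\ref{thm:foresight_n-b} with the reduction of Theorem~\ref{thm:reduction} in an essentially black-box manner. Given a personalized interval-restricted instance (augmented with foresight of length $n-1$), I would first pass to the personalized $2$-value \emph{threshold} instance defined by the functions $\hat{v}_i$ of Theorem~\ref{thm:reduction}, where $\hat{v}_i(g) \in \{\alpha_i, \sqrt{\alpha_i\,}\}$. Since $\alpha_i > 1$, we have $\alpha_i > \sqrt{\alpha_i\,} > 0$, so each $\hat{v}_i$ is a genuine type-$1$ personalized $2$-value valuation, and the foresight carries over verbatim (the previewed goods have computable $\hat{v}_i$-values). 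Hence Algorithm~\ref{alg:foresight_n} applies to $\hat{v}_1, \ldots, \hat{v}_n$ and, by Theorem~\ref{thm:foresight_n-b}, produces an allocation that is temporal-\eft with respect to $\hat{v}$, is \efo (hence $1/n$-\mms) with respect to $\hat{v}$ at every step $t = kn$, and enjoys the stated $1/2$-\efo persistence property with respect to $\hat{v}$.

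Next I would translate each of these guarantees back to the original functions $v_i$ via Theorem~\ref{thm:reduction}. For every fairness notion in the list (\eft, \efo, \mms), a $\rho$-guarantee for the proxy yields a $\rho/\sqrt{\alpha_i\,}$-guarantee per agent $i$ in the original instance, and since $\sqrt{\alpha_i\,} \le \alpha_* = \sqrt{a_*\,}$ for all $i$, this is at least a $\rho/\sqrt{a_*\,}$-guarantee overall. Applying this with $\rho = 1$ to the temporal-\eft and to the every-$n$-steps \efo guarantees immediately gives the $1/\sqrt{a_*\,}$-temporal-\eft and the $1/\sqrt{a_*\,}$-\efo-at-$t=kn$ conclusions; applying it to the $1/n$-\mms guarantee (which Theorem~\ref{thm:foresight_n-b} already extracts from \efo) gives the $1/(n\sqrt{a_*\,})$-\mms claim.

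The only part that is not a direct one-directional application of the reduction is the persistence clause, because it is phrased conditionally (``if the allocation fails to be $1/(2\sqrt{a_*\,})$-\efo, then \dots''). Here I would argue by contrapositive. Using the two inequalities of Claim~\ref{claim:reduction}, namely $v_i(A_i^t) \ge \hat{v}_i(A_i^t)/\sqrt{\alpha_i\,}$ and $v_i(A_j^t \setminus S) \le \hat{v}_i(A_j^t \setminus S)$, one checks that whenever the proxy allocation is $1/2$-\efo for a pair $(i,j)$ at some step, the original allocation is $1/(2\alpha_*)$-\efo for that same pair. Equivalently, if the original allocation fails to be $1/(2\sqrt{a_*\,})$-\efo at a step $t_0$, then the proxy allocation fails to be $1/2$-\efo at $t_0$. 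Theorem~\ref{thm:foresight_n-b} then guarantees that the proxy allocation is $1/2$-\efo at the end of every $t \ge \lceil t_0/n\rceil n$, and one final forward application of the reduction turns this into $1/(2\sqrt{a_*\,})$-\efo of the original allocation for all such $t$. I expect this contrapositive step to be the only genuinely delicate point, since Theorem~\ref{thm:reduction} is stated as a transfer of positive guarantees, so one must take care that the per-agent factor $\sqrt{\alpha_i\,}$ is correctly bounded by the uniform $\alpha_* = \sqrt{a_*\,}$ when passing a \emph{failure} from the original instance to the proxy and back.
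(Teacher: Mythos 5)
Your proposal is correct and takes essentially the same route as the paper, whose entire proof is a one-line remark that Theorem \ref{thm:reduction} combined with Theorem \ref{thm:foresight_n-b} directly implies the corollary—i.e., exactly your black-box plan of running Algorithm \ref{alg:foresight_n} on the threshold valuations $\hat{v}_i$ and transferring each guarantee back at a per-agent cost of $\sqrt{\alpha_i\,}\le\sqrt{a_*\,}$. Your contrapositive handling of the conditional persistence clause (an original failure of ${1}/{2\sqrt{a_*\,}}$-\efo forces a proxy failure of $1/2$-\efo, after which the proxy persistence of Theorem \ref{thm:foresight_n-b} is translated forward again) is the one detail the paper leaves implicit, and you resolve it correctly.
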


%%%%%%%%%%%%%%%%%%%%%%%%%%%%%%%%%%%%%%%%%
%%%%%%%%%%%%%%%%%%%%%%%%%%%%%%%%%%%%%%%%%
\section{Discussion and Open Questions}
\label{sec:discussion}
%%%%%%%%%%%%%%%%%%%%%%%%%%%%%%%%%%%%%%%%%
%%%%%%%%%%%%%%%%%%%%%%%%%%%%%%%%%%%%%%%%%
In this paper we study a prior-free online fair division setting where the items are indivisible goods and we focus on the design of deterministic algorithms with solid worst-case fairness guarantees that hold frequently, if not at every time step. By restricting the input space to personalized $2$-value (or interval-restricted) instances we are able to obtain nontrivial guarantees that are not possible for the general additive case. We see this as a main take-home message of this work; despite the existence of strong impossibility results, there are meaningful restrictions which can lead to technically interesting findings, broadening our understanding of online fair division. So, the most natural direction for future work is to identify such restrictions and push the boundaries of positive results accordingly.

Another promising direction is to fully explore the power of knowing the future. Our work does not answer whether it is possible to efficiently utilize  foresight which is \emph{sublinear}  in $n$ for personalized $2$-value instances. In fact, Algorithms \ref{alg:foresight_2} and \ref{alg:foresight_n} only use the (linear) information they have every $n$ steps and ignore it otherwise. We suspect that it is possible to design algorithms that build temporal-\efo allocations with linear foresight in our setting, although there are simple examples suggesting that this cannot be done via balanced allocations (like the ones our algorithms construct).

Finally, given that personalized interval-restricted instances are already very expressive, it would be particularly interesting to get tight results directly for those. Although  the dependency on $\alpha_*$ cannot be completely removed (since for large enough $\alpha_*$ the impossibility results of \citet{HePPZ19} and \citet{ZhouBW23} can be replicated), it is likely that the guarantees of Corollaries \ref{cor:interval-main} and \ref{cor:interval-foresight} are not tight.

%%%%%%%%%%%%%%%%%%%%%%%%%%%%%%%%%%%%%%%%%
%%%%%%%%%%%%%%%%%%%%%%%%%%%%%%%%%%%%%%%%%
\section*{Acknowledgments}
%%%%%%%%%%%%%%%%%%%%%%%%%%%%%%%%%%%%%%%%%
%%%%%%%%%%%%%%%%%%%%%%%%%%%%%%%%%%%%%%%%%
This work was partially supported by the project MIS 5154714 of the National Recovery and Resilience Plan Greece $2.0$ funded by the European Union under the NextGenerationEU Program.

This work was partially supported by the framework of the H.F.R.I call “Basic research Financing (Horizontal support of all Sciences)” under the National Recovery and Resilience Plan ``Greece 2.0'' funded by the European Union – NextGenerationEU (H.F.R.I. Project Number: 15877).

This work was partially supported by the NWO Veni project No.~VI.Veni.192.153.

% \newpage

\bibliographystyle{plainnat}
\bibliography{fairdiv_refs}

\end{document}